\newtheorem{theorem}{Theorem}
\newtheorem{corollary}{Corollary}
\newtheorem{lemma}{Lemma}
\newtheorem{prop}{Proposition}
\newtheorem{property}{Property}
\newcommand{\Cmsc}{\mathscr{C}}
\newcommand{\Emsc}{\mathscr{E}}
\newcommand{\Lmsc}{\mathscr{L}}
\newcommand{\Nmsc}{\mathscr{N}}
\newcommand{\Pmsc}{\mathscr{P}}
\newcommand{\Umsc}{\mathscr{U}}
\newcommand{\Vmsc}{\mathscr{V}}
\newcommand{\Wmsc}{\mathscr{W}}
\newcommand{\Xmsc}{\mathscr{X}}
\newcommand{\Ocal}{\mathcal{O}}
\newcommand{\Rmbb}{\mathbb{R}}
\newcommand{\Embb}{\mathbb{E}}
\newcommand{\Fmbb}{\mathbb{F}}
\newcommand{\Zmbb}{\mathbb{Z}}
\newcommand{\xbf}{\mathbf{x}}
\newcommand{\Xbf}{\mathbf{X}}
\newcommand{\wbf}{\mathbf{w}}
\newcommand{\Wbf}{\mathbf{W}}
\newcommand{\Zbf}{\mathbf{Z}}
\newcommand{\tA}{\tilde{A}}
\newcommand{\tC}{\tilde{C}}
\newcommand{\tD}{\tilde{D}}
\newcommand{\tF}{\tilde{F}}
\newcommand{\tX}{\tilde{X}}
\newcommand{\tY}{\tilde{Y}}
\newcommand{\tZ}{\tilde{Z}}
\newcommand{\tXbf}{\tilde{\Xbf}}
\newcommand{\tZbf}{\tilde{\Zbf}}
\newcommand{\tWbf}{\tilde{\Wbf}}
\newcommand{\beq}{\begin{equation}}
\newcommand{\eeq}{\end{equation}}
\newcommand{\eps}{\epsilon}
\newcommand{\bi}{\begin{itemize}}
\newcommand{\ei}{\end{itemize}}
\newcommand{\rank}{\text{rank}}
\newcommand{\head}{\text{head}}
\newcommand{\tail}{\text{tail}}
\newcommand{\inn}{\text{in}}
\newcommand{\out}{\text{out}}
\newcommand{\spn}{\text{span}}
\title{Polytope Codes Against Adversaries in Networks}
\author{Oliver Kosut, Lang Tong, and David Tse%
\thanks{O. Kosut is with the Massachusetts Institute of Technology, Cambridge, MA {\tt okosut@mit.edu}}
\thanks{L. Tong is with Cornell University, Ithaca, NY {\tt ltong@ece.cornell.edu}}
\thanks{D. Tse is with the University of California, Berkeley, CA {\tt dtse@eecs.berkeley.edu}}
\thanks{This work is supported in part by
the National Science Foundation under Award CCF-0635070
and the Army Research Office under Grant ARO-W911NF-06-1-0346.}}
\begin{document}

\maketitle

\begin{abstract}
Network coding is studied when an adversary controls a subset of nodes in the network of limited quantity but unknown location. This problem is shown to be more difficult than when the adversary controls a given number of edges in the network, in that linear codes are insufficient. To solve the node problem, the class of Polytope Codes is introduced. Polytope Codes are constant composition codes operating over bounded polytopes in integer vector fields. The polytope structure creates additional complexity, but it induces properties on marginal distributions of code vectors so that validities of codewords can be checked by internal nodes of the network. It is shown that Polytope Codes achieve a cut-set bound for a class of planar networks. It is also shown that this cut-set bound is not always tight, and a tighter bound is given for an example network.
\end{abstract}

\section{Introduction}

Network coding allows routers in a network to execute possibly complex codes in addition to routing; it has been shown that allowing them to do so can increase communication rate \cite{AhlswedeEtal:00IT}. However, taking advantage of this coding at internal nodes means that the sources and destinations must rely on other nodes---nodes they may not have complete control over---to reliably perform certain functions. If these internal nodes do not behave correctly, or, worse, maliciously attempt to subvert the goals of the users---constituting a so-called Byzantine attack \cite{LamportShostakPease:82ACM,Dolev:82}---standard network coding techniques fail.

Suppose an omniscient adversary controls an unknown portion of the network, and may arbitrarily corrupt the transmissions on certain communication links. We wish to determine how the size of the adversarial part of the network influences reliable communication rates. If the adversary may control any $z$ unit-capacity edges in the network, then it has been shown that, for the multicast problem (one source and many destinations), the capacity reduces by $2z$ compared to the non-Byzantine problem \cite{CaiYeung1:06,CaiYeung2:06}. To achieve this rate, only linear network coding is needed. Furthermore, if there is just one source and one destination, coding is needed only at the source node; internal nodes need only do routing.

The above model assumes that any set of $z$ edges may be adversarial, which may not accurately reflect all types of attacks. This model is accurate if the attacker is able to cut transmission lines and change messages that are sent along them. However, if instead the attacker is able to seize a router in a network, it will control the values on all links connected to that router. Depending on which router is attack, the number of the links controlled by the adversary may vary. In an effort to more accurately model this situation, in this paper we assume that the adversary may control any set of $s$ nodes.

Defeating node-based attacks is fundamentally different from defeating edge-based attacks. First, the edge problem does not immediately solve the node problem. Consider, for example, the Cockroach network, shown in Fig.~\ref{fig:cock}. Suppose we wish to handle any single adversarial node in the network (i.e. $s=1$). One simple approach would be to apply to edge result from \cite{CaiYeung1:06,CaiYeung2:06}: no node controls more than two unit-capacity edges, so we can defeat the node-based attack by using a code that can handle an attack on any two edges. However, note that the achievable rate for this network without an adversary is 4, so subtracting twice the number of bad edges leaves us with an achievable rate of 0. As we will show, the actual capacity of the Cockroach network with one traitor node is 2. In effect, relaxing the node attack problem to the edge attack problem is too pessimistic, and we can do better if we treat the node problem differently.

Node-based attacks and edge-based attacks differ in an even more fundamental way. When the adversary may sieze control of any set of $z$ unit-capacity edges, it is clear that it should always take over edges on the minimum cut of the network. However, if the adversary may sieze any $s$ nodes, it is not so obvious: it may face a choice between a node directly on the min-cut, but with few output edges, and a node away from the min-cut, but with many output edges. For example, in the Cockroach network, node 4 has only one output edge, but it is on the min-cut (which is between nodes $S,1,2,3,4,5$ and $D$); node 1 has two output edges, so apparently more power, but it is one step removed from the min-cut, and therefore its power may be diminished. This uncertainty about where a network is most vulnerable seems to make the problem hard. Indeed, we find that many standard network coding techniques fail to achieve capacity, so we resort to nonlinear codes, and in particular Polytope Codes, to be described.

\begin{figure}
\centerline{\begin{psfrags}\footnotesize
\psfrag{s}[c]{$S$}
\psfrag{d}[c]{$D$}
\psfrag{n1}[c]{$1$}
\psfrag{n2}[c]{$2$}
\psfrag{n3}[c]{$3$}
\psfrag{n4}[c]{$4$}
\psfrag{n5}[c]{$5$}
\psfrag{x1d}[c]{}
\psfrag{x14}[c]{}
\psfrag{x24}[c]{}
\psfrag{x25}[c]{}
\psfrag{x35}[c]{}
\psfrag{x3d}[c]{}
\psfrag{x4d}[c]{}
\psfrag{x5d}[c]{}
\includegraphics[scale=.6]{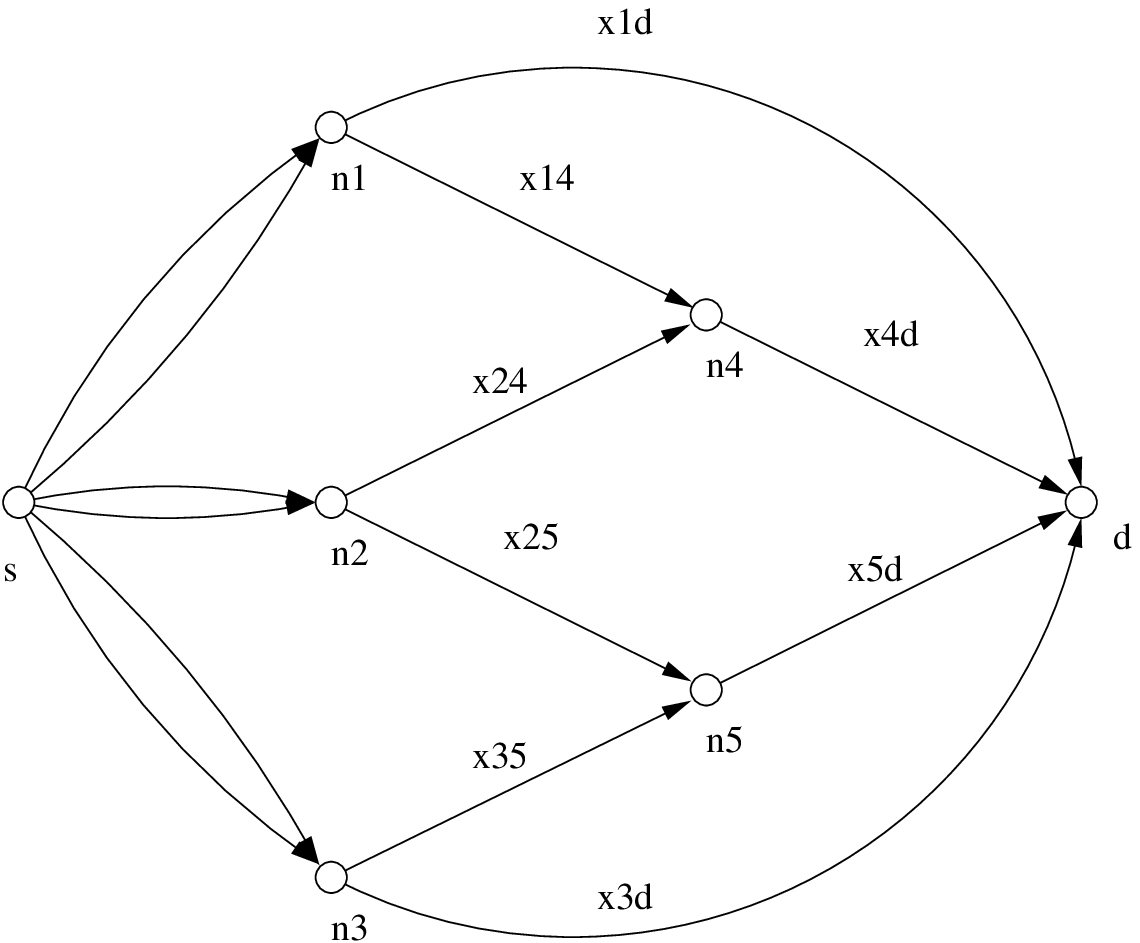}
\end{psfrags}}
\caption{The Cockroach Network. All edges have capacity 1. The capacity is 2, but no linear code can achieve a rate higher than 4/3. A proof of the linear capacity is given in Appendix~\ref{appendix:linear}. A capacity-achieving linear code supplemented by nonlinear comparisons is given in Sec.~\ref{sec:linearplus}, and a capacity-achieving Polytope Code is given in Sec.~\ref{sec:cockroach}.}
\label{fig:cock}
\end{figure}

\subsection{Related Work}

Byzantine attacks on network coding were first studied in \cite{HoEtal:04ISIT}, which looked at detecting adversaries in a random linear coding environment. The $z$ unit-capacity edge adversary problem was solved in \cite{CaiYeung1:06,CaiYeung2:06}. In \cite{JaggiEtal:07INFOCOM}, the same problem is studied, providing distributed and low complexity coding algorithms to achieve the same asymptotically optimal rates. In addition, \cite{JaggiEtal:07INFOCOM} looks at two adversary models slightly different from the omniscient one considered in \cite{CaiYeung1:06,CaiYeung2:06} and in this paper. They show that higher rates can be achieved under these alternate models. In \cite{KoetterKschishang:08IT}, a more general view of the adversary problem is given, whereby the network itself is abstracted into an arbitrary linear transformation.

Network coding under Byzantine attacks that are more general than the simple edge-based model was first studied in \cite{KosutTongTse:09Allerton}, a conference version of this work, and \cite{KimEtal:09Allerton}. The latter looked at the problem of edge-based attacks when the edges may have unequal capacities. This problem was found to have similar complications to the node-based problem. In particular, both \cite{KosutTongTse:09Allerton} and \cite{KimEtal:09Allerton} found that linear coding is suboptimal, and that simple nonlinear operations used to augment a linear code can improve throughput. Indeed, \cite{KimEtal:09Allerton} used a network almost identical to what we call the Cockroach network to demonstrate that nonlinear operations are necessary for the unequal edge problem. We show in Sec.~\ref{sec:loose} that the unequal-capacity edge problem is subsumed by the node problem.

These works seek to correct for the adversarial errors at the destination. An alternative strategy known as the watchdog, studied for wireless network coding in \cite{KimEtal:09ISIT}, is for nodes to police downstream nodes by overhearing their messages to detect modifications. In \cite{LiangAgarwalVaidya:10INFOCOM}, a similar approach is taken, and they found that nonlinear operations similar to ours can be helpful, in which comparisons are made to detect errors.

\subsection{Main Results}

Many achievability results in network coding have been proved using linear codes over a finite field. In this paper we demonstrate that linear codes are insufficient for this problem. Moreover, we develop a class of codes called Polytope Codes, originally introduced in \cite{KosutTongTse:09Allerton} under the less descriptive term ``bounded-linear codes''. Polytope codes are used to prove that a cut-set bound, stated and proved in Sec.~\ref{sec:cutset}, is tight for a certain class of networks. Polytope Codes differ from linear codes in three ways:
\begin{enumerate}
\item \emph{Comparisons:} A significant tool we use to defeat the adversary is that internal nodes in the network perform comparisons: they check whether their received data could have occurred if all nodes had been honest. If not, then there must be an upstream traitor that altered one of the received values, in which case this traitor can be localized. The result of the comparison, a bit signifying whether or not it succeeded, can be transmitted downstream through the network. The destination receives these comparison bits and uses them to determine who may be the traitors, and how to decode. These comparison operations are nonlinear, and, as we will demonstrate in Sec.~\ref{sec:linearplus}, incorporating them into a standard finite-field linear code can increase achieved rate. However, even standard linear codes supplemented by these nonlinear comparison operations appears to be insufficient to achieve capacity in general. Polytope Codes also incorporate comparisons, but of a more sophisticated variety.
\item \emph{Constant Composition Codebooks:} Unlike usual linear network codes, Polytope Codes are essentially constant composition codes. In particular, each Polytope Code is governed by a joint probability distribution on a set of random variables, one for each edge in the network. The codebook is composed of the set of all sequences with joint type exactly equal to this distribution. The advantage of this method of code construction is that an internal node knows exactly what joint type to expect of its received sequences, because it knows the original distribution. In a Polytope Code, comparisons performed inside the network consist of checking whether the observed joint type matches the expected distribution. If it does not, then the adversary must have influenced one of the received sequences, so it can be localized.
\item \emph{Distributions over Polytopes:} The final difference between linear codes and Polytope Codes---and the one for which the latter are named---comes from the nature of the probability distributions that, as described above, form the basis of the code. They are uniform distributions over the set of integer lattice points on polytopes in real vector fields. This choice for distribution provides two useful properties. First, the entropy vector for these distributions can be easily calculated merely from properties of the linear space in which the polytope sits. For this reason, they share characteristics with finite-field linear codes. In fact, a linear code can almost always be converted into a Polytope Code achieving the same rate. (There would be no reason to do this in practice, since Polytope Codes require much longer blocklengths.) The second useful property has to do with how the comparisons inside the network are used. These distributions are such that if enough comparisons succeed, the adversary is forced to act as an honest node and transmit correct information. We consider this to be the fundamental property of Polytope Codes. It will be elaborated in examples in Sec.~\ref{sec:caterpillar} and Sec.~\ref{sec:cockroach}, and then stated in its most general form as Theorem~\ref{thm:magic} in Sec.~\ref{sec:polytope}.
\end{enumerate}

We state in Sec.~\ref{sec:planarthm} our result that the cut-set bound can be achieved using Polytope Codes for a class of planar networks. Planarity requires that the graph can be embedded in a plane such that intersections between edges occur only at nodes. This ensures that enough opportunities for comparisons are available, allowing the code to more well defeat the adversary. The theorem is proved in Sec.~\ref{sec:planarpf}, but first we develop the theory of Polytope Codes through several examples in Sec.~\ref{sec:linearplus}--\ref{sec:cockroach}. In addition, we show in Sec.~\ref{sec:loose} that the cut-set bound is not always tight, by giving an example with a tighter bound. We conclude in Section~\ref{sec:conclusion}.

\section{Problem Formulation}\label{sec:form}

Let $(V,E)$ be an directed acyclic graph. We assume all edges are unit-capacity, and there may be more than one edge connecting the same pair of nodes. One node in $V$ is denoted $S$, the source, and one is denoted $D$, the destination. We wish to determine the maximum achievable communication rate from $S$ to $D$ when any set of $s$ nodes in $V\setminus\{S,D\}$ are \emph{traitors}; i.e. they are controlled by the adversary. Given a rate $R$ and a block-length $n$, the message $W$ is chosen at random from the set $\{1,\ldots,2^{nR}\}$. Each edge $e$ holds a value $X_e\in \{1,\ldots,2^{n}\}$.

A code is be made up of three components:
\begin{enumerate}
\item an encoding function at the source, which takes the message as input and produces values to place on all output edges,
\item a coding function at each internal node $i\in V\setminus\{S,D\}$, which takes the values on all input edges to $i$, and produces values to place on all output edges from $i$,
\item and a decoding function at the destination, which takes the values on all input edges and produces an estimate $\hat{W}$ of the message.
\end{enumerate}

Suppose $T\subseteq V\setminus\{S,D\}$ is the set of traitors, with $|T|=s$. They may subvert the coding functions at nodes $i\in T$ by placing arbitrary values on all the output edges from these nodes. Let $Z_T$ be the set of values on these edges. For a particular code, specifying the message $W$ as well as $Z_T$ determines exactly the values on all edges in the network, in addition to the destination's estimate $\hat{W}$. We say that a rate $R$ is \emph{achievable} if there exists a code operating at that rate with some block-length $n$ such that for all messages, all sets of traitors $T$, and all values of $Z_T$, $W=\hat{W}$. That is, the destination always decodes correctly no matter what the adversary does. Let the \emph{capacity} $C$ be the supremum over all achievable rates.

\section{Cut-Set Upper Bound}\label{sec:cutset}

It is shown in \cite{CaiYeung1:06,CaiYeung2:06} that, if an adversary controls $z$ unit-capacity edges, the network coding capacity reduces by $2z$. This is a special case of a more general principle: an adversary-controlled part of the network does twice as much damage in rate as it would if that part of the network were merely removed. This doubling effect is for the same reason that, in a classical error correction code, the Hamming distance between codewords must be at least twice the number of errors to be corrected; this is the Singleton bound \cite{Singleton:IT64}. We now give a cut-set upper bound for node-based adversaries in network coding that makes this explicit.

A \emph{cut} in a network is a subset of nodes $A\subset V$ containing the source but not the destination. The cut-set upper bound on network coding without adversaries is the sum of the capacities of all forward-facing edges; that is, edges $(i,j)$ with $i\in A$ and $j\notin A$. All backward edges are ignored. In the adversarial problem, backward edges are more of a concern. This is because the bound relies on messages that are sent along edges not controlled by the adversary being unaffected by those that are, which is not guaranteed in the presence of a backwards edge. We give an example of this in Appendix~\ref{appendix:cutset}. To avoid the complication, we state here a simplified cut-set bound that applies only to cuts without backward edges. This bound will be enough to find the capacity of the class of planar networks to be specified in Sec.~\ref{sec:planarthm}, but for the general problem it can be tightened. We state and prove a tighter version of the cut-set bound in Appendix~\ref{appendix:cutset}. Unlike the problem without adversaries, we see that there is no canonical notion of a cut-set bound. Some even more elaborate bounds are found in \cite{KimEtal:09Allerton,KimEtal:10ITA}. These papers study the unequal-edge problm, but the bounds can be readily applied to the node problem.

It was originally conjectured in \cite{KimEtal:09Allerton} that even the best cut-set bound is not tight in general. In Sec.~\ref{sec:loose}, we demonstrate that there can be an active upper bound on capacity fundamentally unlike a cut-set bound. The example used to demonstrate this, though it is a node adversary problem, can be easily modified to confirm the conjecture stated in \cite{KimEtal:09Allerton}.

\begin{theorem}\label{thm:cutset}
Consider a cut $A\subset V$ with $S\in A$ and $D\notin A$ and with no backward edges; that is, there is no edge $(i,j)\in E$ with $i\notin A$ and $j\in A$. If there are $s$ traitor nodes, then for any set $U\subset V\setminus\{S,D\}$ with $|U|=2s$, the following upper bound holds on the capacity of the network:
\beq C\le|\{(i,j)\in E:i\in A\setminus U,j\notin A\}|.\label{eq:cutset}\eeq
\end{theorem}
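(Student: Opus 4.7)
The plan is a two-step argument: a pigeonhole step produces two messages whose honest cut images already agree on $E_A^{\bar U}$, and then a Singleton-style construction exhibits two adversarial executions that are indistinguishable at $D$ yet carry these two different messages, contradicting zero-error decoding.

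Let $X^{(w)}_e$ denote the value on edge $e$ in the fully honest execution of the code with message $w$. Restricting to the cut edges in $E_A^{\bar U}$ gives a map $\Psi(w)=(X^{(w)}_e)_{e\in E_A^{\bar U}}$ whose range has size at most $2^{n|E_A^{\bar U}|}$. If $R>|E_A^{\bar U}|$, then $2^{nR}$ strictly exceeds this count, so the pigeonhole principle produces $w_1\neq w_2$ with $\Psi(w_1)=\Psi(w_2)$; this is the only place the rate bound is used.

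Next I partition $U=T_1\sqcup T_2$ with $|T_1|=|T_2|=s$ (possible since $|U|=2s$) and define two scenarios, each with exactly $s$ traitors. In Scenario~A the true message is $w_1$, the traitor set is $T_1$, and the adversary instructs each $v\in T_1\cap A$ to transmit $X^{(w_1)}_e$ on every outgoing edge whose head lies in $A$ and $X^{(w_2)}_e$ on every outgoing cut edge in $E_A$, while each $v\in T_1\cap \bar A$ runs the honest coding function. Scenario~B mirrors this with the roles swapped: message $w_2$, traitor set $T_2$, with each $v\in T_2\cap A$ emitting $X^{(w_2)}_e$ on internal edges and $X^{(w_1)}_e$ on cut edges, and each $v\in T_2\cap \bar A$ honest.

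The technical heart is a topological induction inside $A$ showing that in Scenario~A every honest node $v\in(A\setminus U)\cup(T_2\cap A)$ receives only $X^{(w_1)}_e$ values on its incoming edges and therefore emits $X^{(w_1)}_e$ on its outgoing edges; the symmetric statement holds in Scenario~B with $w_2$ in place of $w_1$. The base case is the source emitting its honest symbols on its outgoing edges; the no-backward-edges hypothesis ensures that every predecessor of $v$ lies in $A$, and each predecessor type is handled separately: honest predecessors in $A\setminus U$ and in $T_2\cap A$ emit $X^{(w_1)}$ by induction, while $T_1\cap A$ predecessors emit $X^{(w_1)}$ on their within-$A$ edges by the adversary's construction. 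A tabulation then shows the two scenarios produce identical cut values: on $E_A^{\bar U}$ both equal the honest cut value, and these agree by $\Psi(w_1)=\Psi(w_2)$; on cut edges out of $T_1\cap A$ both equal $X^{(w_2)}_e$ (adversary's fiat in~A, honest $w_2$-trajectory in~B); and on cut edges out of $T_2\cap A$ both equal $X^{(w_1)}_e$ by the mirror argument.

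Finally, because the cut has no backward edges and every traitor inside $\bar A$ behaves honestly in both scenarios, the destination's incoming values are a deterministic function of the cut values alone; matching cut values therefore force $D$ to output the same $\hat W$, which is impossible since $w_1\neq w_2$ must both decode correctly. The step I expect to be most delicate is the topological induction, since the adversary strategy must preserve honest-trajectory values on every internal edge of $A$ even when nodes in $T_1\cap A$ sit simultaneously downstream of honest nodes and upstream of other honest nodes; the asymmetric assignment---honest value on internal edges, swapped value on cut edges---is precisely what keeps the two invariants consistent.
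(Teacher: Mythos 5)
Your proof is correct and follows the same route as the paper's: a pigeonhole step yielding two messages $w_1,w_2$ that agree on the cut edges out of $A\setminus U$, followed by a Singleton-style two-scenario confusion between the traitor sets $T_1$ and $T_2$. Your explicit topological induction within $A$ is a careful unfolding of the step the paper dispatches in one sentence (that the absence of backward edges, together with the traitors leaving their non-cut outputs unaltered, forces every internal edge of $A$ to carry the honest $w$-trajectory value), not a different argument; the only cosmetic wrinkle is that your phrase ``head lies in $A$'' conflicts with the paper's convention $\head(i,j)=i$ and should read ``tail.''
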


\begin{figure}
\centerline{
\begin{psfrags}\footnotesize
\psfrag{S}[c]{$S$}
\psfrag{D}[c]{$D$}
\psfrag{A}[c]{$A$}
\psfrag{Ac}[c]{$A^c$}
\psfrag{E1}[c]{$E_1$}
\psfrag{E2}[c]{$E_2$}
\psfrag{T1}[c]{$T_1$}
\psfrag{T2}[c]{$T_2$}
\psfrag{Ebar}[c]{$\bar{E}$}
\psfrag{X1}[c]{$X_{E_1}(w_2)$}
\psfrag{X2}[c]{$X_{E_2}(w_1)$}
\psfrag{Xbar}[c]{$X_{\bar{E}}(w_1)=X_{\bar{E}}(w_2)$}
\includegraphics[scale=.6]{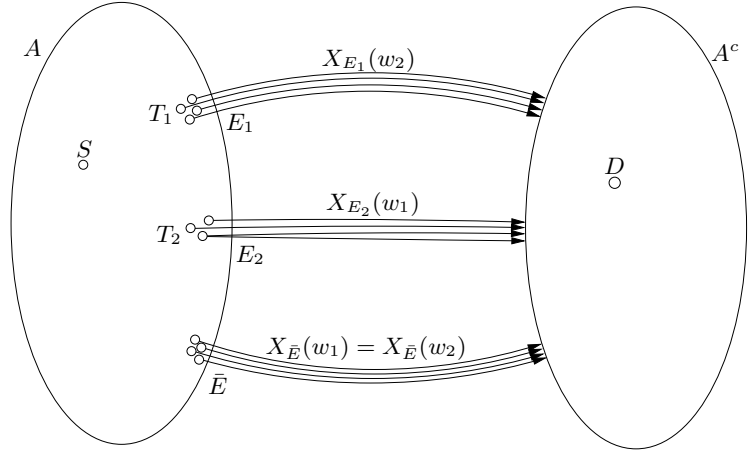}
\end{psfrags}}
\caption{Diagram of the proof of Theorem~\ref{thm:cutset}. The values on the links crossing the cut are such that it is impossible to determine whether $T_1$ or $T_2$ is the true set of traitors, and which of $w_1$ or $w_2$ is the true message.}
\label{fig:cutfig}
\end{figure}

\begin{proof}
Divide $U$ into two sets $T_1$ and $T_2$ with $|T_1|=|T_2|=s$. Let $E_1$ and $E_2$ be the sets of edges out of nodes in $T_1$ and $T_2$ respectively that cross the cut; that is, edges $(i,j)$ with $i\in A$ and $j\notin A$. Let $\bar{E}$ be the set of all edges crossing the cut not out of nodes in $T_1$ or $T_2$. Observe that the upper bound in \eqref{eq:cutset} is precisely the total capacity for all edges in $\bar{E}$. Note also that, since there are no backwards edges for the cut $A$, the values on edges in $\bar{E}$ are not influenced by the values on edges in $E_1$ or $E_2$. This setup is diagrammed in Fig.~\ref{fig:cutfig}.

Suppose \eqref{eq:cutset} does not hold. Therefore there exists a code with block-length $n$ achieving a rate $R$ higher than the right hand side of \eqref{eq:cutset}. For any set of edges $F\subseteq E$, for this code, we can define a function
\beq X_F:2^{nR}\to \prod_{e\in F}2^{n}\eeq
such that for a message $w$, assuming all nodes act honestly, the values on edges in $F$ is given by $X_F(w)$. Since $R$ is greater than the total capacity for all edges in $\bar{E}$, there exists two messages $w_1$ and $w_2$ such that $X_{\bar{E}}(w_1)=X_{\bar{E}}(w_2)$.

We demonstrate that it is possible for the adversary to confuse the message $w_1$ with $w_2$. Suppose $w_1$ were the true message, and the traitors are $T_1$. The traitors replace the values sent along edges in $E_1$ with $X_{E_1}(w_2)$. If there are edges out of nodes in $T_1$ that are not in $E_1$---i.e. they do not cross the cut---the traitors do not alter the values on these edges. Thus, the values sent along edges in $\bar{E}$ are given by $X_{\bar{E}}(w_1)$. Now suppose $w_2$ were the true message, and the traitors are $T_2$. They replace the messages going along edges in $E_2$ with $X_{E_2}(w_1)$, again leaving all other edges alone. Note that in both these cases, the values on $E_1$ are $X_{E_1}(w_2)$, the values on $E_2$ are $X_{E_2}(w_1)$, and the values on $\bar{E}$ are $X_{\bar{E}}(w_1)$. This comprises all edges crossing the cut, so the destination receives the same values under each case; therefore it cannot distinguish $w_1$ from $w_2$.
\end{proof}

We illustrate the use of Theorem~\ref{thm:cutset} on the Cockroach network, as shown in Fig.~\ref{fig:cock}, with a single adversary node. To apply the bound, we choose a cut $A$ and a set $U$ with $|U|=2s=2$. Take $A=\{S,1,2,3,4,5\}$, and $U=\{1,4\}$. Four edges cross the cut, but the only ones not out of nodes $U$ are $(3,D)$ and $(5,D)$, so we may apply Theorem~\ref{thm:cutset} to give an upper bound on capacity of 2. Alternatively, we could take $A=\{S,1,2,3\}$ and $U=\{1,2\}$, to give again an upper bound of 2. Note that there are 6 edges crossing this second cut, even though the cut-set bound is the same. It is not hard to see that 2 is the smallest upper bound given by Theorem~\ref{thm:cutset} for the capacity of the Cockroach network. In fact, rate 2 is achievable, as will be shown in Sec.~\ref{sec:linearplus} using a linear code supplemented by comparison operations, and again in Sec.~\ref{sec:cockroach} using a Polytope Code.

\section{Capacity of A Class of Planar Networks}\label{sec:planarthm}

\begin{theorem}\label{thm:planar}
Let $(V,E)$ be a network with the following properties:
\begin{enumerate}
\item It is planar.
\item No node other than the source has mare than two unit-capacity output edges.
\item No node other than the source has more output edges than input edges.
\item There is at most one traitor (i.e. $s=1$).
\end{enumerate}
The cut-set bound given by Theorem~\ref{thm:cutset} is tight for $(V,E)$.
\end{theorem}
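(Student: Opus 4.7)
The goal is to exhibit a Polytope Code that achieves the rate $R^*$ given by the cut-set bound of Theorem~\ref{thm:cutset}; the converse direction is already in hand. Let $R^*$ denote the minimum, over no-backward-edge cuts $A$ and sets $U\subset V\setminus\{S,D\}$ with $|U|=2$, of $|\{(i,j)\in E:i\in A\setminus U,\ j\notin A\}|$. I would begin by fixing a planar embedding of $(V,E)$ compatible with a topological order, which exists since the graph is acyclic. The embedding supplies, at each node, a canonical left-to-right ordering of the output edges; combined with conditions (2) and (3), this means each non-source node has at most two output edges that can be obtained as an integer-linear combination of its inputs without any loss of rate.

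Next I would construct the code. The source picks a polytope $\Pcal\subset\Rmbb^{R^*}$ and encodes a message $w$ as an integer vector $x\in\Pcal\cap\Zmbb^{R^*}$. On each source-output edge it places a fixed integer-linear projection of $x$; each internal node applies a further integer-linear combination to produce its outputs, chosen so that the joint distribution across all edges is uniform over the integer points of a bounded polytope. The codebook is then constant-composition, so for any subset of edges the joint type that should appear under honest operation is fixed in advance. Each internal node checks that the joint type of its inputs matches the prescribed marginal; if so it forwards normally, otherwise it emits a comparison bit that propagates to $D$. By the fundamental property of Polytope Codes, stated as Theorem~\ref{thm:magic} in Sec.~\ref{sec:polytope}, successful comparisons at a node effectively force any upstream adversary to transmit values consistent with an honest execution.

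For decoding, since $s=1$, the destination enumerates the candidate traitor over $V\setminus\{S,D\}$. For each candidate $i$ it discards the values from $i$'s outputs and attempts to decode from the remainder; the definition of $R^*$ together with the planarity hypothesis should ensure that for every candidate, and for every pair of candidates, there are enough honest edges across some appropriate cut to carry $R^*$ units of information and to distinguish the two hypotheses via the comparison bits. The main obstacle is closing the loop between planarity and the sufficiency of comparisons: one must show that for every configuration of a single traitor, the planar embedding and the chosen integer-linear maps produce a pattern of (successful and failed) comparison bits at $D$ that, together with the rigidity of $\Pcal$, uniquely identifies the message. I expect the heart of the argument to be a planar routing decomposition that, for any single bad node, exhibits two edge-disjoint systems of paths from $S$ to $D$ avoiding that node, together with intermediate comparison nodes whose failure bits witness the fault and let Theorem~\ref{thm:magic} pin down the true message.
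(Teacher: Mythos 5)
Your high-level architecture (constant-composition code over a polytope, type-checking comparisons at internal nodes, comparison bits forwarded to $D$, invoking Theorem~\ref{thm:magic} to pin the adversary down) matches the paper's plan, but the specific mechanism you conjecture as ``the heart of the argument'' is not the one the paper uses, and the gap it leaves is real.

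You predict a planar routing decomposition exhibiting two edge-disjoint systems of paths from $S$ to $D$ avoiding the traitor, with intermediate comparison nodes witnessing the fault. The paper's construction is different and more delicate. It first reduces (preserving planarity, capacity, and the cut-set quantities) to a network built only of $2$-to-$2$ and $2$-to-$1$ nodes. It then designates for each $2$-to-$1$ node $v$ a distinguished \emph{non-destination} symbol whose path terminates at $v$ and is never seen by $D$; these paths are routed (Lemma~\ref{lemma:labels}) so that \emph{every} node in $\Nmsc_\inn(D)$ lies on one of them, guaranteeing that every symbol reaching $D$ is compared, just before exiting, against a non-destination symbol. The planarity is used not to find edge-disjoint path systems but to establish a contiguity property of the cyclic order of $\Nmsc_\inn(D)$ around $D$ in a plane embedding (which nodes of $\Nmsc_\inn(D)$ are reachable from which), and from that to route the non-destination paths through the required sets $\Gamma_v$. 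Your disjoint-paths picture would not by itself yield these comparison opportunities at the right places, and it is not clear it can be made to work.

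Two further concrete omissions. First, the decoder does not simply enumerate a single candidate traitor and discard its outputs; it compiles the full list $\Lmsc$ of nodes consistent with all observations, decodes from $K_{D\perp\Lmsc}\Wbf$, and the correctness argument is intrinsically \emph{pairwise} (Lemma~\ref{lemma:pairwise}): for every pair $(i,j)$ one must show that if $i$ is the traitor and $j\in\Lmsc$, then $i$ cannot corrupt anything outside $\spn(K_i)\cap\spn(K_j)$. Single-candidate elimination is not enough because $|\Lmsc|\geq 2$ in general. Second, to apply Theorem~\ref{thm:magic} with only four marginal constraints (Corollary~\ref{cor:magic3}) the code must satisfy the determinant sign condition \eqref{eq:det}; the paper must build the $(M+|\Vmsc_{2,1}|,M-2)$ generator as a Vandermonde matrix with a carefully chosen ordering (Lemma~\ref{lemma:vander}) to ensure exactly one negative factor in the determinant product. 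Your proposal treats the choice of integer-linear maps as if generic choices suffice; they do not, since one needs a sign constraint, not merely nonvanishing of determinants.
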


Polytope Codes are used to prove achievability for this theorem. The complete proof is given in Sec.~\ref{sec:planarpf}, but first we develop the theory of Polytope Codes by means of several examples in Sec.~\ref{sec:linearplus}--\ref{sec:cockroach} and general properties in Sec.~\ref{sec:polytope}.

Perhaps the most interesting condition in the statement of Theorem~\ref{thm:planar} is the planarity condition. Recall that a graph is said to be \emph{embedded} in a surface (generally a two dimensional manifold) when it is drawn in this surface so that edges intersect only at nodes. A graph is \emph{planar} if it can be embedded in the plane.

\section{A Linear Code with Comparisons for the Cockroach Network}\label{sec:linearplus}

The Cockroach network satisfies the conditions of Theorem~\ref{thm:planar}. Fig~\ref{fig:cock} shows a plane embedding with both $S$ and $D$ on the exterior, and the second and third conditions are easily seen to be satisfied. Therefore, since the smallest cut-set bound given by Theorem~\ref{thm:cutset} for a single traitor node is 2, Theorem~\ref{thm:planar} claims that the capacity of the Cockroach network is 2. In this section, we present a capacity-achieving code for the Cockroach network that is composed of a linear code over a finite-field supplemented by nonlinear comparisons. This illustrates the usefulness of comparisons in defeating Byzantine attacks on network coding. Before doing so, we provide an intuitive argument that linear codes are insufficient. A more technical proof that the linear capacity is in fact 4/3 is given in Appendix~\ref{appendix:linear}.

Is it possible to construct a linear code achieving rate 2 for the Cockroach network? We know from the Singleton bound-type argument---the argument at the heart of the proof of Theorem~\ref{thm:cutset}---that, in order to defeat a single traitor node, if we take out everything controlled by two nodes, the destination must be able to decode from whatever remains. Suppose we take out nodes 2 and 3. These nodes certainly control the values on $(5,D)$ and $(3,D)$, so if we hope to achieve rate 2, the values on $(1,D)$ and $(4,D)$ must be uncorruptable by nodes 2 and 3. Edge $(1,D)$ is not a problem, but consider $(4,D)$. With a linear code, the value on this edge is a linear combination of the values on $(1,4)$ and $(2,4)$. In order to keep the value on $(4,D)$ uncorruptable by node 2, the coefficient used to construct the value on $(4,D)$ from $(2,4)$ must be zero. In other words, node 4 must ignore the value on $(2,4)$ when constructing the value it sends on $(4,D)$. If this is the case, we lose nothing by removing $(2,4)$ from the network. However, without this edge, we may apply Theorem~\ref{thm:cutset} with $A=\{S,1,2,3\}$ and $U=\{1,3\}$ to conclude that the capacity is no more than 1. Therefore no linear code can successfully achieve rate 2.

This argument does not rigorously show that the linear capacity is less than 2, because it shows only that a linear code cannot achieve exactly rate 2, but it does not bound the achievable rate with a linear code away from 2. However, it is meant to be an intuitive explanation for the limitations of linear codes for this problem, as compared with the successful nonlinear codes that we will subsequently present. The complete proof that the linear capacity is 4/3 is given in Appendix~\ref{appendix:linear}.

We now introduce a nonlinear code to achieve the capacity of 2. We work in the finite field of $p$ elements. Let the message $w$ be a $2k$-length vector split into two $k$-length vectors $x$ and $y$. We will use a block length large enough to place one of $2p^k$ values on each link. In particular, this is enough to place on a link some linear combination of $x$ and $y$, as well as one additional bit. For large enough $k$, this extra bit becomes insignificant, so we still achieve rate 2.

\begin{figure}\centerline{
\begin{psfrags}\footnotesize
\psfrag{s}[c]{$S$}
\psfrag{d}[c]{$D$}
\psfrag{n1}[c]{$1$}
\psfrag{n2}[c]{$2$}
\psfrag{n3}[c]{$3$}
\psfrag{n4}[c]{$4$}
\psfrag{n5}[c]{$5$}
\psfrag{x1d}[c]{$x$}
\psfrag{x14}[c]{$y$}
\psfrag{x24}[c]{$y$}
\psfrag{x25}[c]{$x+y$}
\psfrag{x35}[c]{$x+y$}
\psfrag{x3d}[c]{$x-y$}
\psfrag{x4d}[c]{$y$}
\psfrag{x5d}[c]{$x+y$}
\psfrag{x4d2}[c]{\tiny$(=,\ne)$}
\psfrag{x5d2}[c]{\tiny$(=,\ne)$}
\includegraphics[scale=.6]{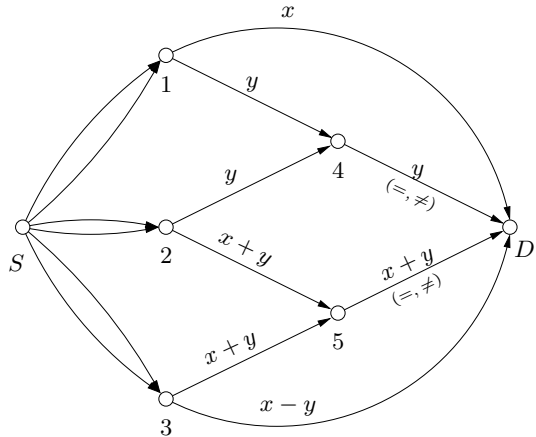}
\end{psfrags}}
\caption{A nonlinear code for the Cockroach Network achieving the capacity of 2.}
\label{fig:cocknonlinear}
\end{figure}

The scheme is shown in Figure~\ref{fig:cocknonlinear}. Node $4$ receives the vector $y$ from both nodes $1$ and $2$. It forwards one of these copies to $D$ (it does not matter which). In addition, it performs a nonlinear comparison between the two received copies of $y$, resulting in a bit comprised of one of the special symbols $=$ or $\ne$. If the two received copies of $y$ agree, it sends $=$, otherwise it sends $\ne$. The link $(4,D)$ can accommodate this, since it may have up to $2p^k$ messages placed on it. Node $5$ does the same with its two copies of the vector $x+y$.

The destination's decoding strategy depends on the two comparison bits sent from nodes $4$ and $5$, as follows:

\bi
\item If node $5$ sends $\ne$ but node $4$ sends $=$, then the traitor must be one of nodes $1$, $2$, or $4$. In any case, the vector $x-y$ received from node $3$ is certainly trustworthy. Moreover, $x+y$ can be trusted, because even if node $2$ is the traitor, its transmission must have matched whatever was sent by node $3$; if not, node $5$ would have transmitted $\ne$. Therefore the destination can trust both $x+y$ and $x-y$, from which it can decode the message $w=(x,y)$.

\item If node $5$ sends $\ne$ but node $4$ sends $=$, then we are in the symmetric situation and can reliably decode $w$ from $x$ and $y$.

\item If both nodes $4$ and $5$ send $\ne$, then the traitor must be node $2$, in which case the destination can reliable decode from $x$ and $x-y$.

\item If both messages are $=$, then the destination cannot eliminate any node as a possible traitor. However, we claim that at most one of $x,y,x+y,x-y$ can have been corrupted by the traitor. If node $1$ is the traitor, it may choose whatever it wants for $x$, and the destination would never know. However, node $1$ cannot impact the value of $y$ without inducing a $\ne$, because its transmission to node $4$ is verified against that from node $2$. Similarly, node $3$ controls $x-y$ but not $x+y$. Nodes $4$ and $5$ control only $y$ and $x+y$ respectively. Node $2$ controls nothing, because both $y$ and $x+y$ are checked against other transmissions. Therefore, if the destination can find three of $x,y,x+y,x-y$ that all agree on the message $w$, then this message must be the truth because only one of the four could be corrupted, and $w$ can be decoded from the other two. Conversely, there must be a group of three of $x,y,x+y,x+2y$ that agree, because at most one has been corrupted. Hence, the destination can always decode $w$.
\ei

Even though our general proof of Theorem~\ref{thm:planar} uses a Polytope Code, which differs significantly from this one, the manner in which the comparisons comes into play is essentially the same. The key insight is to consider the code from the perspective of the traitor. Suppose it is node 1, and consider the choice of what value for $y$ to send along edge $(1,4)$. If it sends a false value for $y$, then the comparison at node 4 will fail, which will lead the destination to consider the upper part of the network suspect, and thereby ignore all values influenced by node 1. The only other choice for node 1 is to cause the comparison at node 4 to succeed; but this requires sending the true value of $y$, which means it has no hope to corrupt the decoding process. This is the general principle that makes our codes work: force the to make a choice between acting like an honest node, or acting otherwise and thereby giving away its position.

We make one further note on this code, having to do with why the specific approach used here for the Cockroach network fails on the more general problem. Observe that in order to make an effective comparison, the values sent along edges $(1,4)$ and $(2,4)$ needed to be exactly the same. If they had been independent vectors, no comparison could be useful. This highly constrains the construction of the code, and even though it succeeds for this network, it fails for others, such as the Caterpillar network, to be introduced in the next section. The advantage of the Polytope Code is that it deconstrains the types of values that must be available in order to form a useful comparison; in fact, it becomes possible to have useful comparisons between nearly independent variables, which is not possible with a code built on a finite-field.

\section{An Example Polytope Code: The Caterpillar Network}\label{sec:caterpillar}

The Caterpillar Network is shown in Figure~\ref{fig:caterpillar}. We consider a slightly different version of the node-based attack on this network: at most one node may be a traitor, but only nodes 1--4. This network is not in the class defined in the statement of Theorem~\ref{thm:planar}, but we introduce it in order to motivate the Polytope Code.

\begin{figure}
\centerline{
\begin{psfrags}
\psfrag{s}[c]{\small $S$}
\psfrag{d}[c]{\small $D$}\footnotesize
\psfrag{x1}[c]{$$}
\psfrag{x2}[c]{$$}
\psfrag{x3}[c]{$$}
\psfrag{x4}[c]{$$}
\psfrag{y1}[c]{}
\psfrag{y2}[c]{}
\includegraphics[scale=.6]{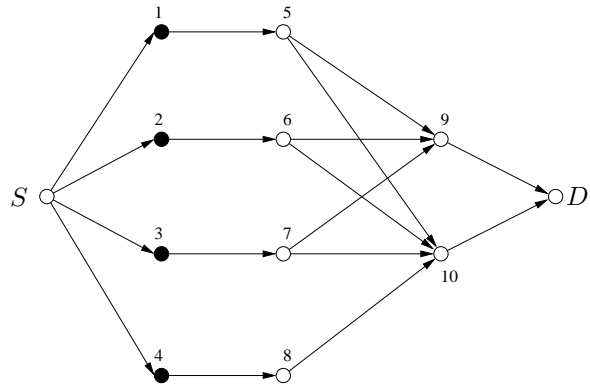}
\end{psfrags}}
\caption{The Caterpillar Network. All edges have unit capacity. One node may be a traitor, but only one of the black nodes: nodes 1--4.}
\label{fig:caterpillar}
\end{figure}

Even though this problem differs from the one defined earlier in that not every node in the network may be a traitor, it is easy to see that we may still apply the cut-set bound of Theorem~\ref{thm:cutset} as long as we take the set $U$ to be a subset of the allowable traitors. If we apply Theorem~\ref{thm:cutset} with $A=\{S,1,2,3,4\}$ and $U=\{1,2\}$, we find that the capacity of this network is no more than 2. As we will show, the capacity is 2.

Consider what is required to achievae rate 2. Of the four values on the edges $(1,5)$, $(2,6)$, $(3,7)$, and $(4,8)$, one may be corrupted by the adversary. This means that these four values must form a $(4,2)$ MDS code. That is, given any uncorrupted pair of these four values, it must be possible to decode the message exactly. Since each edge has capacity 1, in order to achieve rate 2, the values on each pair of edges must be independent, or nearly independent. For example, we could take the message to be composed of two elements $x,y$ from a finite field, and transmit on these four edges $x,y,x+y,x-y$. However, as we will show, this choice does not succeed.

Now consider the two edges $(9,D)$ and $(10,D)$. As these are the only edges incident to the destination, to achieve rate 2, both must hold values guaranteed to be uncorrupted by the traitor. We may assume that nodes 5--8 forward whatever they receive on their incoming edges to all their outgoing edges, so node 10 receives all four values sent from nodes 1--4. From these, it can decode the entire message, so it is not a problem for it to construct a trustworthy value to send along $(10,D)$. However, node 9 has access to only three of the four values sent from nodes 1--4, from which it is not clearly possible to construct a trustworthy value. The key problem, then, is to design the values on edges $(1,5),(2,6),(3,7)$ to be pairwise independent, but such that if one value is corrupted, it is always possible to construct a trustworthy value to transmit on $(9,D)$. This is impossible to do using a finite field code. For example, suppose node 9 is given values for $x,y,x+y$, one of which may be corrupted by the traitor. If the linear constraint among these three values does not hold---that is, if the received value for $x+y$ does not match the sum of the value for $x$ and the value for $y$---then any of the three values may be the incorrect one. Therefore, from node 9's perspective, any of nodes 1, 2, or 3 could be the traitor. In order to produce a trustworthy symbol, it must be able to correctly conclude that one of these three nodes is not the traitor. If, for example, it could determine that the traitor was node 1 or node 2 but not node 3, then the value sent along $(3,7)$ could be forwarded to $(9,D)$ with a guarantee of correctness. A linear code over a finite field does not allow this, but a Polytope Code does.

\subsection{Coding Strategy}

We now begin to describe a capacity-achieving Polytope Code for the Caterpillar network. We do so first by describing how the code is built out of a probability distribution, and the properties we need this probability distribution to satisfy. Subsequently, we give an explicit construction for a probability distribution derived from a polytope in a real vector field, and show that it has the desired properties.

Let $X,Y,Z,W$ be jointly distributed random variables, each defined over the finite alphabet $\Xmsc$. Assume all probabilities on these random variables are rational. Let $T^n(XYZW)\in\Xmsc^{4n}$ be the set of joint sequences $(x^ny^nz^nw^n)$ with joint type exactly equal to the distribution on $X,Y,Z,W$. For $n$ such that $T^n(XYZW)$ is not empty, we know from the theory of types that
\beq |T^n(XYZW)|\ge\frac{1}{(n+1)^{|\Xmsc|^4}}2^{nH(XYZW)}.\eeq
Our coding strategy will be to associate each element of $T^n(XYZW)$ with a distinct message. Given the message, we find the associated four sequences $x^n,y^n,z^n,w^n$, and transmit them on the four edges out of nodes 1,2,3,4 respectively. Doing this requires placing a sequence in $\Xmsc^n$ on each edge. The rate of this code is
\beq \frac{\log|T^n(XYZW)|}{n\log|\Xmsc|}\ge \frac{H(XYZW)}{\log|\Xmsc|}-\frac{|\Xmsc|^4\log(n+1)}{n\log|\Xmsc|}.\label{eq:ratefromtype}\eeq
Note that for sufficiently large $n$, we may operate at a rate arbitrarily close to $\frac{H(XYZW)}{\log|\Xmsc|}$. Therefore, to achieve rate 2, we would like the following property to hold.
\begin{property}\label{prop:rate}
 $\frac{H(XYZW)}{\log|\Xmsc|}=2$.
\end{property}

The adversary may alter the value of one of the sequences sent out of nodes 1--4. By the Singleton bound argument, it must be possible to reconstruct the message from any two of these four sequences. We therefore need the following property.
\begin{property}\label{prop:depend}
 Any two of $X,Y,Z,W$ determine the other two.
\end{property}

For reasons that will become clear, we also need the following property. It is an example of the fundamental property of Polytope Codes.
\begin{property}\label{prop:pairmagic} Any random variables $\tilde{X},\tilde{Y},\tilde{Z}$ satisfying the three conditions
\begin{align}
 (\tilde{X},\tilde{Y})&\sim(X,Y)\label{eq:pq1}\\
(\tilde{X},\tilde{Z})&\sim(X,Z)\label{eq:pq2}\\
(\tilde{Y},\tilde{Z})&\sim(Y,Z)\label{eq:pq3}
\end{align}
also satisfy
\beq (\tilde{X},\tilde{Y},\tilde{Z})\sim(X,Y,Z).\label{eq:pq4}\eeq
\end{property}

Suppose we are given random variables $X,Y,Z,W$ satisfying Properties~\ref{prop:rate}--\ref{prop:pairmagic}. We now describe what nodes 9 and 10 transmit to the destination. Let $\tilde{x}^n,\tilde{y}^n,\tilde{z}^n,\tilde{w}^n$ be the four sequences that are sent on the edges out of nodes 1--4; because of the traitor at most one of these may differ from $x^n,y^n,z^n,w^n$. Let random variables $\tilde{X},\tilde{Y},\tilde{Z},\tilde{W}$ have joint distribution equal to the joint type of $(\tilde{x}^n,\tilde{y}^n,\tilde{z}^n,\tilde{w}^n)$. This is a formal definition; these variables do not exist per se in the network, but defining them make it convenient to describe the behavior of the code. Since node 9 recevies $\tilde{x}^n,\tilde{y}^n,\tilde{z}^n$, it knows exactly the joint distribution of $\tilde{X},\tilde{Y},\tilde{Z}$. In particular, it can check which of \eqref{eq:pq1}--\eqref{eq:pq4} are satisfied for these variables.

Suppose \eqref{eq:pq4} holds. Then all three sequences $\tilde{x}^n,\tilde{y}^n,\tilde{z}^n$ are trustworthy, because if a traitor is among nodes 1--3, it must have transmitted the true value of its output sequence, or else the empirical type would not match, due to Property~\ref{prop:depend}. In this case, node 9 forwards $\tilde{x}^n$ to the destination, confident that it is correct. Meanwhile, node 10 can also observe $\tilde{X},\tilde{Y},\tilde{Z}$, and so it forwards $\tilde{y}^n$ to the destination.

Now suppose \eqref{eq:pq4} does not hold. Then by Property~\ref{prop:pairmagic}, one of \eqref{eq:pq1}--\eqref{eq:pq3} must not hold. Suppose, for example, that $(\tilde{X},\tilde{Y})\not\sim(X,Y)$. Because of our constant composition code construction, this can only occur if either node 1 or 2 is the traitor. Hence, node 3 is honest, so Node 9 may forward $\tilde{z}^n$ to the destination without error. Similarly, no matter which pairwise distribution does not match, node 9 can always forward the sequence not involved in the mismatch. Meanwhile, node 10 may forward $\tilde{w}^n$ to the destination, since in any case the traitor has been localized to nodes 1--3. The destination always receives two of the four sequences, both guaranteed correct; therefore it may decode.

\subsection{The Polytope Distribution}

All that remains to prove that rate 2 can be achieved for the Caterpillar network is to show that there exists variables $X,Y,Z,W$ such that Properties~\ref{prop:rate}--\ref{prop:pairmagic} hold. In fact, this is not quite possible. In particular, Property~\ref{prop:rate} implies that $X,Y,Z,W$ are pairwise independent. If so, Property~\ref{prop:pairmagic} cannot hold, because we can take $\tilde{X},\tilde{Y},\tilde{Z}$ to be jointly independent with $\tilde{X}\sim X$, $\tilde{Y}\sim Y$, and $\tilde{Z}\sim Z$. This satisfies \eqref{eq:pq1}--\eqref{eq:pq3} but not \eqref{eq:pq4}. We therefore replace Property~\ref{prop:rate} with the following slight relaxation.
\begin{property}\label{prop:ratealt}
 $\frac{H(XYZW)}{\log|\Xmsc|}\ge 2-\eps$.
\end{property}
If for every $\eps>0$, there exists a set of random variables satisfying Properties~\ref{prop:depend}--\ref{prop:ratealt}, then by \eqref{eq:ratefromtype} we achieve rate 2.

The most unusual aspect of the Polytope Code is Property~\ref{prop:pairmagic} and its generalization, to be stated as Theorem~\ref{thm:magic} in Sec.~\ref{sec:polytope}. Therefore, before constructing a distribution satisfying all three properties, we illustrate in Table~\ref{table:simpdist} a very simple distribution on three binary variables variables that satisfy just Property~\ref{prop:pairmagic}. This distribution is only on $X,Y,Z$; for simplicity leave out $W$, as it is not involved in Property~\ref{prop:pairmagic}. We encourage the reader to manually verify Property~\ref{prop:pairmagic} for this distribution. Observe that $X,Y,Z$ as given in Table~\ref{table:simpdist} may be alternatively expressed as being uniformly distributed on the following polytope:
\beq\big\{x,y,z\in\{0,1\}:x+y+z=1\big\}.\eeq
This is a special case of the construction of the distributions in the sequel.

\begin{table}\caption{A simple distribution satisfying Property~\ref{prop:pairmagic}.}\label{table:simpdist}
\[\begin{array}{cccc}
x&y&z&p(xyz)\\
\hline
0&0&0&0\\
0&0&1&1/3 \\
0&1&0&1/3\\
0&1&1&0\\
1&0&0&1/3\\
1&0&1&0\\
1&1&0&0\\
1&1&1&0
\end{array}\]
\end{table}

We now construct a distribution satisfying Properties~\ref{prop:depend}--\ref{prop:ratealt} for arbitrarily small $\eps$. Take $k$ to be a positive integer, and let $X,Y,Z,W$ be uniform over the set
\beq\big\{(x,y,z,w)\in\{-k,\ldots,k\}^4: x+y+z=0\text{ and }3x - y + 2w = 0\big\}.\label{eq:abcd}\eeq
Note that this is the set of integer lattice points in a polytope.

By the linear constraints in \eqref{eq:abcd}, this distribution satisfies Property~\ref{prop:depend}. Now consider Property~\ref{prop:ratealt}. The region of $(X,Y)$ pairs with positive probability is shown in Figure~\ref{fig:ab}. Note that $X$ and $Y$ are not independent, because the boundedness of $Z$ and $W$ requires that $X$ and $Y$ satisfy certain linear inequalities. Nevertheless, the area of the polygon shown in Figure~\ref{fig:ab} grows as $\Ocal(k^2)$. Hence
\beq \frac{\log H(XYZW)}{\log|\Xmsc|}=\frac{\log \Ocal(k^2)}{\log(2k+1)}\ge 2-\eps\eeq
where the last inequality holds for sufficiently large $k$. Thus these variables satisfy Property~\ref{prop:ratealt}.

\begin{figure}
\centerline{
\begin{psfrags}\scriptsize
\psfrag{x}[c]{$X$}
\psfrag{y}[c]{$Y$}
\psfrag{k}[c]{$k$}
\psfrag{mk}[c]{$-k$}
\centerline{\includegraphics[width=1.8in]{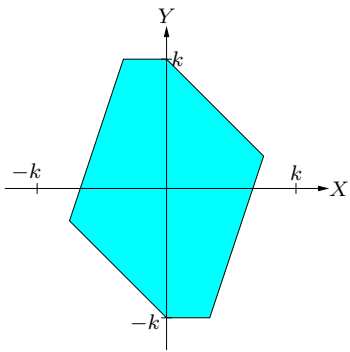}}
\end{psfrags}}
\caption{An example polytope projected into the $(x,y)$ plane.}
\label{fig:ab}
\end{figure}

We now consider Property~\ref{prop:pairmagic}. Assuming $\tilde{X},\tilde{Y},\tilde{Z}$ satisfy \eqref{eq:pq1}--\eqref{eq:pq3}, we may write
\begin{align}
\Embb\big[(\tilde{X}+\tilde{Y}+\tilde{Z})^2\big]
&=\Embb\big[\tilde{X}^2+\tilde{Y}^2+\tilde{Z}^2+2\tilde{X}\tilde{Y}+2\tilde{X}\tilde{Z}+2\tilde{Y}\tilde{Z}\big]\\
&=\Embb\big[X^2+Y^2+Z^2+2XY+2XZ+2YZ\big]\label{eq:xyzpf1}\\
&=\Embb\big[(X+Y+Z)^2]\\
&=0\label{eq:xyzpf2}\end{align}
where \eqref{eq:xyzpf1} holds from \eqref{eq:pq1}--\eqref{eq:pq3}, and because each term in the sum involves at most two of the three variables; and \eqref{eq:xyzpf2} holds because $X+Y+Z=0$ by construction. Hence $\tilde{X}+\tilde{Y}+\tilde{Z}=0$, so we may write
\begin{align}
(\tilde{X},\tilde{Y},\tilde{Z})
&=(\tilde{X},\tilde{Y},-\tilde{X}-\tilde{Y})\\
&\sim(X,Y,-X-Y)\label{eq:xyzpf3}\\
&=(X,Y,Z)
\end{align}
where \eqref{eq:xyzpf3} holds by \eqref{eq:pq1}. This verifies Property~\ref{prop:pairmagic}, and we may now conclude that the distribution on $X,Y,Z,W$ satisfies all desired properties, so the induced Polytope Code achieves rate 2 for the Caterpillar network.

The above argument took advantage of the linear constraint $X+Y+Z=0$, but this constraint was in no way special. Property~\ref{prop:pairmagic} would hold as long as $X,Y,Z$ are subject to any linear constraint with nonzero coefficients for all three variables.

Observe that when $k$ is large, any pair of the four variables are nearly independent, in that their joint entropy is close to the sum of their individual entropies. We have therefore constructed something like a $(4,2)$ MDS code. In fact, if we reinterpret the linear constraints in \eqref{eq:abcd} as constraints on elements $x,y,z,w$ from a finite field, the resulting finite subspace would be exactly a $(4,2)$ MDS code. This illustrates a general principle of Polytope Codes: any code construction on a finite field can be immediately used to construct a Polytope Code, and many of the properties of the original code will hold over. The resulting code will be substantially harder to implement, in that it involves much longer blocklengths, and more complicated coding functions, but additional properties, such as Property~\ref{prop:pairmagic}, may hold.

\section{A Polytope Code for the Cockroach Network}\label{sec:cockroach}

We return now to the Cockroach network, and demonstrate a capacity-achieving Polytope Code for it. We do this not to find the capacity for the network, because we have already done so with the simpler code in Sec.~\ref{sec:linearplus}, but rather to illustrate a Polytope Code on a network satisfying the conditions of Theorem~\ref{thm:planar}, which are of a somewhat different flavor than the Caterpillar network.

In Sec.~\ref{sec:linearplus}, we illustrated how performing comparisons and transmitting comparison bits through the network can help defeat traitors. In Sec.~\ref{sec:caterpillar}, we illustrated how a code can be built out a distribution on a polytope, and how a special property of that distribution comes into play in the operation of the code. To build a Polytope Code for the Cockroach network, we combine these two ideas: the primary data sent through the network comes from the distribution on a polytope, but then comparisons are performed in the network in order to localize the traitor.

The first step in constructing a Polytope Code is to describe a distribution over a polytope. That is, we define a linear subspace in a real vector field, and take a uniform distribution over the polytope defined by the set of vectors with entries in $\{-k,\ldots,k\}$ for some integer $k$. The nature of this distribution depends on the characteristics of the linear subspace. For our code for the Cockroach network, we need one that is equivalent to a $(6,2)$ MDS code. That is, the linear subspace sits in $\Rmbb^6$, has dimension 2, and is defined by four constraints such that any two variables determine the others. One choice for the subspace, for example, would be the set of $(a,b,c,d,e,f)$ satisfying
\begin{align}
a+b+c&=0\label{eq:62mds1}\\
a-b+d&=0\\
a+2b+e&=0\\
2a+b+f&=0.\label{eq:62mds4}
\end{align}
Let the random variables $A,B,C,D,E,F$ have joint distribution uniformly distributed over the polytope defined by \eqref{eq:62mds1}--\eqref{eq:62mds4} and $a,b,c,d,e,f\in\{-k,\ldots,k\}$. By a similar argument to that in Sec.~\ref{sec:caterpillar}, for large $k$,
\beq\frac{H(ABCDEF)}{\log(2k+1)}\approx 2.\eeq
We choose a block length $n$ and associate each message with a joint sequence $(a^nb^nc^nd^ne^nf^n)$ with joint type exactly equal to the distribution of the six variables. For large $n$ and $k$, we may place one sequence $a^n$--$f^n$ on each unit capacity edge in the network and operate near rate 2. These six sequences are generated at the source and then routed through the network as shown in Fig.~\ref{fig:cockcode}. For convenience, the figure omits the $n$ superscript, but we always mean them to be sequences.

\begin{figure}
\centerline{\begin{psfrags}\footnotesize
\psfrag{s}[c]{$S$}
\psfrag{d}[c]{$D$}
\psfrag{n1}[c]{$1$}
\psfrag{n2}[c]{$2$}
\psfrag{n3}[c]{$3$}
\psfrag{n4}[c]{$4$}
\psfrag{n5}[c]{$5$}
\psfrag{x1d}[c]{$a$}
\psfrag{x14}[c]{$b$}
\psfrag{x24}[c]{$c$}
\psfrag{x25}[c]{$d$}
\psfrag{x35}[c]{$e$}
\psfrag{x3d}[c]{$f$}
\psfrag{x4d}[c]{$c$}
\psfrag{x5d}[c]{$d$}
\psfrag{x4d2}[c]{\tiny$(=,\ne)$}
\psfrag{x5d2}[c]{\tiny$(=,\ne)$}
\includegraphics[scale=.6]{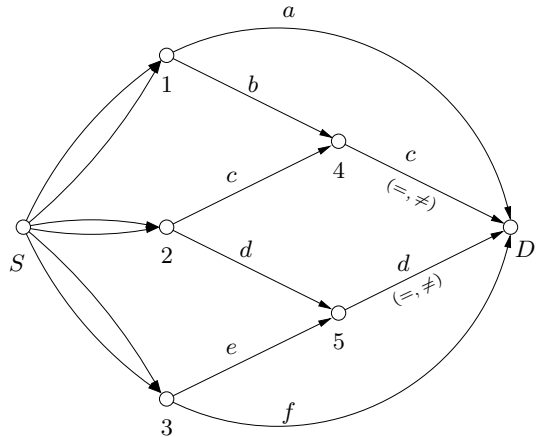}
\end{psfrags}}
\caption{A capacity-achieving Polytope Code for the Cockroach Network.}
\label{fig:cockcode}
\end{figure}

As in Sec.~\ref{sec:caterpillar}, we define $\tilde{A},\tilde{B},\tilde{C},\tilde{D},\tilde{E},\tilde{F}$ to have joint distribution equal to the type of the six sequences an they actually appear in the network, which may differ from the sequences sent by the source because of the adversary. In addition to forwarding one sequence, nodes 4 and 5 perform more elaborate operations. Like in the code for the Cockroach network described in Sec.~\ref{sec:linearplus}, they each perform a comparison and transmit either $=$ or $\ne$ depending on whether the comparison succeeds. In particular, they compare the types of their received sequences with the original distribution. For example, node 4 receives the two sequences $b^n$ and $c^n$, from which it can construct $\tilde{B}$ and $\tilde{C}$. If the joint distribution of $(\tilde{B},\tilde{C})$ matches that of $(B,C)$, it sends $=$ to the destination; if not, it sends $\ne$. This single bit costs asymptotically negligible rate, so it has no effect on the achieved rate of the code for large $n$ and $k$. Node 5 performs a similar action, comparing the distribution of $(\tilde{D},\tilde{E})$ with that of $(D,E)$, and transmitting a comparison bit to the destination.

We now describe the decoding operation at the destination. The first step is to compile a list of possible traitors. We denote this list $\Lmsc\subseteq\{1,\ldots,5\}$. The destination does this in the following way. Since the code is entirely known, with no randomness, the destination determines whether all its received data could be induced if each node were the traitor. That is, it considers each possible message, each possible traitor, and each possible set of values on the output edges of that traitor. Any combination of these determines the values received at the destination, which may be compared to what that the destination has in fact received. If a node $i$ is such that it could have been the traitor and induced the set of values received at the destination, for any message and any action by node $i$, then $i$ is put onto $\Lmsc$. This process ensures that the true traitor, even though it may not be known by the destination, is surely in $\Lmsc$. Note that this procedure could in principle be done for any code, not necessarily a Polytope Code.

Because we define $\Lmsc$ in this non-constructive manner, our arguments for code correctness may sometimes seem backwards. We will make assumptions about $\Lmsc$, and from there reason about the resulting constraints on what the traitor could have done, even though this is opposite to the causal relationship. We do this because it is most convenient to partition possible traitor actions based on the $\Lmsc$ that results. As long as our analysis considers every possible $\Lmsc$, we can be assured that the code can handle any possible traitor action.

Once $\Lmsc$ is determined, the next step in the decoding process is to use $\Lmsc$ to decide from which of the four symbols available at the destination to decode. Since any pair of the six original symbols contain all the information in the message, if at least two of the four symbols $a,c,d,f$ can be determined to be trustworthy by the destination, then it can decode. The destination discards any symbol that was touched by every node in $\Lmsc$, and decodes from the rest. For example, if $\Lmsc=\{2\}$, then the destination discards $c,d$ and decodes from $a,f$. If $\Lmsc=\{2,4\}$, the destination discards just $c$---because it is the only symbol touched by both nodes 2 and 4---and decodes from $a,d,f$. If $\Lmsc=\{1,\ldots,5\}$, then it discards no symbols and decodes from all four.

The prove the correctness of this code, we must show that the destination never decodes from a symbol that was altered by the traitor. This is easy to see if $|\Lmsc|=1$, because in this case the destination knows exactly which node is the traitor, and it simply discards all symbols that may have been influenced by this node. Since no node touches more than two of the symbols available at the destination, there are always at least two remaining from which to decode.

More complicated is when $|\Lmsc|\ge 2$. In this case, the decoding process, as described above, sometimes requires the destination to decode from symbols touched by the traitor. For example, suppose node 2 were the traitor, and $\Lmsc=\{1,2\}$. No symbols are touched by both nodes 1 and 2, so by the decoding rule the destination decodes using all four of its received symbols. In particular, the destination uses $c$ and $d$ to decode, even though both are touched by node 2. To prove correctness we must show that node 2 could not have transmitted anything but the true values of $c$ and $d$. What we use to prove this is the fact that $\Lmsc$ contains node 1, meaning that node 2 must have acted in a way such that it appears to the destination that node 1 could be the traitor. This induces constraints on the behavior of node 2. The first is that the comparison that occurs at node 5 between $d$ and $e$ must succeed. If it did not, then the destination would learn it, and conclude that node 1 could not be the traitor, in which case 1 would not be in $\Lmsc$. Hence the distribution of $(\tilde{D},\tilde{E})$ must match that of $(D,E)$. This constitutes a constraint on node 2 in its transmission of $d$. Moreover, $(\tilde{D},\tilde{F})\sim(D,F)$, because the destination may observe $d$ and $f$, so it could detect a difference between these two distributions if it existed. Because both symbols are untouched by node 1 and $1\in\Lmsc$, the distributions must match. Furthermore, because neither $e$ nor $f$ are touched by the traitor node 2, $(\tilde{E},\tilde{F})\sim(E,F)$. To summarize:
\begin{align}
(\tilde{D},\tilde{E})&\sim(D,E),\\
(\tilde{D},\tilde{F})&\sim(D,F),\\
(\tilde{E},\tilde{F})&\sim(E,F).
\end{align}
Using these three conditions, we apply Property~\ref{prop:pairmagic} to conclude that $(\tilde{D},\tilde{E},\tilde{F})\sim(D,E,F)$. We may do this because, as we argued in Sec.~\ref{sec:caterpillar}, Property~\ref{prop:pairmagic} holds for for any three variables in a polytope subject to a single linear constraint with nonzero coefficients on each one. Since we have constructed the 6 variables to be a $(6,2)$ MDS code, this is true here. (In the space defined by \eqref{eq:62mds1}--\eqref{eq:62mds4}, the three variables $D,E,F$ are subject to $D+E-F=0$.) Since $e$ and $f$ together specify the entire message, in order for this three-way distribution to match, the only choice for $d$ is the true value of $d$. Now we have to show that $c$ can also not be corrupted by the traitor. Since the only symbol seen by the destination that could be touched by node 1 is $a$, we must have $(\tilde{C},\tilde{D},\tilde{F})\sim(C,D,F)$, or else 1 would not be in $\Lmsc$. Again since any two symbols specify the entire message, and both $d$ and $f$ are uncorrupted by the traitor, the value for $c$ sent by node 2 must also be its true value. Therefore the destination will not make an error by using $c$ and $d$ to decode.

The above analysis holds for any $\Lmsc$ containing $\{1,2\}$. That is, if node 2 is the traitor, and $1\in\Lmsc$, then node 2 cannot corrupt $c$ or $d$ (even if $\Lmsc$ contains additional nodes). To prove correctness of the code, it is enough to demonstrate a similar fact for every pair of nodes: we must show that for every pair of nodes $(i,j)$, if $i$ is the traitor and $j\in\Lmsc$, node $i$ is forced to transmit the true value of any symbol that is not also touched by node $j$. If this can be shown for each pair, the destination always decodes correctly by discarding only the symbols touched by every node in $\Lmsc$.

Moreover, it is enough to consider each unordered pair only once. For example, as we have already performed the analysis for $i=2$ and $j=1$, we do not need to perform the same analysis for $i=1$ and $j=2$. This is justified as follows. We have shown that when node 2 is the traitor and $1\in\Lmsc$, symbols $c$ and $d$ are uncorrupted. Therefore $(\tA,\tC,\tD,\tF)\sim(A,C,D,F)$. Hence if $1\in\Lmsc$ and $(\tA,\tC,\tD,\tF)\not\sim(A,C,D,F)$, node 2 cannot be the traitor, so $2\notin\Lmsc$. Now, if node 1 is the traitor and $2\in\Lmsc$, then it must be the case that $(\tA,\tC,\tD,\tF)\sim(A,C,D,F)$. Since of these four symbols only $a$ is touched by node 1, it cannot be corrupted. This same argument can apply to any pair of nodes.

We now complete the proof of correctness of the proposed Polytope Code for the Cockroach network by considering all pairs of potential traitors in the network:
\begin{description}
\item[$(1,2)$:] Proof above.
\item[$(1,3)$:] Suppose node 1 is the traitor and $3\in\Lmsc$. We must show that node 1 cannot corrupt $a$. We have that $(\tA,\tC,\tD)\sim(A,C,D)$, because these three symbols are not touched by node 3, and are available at the destination. Since $c$ and $d$ determine the message, this single constraint is enough to conclude that node 1 cannot corrupt $a$. This illustrates a more general principle: when considering the pair of nodes $(i,j)$, if the number of symbols available at the destination untouched by both $i$ or $j$ is at least as large as the rate of the code, we may immediately conclude that no symbols can be corrupted. In fact, this principle works even for finite-field linear codes.
\item[$(1,4)$:] Follows exactly as $(1,3)$.
\item[$(1,5)$:] Follows exactly as $(1,3)$.
\item[$(2,3)$:] Follows exactly as $(1,2)$.
\item[$(2,4)$:] Suppose node 4 is the traitor and $2\in\Lmsc$. The only symbol touched by both nodes 1 and 4 is $c$, so the destination will decode from $a,d,f$. But node 4 does not touch any of these symbols, so it cannot corrupt them.
\item[$(2,5)$:] Follows exactly as $(2,4)$.
\item[$(3,4)$:] Follows exactly as $(1,3)$.
\item[$(3,5)$:] Follows exactly as $(1,3)$.
\item[$(4,5)$:] Follows exactly as $(1,3)$.
\end{description}

\section{The Polytope Code}\label{sec:polytope}

We now describe the general structure of Polytope Codes and state their important properties. Given a matrix $F\in\Zmbb^{u\times m}$, consider the polytope
\beq \Pmsc_k=\big\{\xbf\in\Zmbb^m:F\xbf=0,|x_i|\le k\text{ for }i=1,\ldots,m\big\}.\label{eq:polytope}\eeq
We may also describe this polytope in terms of a matrix $K$ whose columns form a basis for the null-space of $F$. Let $\Xbf$ be an $m$-dimensional random vector uniformly distributed over $\Pmsc_k$. Take $n$ to be a multiple of the least common denominator of the distribution of $\Xbf$ and let $T^n(\textbf{X})$ be the set of sequences $\xbf^n$ with joint type exactly equal to this distribution. In a Polytope Code, each message is associated with an element of $T^n(\textbf{X})$. By the theory of types, the number of elements in this set is at least $2^{n(H(\Xbf)-\eps)}$ for any $\eps>0$ and sufficiently large $n$. Given a message and the corresponding sequence $\xbf^n$, each edge in the network holds a sequence $x_i^n$ for some $i=1,\ldots,m$. As we have seen in the example Polytope Codes in Sec.~\ref{sec:caterpillar} and~\ref{sec:cockroach}, the joint entropies for large $k$ can be calculated just from the properties of the linear subspace defined by $F$. The following proposition states this property in general.
\begin{prop}
\label{prop:entropy}
For any $S\subseteq\{1,\ldots,m\}$
\beq \lim_{k\to\infty}\frac{H(X_S)}{\log k}=\rank(K_S)\label{eq:polyentropy}\eeq
where $K_S$ is the matrix made up of the rows of $K$ corresponding to the elements of $S$.
\end{prop}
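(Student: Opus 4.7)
The plan is to sandwich $H(X_S)$ between matching bounds of the form $\rank(K_S)\log k \pm O(1)$ and then divide by $\log k$. The entire argument reduces to elementary lattice-point counting, leveraging that $\Xbf$ is uniform on $\Pmsc_k$, so that $H(\Xbf) = \log|\Pmsc_k|$, and for each $\xbf_S$ in the support of $X_S$ the conditional distribution of $\Xbf$ given $X_S = \xbf_S$ is uniform on the corresponding fiber.

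First I would set $L = \ker(F)\cap \Zmbb^m$, a sublattice of $\Zmbb^m$ of rank $d := \rank(K) = m - \rank(F)$, so that $\Pmsc_k = L \cap [-k,k]^m$. The workhorse is the lemma that for any rank-$r$ sublattice $\Lambda \subseteq \Zmbb^m$, $|\Lambda \cap [-k,k]^m| = \Theta(k^r)$ as $k \to \infty$. The lower bound uses a fixed $\Zmbb$-basis $b_1,\dots,b_r$ of $\Lambda$: integer coefficient vectors with $|c_i| \le k/(r\max_j \|b_j\|_\infty)$ yield $\Omega(k^r)$ distinct points inside the cube. The upper bound identifies points of $\Lambda$ inside the cube with integer points of a bounded convex polytope in $\Rmbb^r$ that scales linearly in $k$, hence contributing $O(k^r)$ points. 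Applied to $L$ itself, this gives $H(\Xbf) = d\log k + O(1)$.

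For the upper bound on $H(X_S)$, observe that the coordinate projection $\pi_S(L)$ is a subgroup of $\Zmbb^{|S|}$ whose $\Rmbb$-span is $\{K_S\ybf : \ybf \in \Rmbb^d\}$, a subspace of dimension $\rank(K_S)$. Thus $\pi_S(L)$ is a lattice of rank $\rank(K_S)$, the support of $X_S$ lies in $\pi_S(L)\cap[-k,k]^{|S|}$, and the lemma gives $H(X_S) \le \log|\mathrm{supp}(X_S)| \le \rank(K_S)\log k + O(1)$. For the matching lower bound I would control $H(\Xbf \mid X_S)$: for each $\xbf_S$ in the support, the fiber $\pi_S^{-1}(\xbf_S)\cap\Pmsc_k$ is contained in a translate of the sublattice $L \cap \ker(\pi_S)$ intersected with a cube of side $O(k)$. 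This sublattice has rank at most $d - \rank(K_S)$, since its $\Rmbb$-span lies in the kernel of the linear map $\ybf \mapsto K_S\ybf$, which has dimension $d - \rank(K_S)$. The lemma then gives $H(\Xbf\mid X_S = \xbf_S) \le (d - \rank(K_S))\log k + O(1)$ uniformly in $\xbf_S$, so that $H(X_S) = H(\Xbf) - H(\Xbf\mid X_S) \ge \rank(K_S)\log k - O(1)$.

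I do not anticipate a real obstacle; the mild bookkeeping is verifying that the $O(1)$ terms are uniform in $k$, which is clear because they depend only on fixed bases of $L$, $\pi_S(L)$, and $L \cap \ker(\pi_S)$ (and on fixed auxiliary polytope volumes), all determined by $F$ alone and independent of $k$. Dividing both sandwich bounds by $\log k$ and letting $k \to \infty$ yields \eqref{eq:polyentropy}.
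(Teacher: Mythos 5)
Your proof is correct, and your upper bound is essentially identical to the paper's: both use the fact that $X_S$ is supported on a rank-$\rank(K_S)$ lattice intersected with a cube of side $O(k)$, giving $H(X_S)\le\log|\mathrm{supp}(X_S)|\le\rank(K_S)\log k + O(1)$ via the $\Theta(k^r)$ lattice-point count. Where you diverge is the lower bound. The paper picks a minimal completing index set $T$ with $\rank(K_{S\cup T})=\rank(K)$, writes $H(X_S)=H(X_{S\cup T})-H(X_T\mid X_S)\ge H(\Xbf)-H(X_T)$, and then reuses the already-established \emph{upper} bound on $H(X_T)$ together with $\rank(K_T)=\rank(K)-\rank(K_S)$. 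That is a purely information-theoretic bootstrap requiring no further lattice geometry beyond the single $\Theta(k^r)$ estimate. You instead control $H(\Xbf\mid X_S)$ directly by counting points in fibers, which needs the additional geometric observation that each fiber lies in a translate of the sublattice $L\cap\ker(\pi_S)$ of rank at most $\rank(K)-\rank(K_S)$. Both routes rest on the same counting lemma and the same chain-rule decomposition; the paper's trick avoids the fiber analysis at the cost of introducing the auxiliary set $T$, while yours is more self-contained and arguably more transparently geometric, at the cost of one extra lattice argument. Either is a clean proof of the proposition.
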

\begin{proof}
For any $S\subseteq\{1,\ldots,m\}$, let $\Pmsc_k(X_S)$ be the projection of $\Pmsc_k$ onto the subspace made up of dimensions $S$. The number of elements in $\Pmsc_k$ is $\Theta(k^{\rank(K_S)})$. That is, there exist constants $c_1$ and $c_2$ such that for sufficiently large $k$
\beq c_1 k^{\rank(K_S)}\le |\Pmsc_k(X_S)|\le c_2 k^{\rank(K_S)}.\label{eq:cbounds}\eeq
For $S=\{1,\ldots,m\}$, because $\Xbf$ is defined to be uniform on $\Pmsc_k$, \eqref{eq:cbounds} gives
\beq\lim_{k\to\infty}\frac{H(\Xbf)}{\log k} = \lim_{k\to\infty}\frac{\log|\Pmsc_k|}{\log k}=\rank(K).\eeq
Moreover, by the uniform bound
\beq\lim_{k\to\infty}\frac{H(X_S)}{\log k}\le \rank(K_S).\label{eq:ent1}\eeq
For any $S\subset\{1,\ldots,m\}$, let $T\subset\{1,\ldots,m\}$ be a minimal set of elements such that $\rank(K_{S,T})=\rank(K)$; i.e. such that $X_{S,T}$ completely specify $X$ under the constraint $FX=0$. Note that $\rank(K_T)=\rank(K)-\rank(K_S)$. Hence
\begin{align}
\lim_{k\to\infty}\frac{H(X_S)}{\log k} &= \lim_{k\to\infty}\frac{H(X_{S,T})}{\log k} - \frac{H(X_T|X_S)}{\log k}\\
&\ge\lim_{k\to\infty}\frac{H(X)}{\log k} - \frac{H(X_T)}{\log k}\\
&\ge\rank(K)-\rank(K_T)\\
&=\rank(K_S).\label{eq:ent2}\end{align}
Combining \eqref{eq:ent1} with \eqref{eq:ent2} completes the proof
\end{proof}

Recall that in a linear code operating over the finite field $\Fmbb$, we may express the elements on the edges in a network $\xbf\in\Fmbb^m$ as a linear combination of the message $\xbf=K\wbf$, where $K$ is a linear transformation over the finite field, and $\wbf$ is the message vector. Taking a uniform distribution on $\wbf$ imposes a distribution on $\Xbf$ satisfying
\beq H(X_S)=\rank(K_S)\log|\Fmbb|.\eeq
This differs from \eqref{eq:polyentropy} only by a constant factor, and also that \eqref{eq:polyentropy} holds only in the limit of large $k$. Hence, Polytope Codes achieve a similar set of entropy profiles as standard linear codes. They may not be identical, because interpreting a matrix $K_S$ as having integer values as opposed to values from a finite field may cause its rank to change. However, the rank when interpreted as having integer values can never be less than when interpreted as having finite field values, because any linear equality on the integers will hold on a finite field, but not necessarily vice versa. The matrix $K_S$ could represent, for example, the source-to-destination linear transformation in a code, so its rank is exactly the achieved rate. Therefore, a Polytope Code always achieves at least as high a rate as the equivalent linear code. Often, when designing linear codes, the field size must be made sufficiently large before the code works; here, sending $k$ to infinity serves much the same purpose, albiet only asymptotically.

In Sec.~\ref{sec:caterpillar} and~\ref{sec:cockroach}, we saw that Property~\ref{prop:pairmagic} played an important role in the functionality of the Polytope Codes. The following theorem states the more general version of this property. It compromises the major property that Polytope Codes possess and linear codes do not.

\begin{theorem}[Fundamental Property of Polytope Codes]\label{thm:magic}
Let $\Xbf\in\Rmbb^m$ be a random vector satisfying $F\Xbf=0$. Suppose a second random vector $\tilde{\Xbf}\in\Rmbb^m$ satisfies the following $L$ constraints:
\beq A_l\tilde{\Xbf}\sim A_l\Xbf\text{ for }l=1,\ldots,L\label{eq:pqcon}\eeq
where $A_l\in\Rmbb^{u_l\times m}$. The two vectors are equal in distribution if the following hold:
\begin{enumerate}
\item There exists a positive definite $C\in\Rmbb^{u\times u}$ and matrices $\Sigma_l\in\Rmbb^{u_l\times u_l}$ such that
\beq F^TCF = \sum_{l=1}^L A_l^T\Sigma_lA_l\label{eq:hsigma}.\eeq
\item There exists $l^*\in\{1,\ldots,L\}$ such that $\genfrac{[}{]}{0pt}{0}{F}{A_{l^*}}$ has full column rank.
\end{enumerate}
\end{theorem}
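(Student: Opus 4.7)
The plan is to mimic, in general form, the quadratic-expansion trick used for the Caterpillar network in Section~\ref{sec:caterpillar}, first showing that $\tilde{\Xbf}$ almost surely lies in the null space of $F$, and then invoking the full-column-rank hypothesis to lift the marginal agreement $A_{l^*}\tilde{\Xbf}\sim A_{l^*}\Xbf$ to full joint agreement.

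The first step will be to compute $\Embb\big[(F\tilde{\Xbf})^T C (F\tilde{\Xbf})\big]$. Expanding this using the decomposition \eqref{eq:hsigma}, I will rewrite it as
\beq \Embb\big[\tilde{\Xbf}^T F^T C F \tilde{\Xbf}\big] = \sum_{l=1}^L \Embb\big[(A_l\tilde{\Xbf})^T \Sigma_l (A_l\tilde{\Xbf})\big]. \eeq
The key observation is that each summand is a function of $A_l\tilde{\Xbf}$ alone, so by the marginal-matching hypothesis \eqref{eq:pqcon} each summand equals the corresponding expectation with $\Xbf$ in place of $\tilde{\Xbf}$. Reassembling via \eqref{eq:hsigma} in the other direction gives
\beq \Embb\big[(F\tilde{\Xbf})^T C (F\tilde{\Xbf})\big] = \Embb\big[(F\Xbf)^T C (F\Xbf)\big] = 0, \eeq
using $F\Xbf=0$ in the last step. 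Since $C$ is positive definite, this forces $F\tilde{\Xbf} = 0$ with probability one. This mirrors exactly the computation \eqref{eq:xyzpf1}--\eqref{eq:xyzpf2} in the Caterpillar example, where $C$ and the $\Sigma_l$ were implicit scalars.

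For the second step, I will use the full-column-rank hypothesis on $\genfrac{[}{]}{0pt}{0}{F}{A_{l^*}}$. Full column rank means this matrix has a left inverse $M$, so that $\tilde{\Xbf} = M\genfrac{[}{]}{0pt}{0}{F\tilde{\Xbf}}{A_{l^*}\tilde{\Xbf}}$ and likewise for $\Xbf$. Since both $F\Xbf$ and $F\tilde{\Xbf}$ are almost surely zero, each vector is a deterministic linear image (via the same map $M$) of its $A_{l^*}$-projection. Combined with $A_{l^*}\tilde{\Xbf}\sim A_{l^*}\Xbf$, this immediately yields $\tilde{\Xbf}\sim\Xbf$.

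The only real subtlety is the first step: recognizing that $\tilde{\Xbf}^T F^T C F \tilde{\Xbf}$ decomposes through \eqref{eq:hsigma} into summands that depend only on the marginals $A_l\tilde{\Xbf}$, which is precisely what the decomposition hypothesis is designed to provide. Once this is set up, the rest is a clean linear-algebra consequence, and the second condition is there only to ensure that knowing $\tilde{\Xbf}$ lies in the null space of $F$ together with its image under one $A_{l^*}$ pins it down completely.
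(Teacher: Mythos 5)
Your proposal is correct and follows essentially the same argument as the paper: the quadratic form $\Embb[(F\tilde\Xbf)^T C F\tilde\Xbf]$ is decomposed via \eqref{eq:hsigma} into marginal-only terms, marginal matching collapses it to zero, positive definiteness of $C$ then forces $F\tilde\Xbf=0$, and the full-column-rank condition recovers $\tilde\Xbf$ from $A_{l^*}\tilde\Xbf$. Your phrasing with a left inverse $M$ is just a compact way of writing the paper's $G_1F+G_2A_{l^*}=I$; the mathematics is identical.
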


\begin{proof}
The following proof follows almost exactly the same argument as the proof of Property~\ref{prop:pairmagic} in Sec.~\ref{sec:caterpillar}. We may write
\begin{align}
\Embb\big[(F\tilde{\Xbf})^TC(F\tilde{\Xbf})\big]
&=\sum_{l=1}^m\Embb\big[(A_l\tilde{\Xbf})^T\Sigma_l(A_l\tilde{\Xbf})\big]\label{eq:fundpf1}\\
&=\sum_{l=1}^m\Embb\big[(A_l\Xbf)^T\Sigma_l(A_l\Xbf)\big]\label{eq:fundpf2}\\
&=\Embb\big[(F\Xbf)^TC(F\Xbf)\big]\label{eq:fundpf3}\\
&=0\label{eq:fundpf4}
\end{align}
where \eqref{eq:fundpf1} and \eqref{eq:fundpf3} follow from \eqref{eq:hsigma}; \eqref{eq:fundpf2} follows from \eqref{eq:pqcon}, and because each term in the sum involves $A_l\Xbf$ for some $l$; and \eqref{eq:fundpf4} follows because $F\Xbf=0$. Because $C$ is positive definite, \eqref{eq:fundpf4} implies $F\tilde{\Xbf}=0$.

By the second property in the statement of the theorem, there exists $G_1\in\Rmbb^{m\times u}$ and $G_2\in\Rmbb^{m\times u_{l^*}}$ such that
\beq G_1F+G_2A_{l^*}=I.\eeq
Hence $G_2A_{l^*}\tilde{\Xbf}=\tilde{\Xbf}$, so we may write
\begin{align}
\tilde{\Xbf}
&=G_2A_{l^*}\tilde{\Xbf}\\
&\sim G_2A_{l^*}\Xbf\\
&=\Xbf.
\end{align}
\end{proof}

As an example of an application of Theorem~\ref{thm:magic}, we use it to prove again Property~\ref{prop:pairmagic} in Sec.~\ref{sec:caterpillar}. Recall that variables $X,Y,Z\in\{-k,\ldots,k\}$ satisfying $X+Y+Z=0$, and the three pairwise distributions of $\tilde{X},\tilde{Y},\tilde{Z}$ match as stated in \eqref{eq:pq1}--\eqref{eq:pq3}. In terms of the notation of Theorem~\ref{thm:magic}, we have $m=3$, $L=3$, and
\begin{align}
F=\left[\begin{array}{ccc}1&1&1\end{array}\right],\\
A_1=\left[\begin{array}{ccc}1&0&0\\0&1&0\end{array}\right],\\
A_2=\left[\begin{array}{ccc}1&0&0\\0&0&1\end{array}\right],\\
A_3=\left[\begin{array}{ccc}0&1&0\\0&0&1\end{array}\right].
\end{align}

To satisfy the second condition of Theorem~\ref{thm:magic}, we may set $l^*=1$, since $\genfrac{[}{]}{0pt}{0}{F}{A_{1}}$ has rank 3. In fact, we could just as well have set $l^*$ to 2 or 3. To verify the first condition, we need to check that there exist $\Sigma_l$ for $l=1,2,3$ and a positive definite $C$ (in this case, a positive scalar, because $F$ has only one row) satisfying \eqref{eq:hsigma}. If we let
\beq\Sigma_l=\left[\begin{array}{cc}\sigma_{l,11}&\sigma_{l,12}\\ \sigma_{l,21}&\sigma_{l,22}\end{array}\right]\eeq
then, for instance,
\beq A_1^T\Sigma_1A_1 = \left[\begin{array}{ccc}\sigma_{1,11}&\sigma_{1,12}&0\\ \sigma_{1,21}&\sigma_{1,22}&0\\ 0&0&0\end{array}\right].\eeq
The right hand side of \eqref{eq:hsigma} expands to
\beq \sum_{l=1}^3 A_l^T\Sigma_lA_l = \left[\begin{array}{ccc}\sigma_{1,11}+\sigma_{2,11}&\sigma_{1,12}&\sigma_{2,12}\\
\sigma_{1,21}&\sigma_{1,22}+\sigma_{3,11}&\sigma_{3,12}\\
\sigma_{2,21}&\sigma_{3,21}&\sigma_{2,22}+\sigma_{3,22}\end{array}\right].\eeq
Therefore, for suitable choices of $\{\Sigma_l\}_{l=1}^3$, we can produce any matrix for the right hand side of \eqref{eq:hsigma}. We may simply set $C=1$ and calculate the resulting matrix for the left hand side, then set $\{\Sigma_l\}_{l=1}^3$ appropriately. This allows us to apply Theorem~\ref{thm:magic} to conclude that $(\tX,\tY,\tZ)\sim(X,Y,Z)$.

In our proof of Theorem~\ref{thm:planar}, we will not use Theorem~\ref{thm:magic} in its most general form. Instead, we state three corollaries that will be more convenient. The first is a generalization of the above argument for more than three variables.
\begin{corollary}\label{cor:magic1}
Let $\Xbf$ satisfy $F\Xbf=0$ for some $F\in\Zmbb^{1\times m}$ with all nonzero values. If $\tilde{\Xbf}$ satisfies
\begin{align}
(\tilde{X_i},\tilde{X_j})\sim(X_i,X_j)\text{ for all }i,j=1,\ldots,m\label{eq:ex1pairs}\\
(\tilde{X_2},\cdots,\tilde{X_m})\sim(X_2,\cdots,X_m)\label{eq:ex1big}
\end{align}
then $\tilde{\Xbf}\sim\Xbf$.
\end{corollary}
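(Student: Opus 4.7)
The plan is to apply Theorem~\ref{thm:magic} directly, and essentially all of the work is in identifying the constraint matrices and verifying the decomposition \eqref{eq:hsigma}. For each unordered pair $\{i,j\}$ with $1\le i<j\le m$, let $A_{ij}\in\Zmbb^{2\times m}$ be the matrix whose two rows are the standard basis vectors $e_i^T$ and $e_j^T$; these encode the pairwise conditions \eqref{eq:ex1pairs}. Let $A_*\in\Zmbb^{(m-1)\times m}$ be the matrix whose rows are $e_2^T,\ldots,e_m^T$; this encodes the big condition \eqref{eq:ex1big}. The collection $\{A_{ij}\}_{i<j}\cup\{A_*\}$ supplies the constraints \eqref{eq:pqcon}.

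For Condition 2 of Theorem~\ref{thm:magic}, take $l^*$ to be the index of $A_*$. The block matrix $\bigl[\begin{smallmatrix} F \\ A_*\end{smallmatrix}\bigr]$ is $m\times m$, with rows $F$ and $e_2^T,\ldots,e_m^T$. Because $F$ has all nonzero entries, the first coordinate $f_1$ is nonzero, so $F$ is linearly independent from the span of $e_2^T,\ldots,e_m^T$, and the block matrix has full column rank.

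For Condition 1, note that $F$ is $1\times m$ so $C\in\Rmbb$ is a positive scalar; set $C=1$. The entire decomposition will be carried by the pairwise matrices, so set $\Sigma_*=0$, and for $i<j$ take
\[
\Sigma_{ij}=\begin{pmatrix} f_i^2/(m-1) & f_if_j \\ f_if_j & f_j^2/(m-1)\end{pmatrix}.
\]
Then $A_{ij}^T\Sigma_{ij}A_{ij}$ is the $m\times m$ matrix whose only nonzero block is the $\{i,j\}\times\{i,j\}$ submatrix equal to $\Sigma_{ij}$. Summing over pairs, each off-diagonal position $(i,j)$ with $i\ne j$ is touched only by the pair $\{i,j\}$, contributing $f_if_j$, while each diagonal position $(i,i)$ receives a contribution of $f_i^2/(m-1)$ from each of the $m-1$ pairs containing $i$, totaling $f_i^2$. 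Hence $\sum_{i<j} A_{ij}^T\Sigma_{ij}A_{ij}=F^TF$, which equals $F^TCF$.

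Theorem~\ref{thm:magic} then yields $\tilde{\Xbf}\sim\Xbf$. The only mildly tricky point is recognizing the division of labor between the two types of constraints: the pairwise constraints \eqref{eq:ex1pairs} alone realize the quadratic-form identity and therefore force $F\tilde{\Xbf}=0$ via the argument in the proof of Theorem~\ref{thm:magic}; the large constraint \eqref{eq:ex1big} contributes nothing to that decomposition but is exactly what is needed to supply an $A_{l^*}$ which, together with $F$, has full column rank, so that $\tilde{\Xbf}$ can be recovered from $A_{l^*}\tilde{\Xbf}$. Since $f_1\ne 0$, the single linear relation $F\Xbf=0$ determines $X_1$ from $X_2,\ldots,X_m$, which is precisely why the big constraint on coordinates $2,\ldots,m$ (rather than any other $(m-1)$-subset) suffices.
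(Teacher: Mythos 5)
Your proposal is correct and follows the same strategy as the paper's proof: apply Theorem~\ref{thm:magic}, use the big constraint \eqref{eq:ex1big} for Condition 2 (since $f_1\ne 0$), set $C=1$, and realize $F^TF$ from the pairwise constraints. The paper merely omits the explicit $A_l$ and $\Sigma_l$, asserting that "an arbitrary matrix" can be built from the pairwise conditions; your choice $\Sigma_{ij}=\bigl(\begin{smallmatrix} f_i^2/(m-1) & f_if_j \\ f_if_j & f_j^2/(m-1)\end{smallmatrix}\bigr)$ supplies the missing computation, and the bookkeeping (each off-diagonal touched once, each diagonal touched $m-1$ times) checks out.
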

\begin{proof}
We omit the explicit construction of the $A_l$ matrices corresponding to the conditions \eqref{eq:ex1pairs}, \eqref{eq:ex1big}. The second condition for Theorem~\ref{thm:magic} is satisfied by \eqref{eq:ex1big}, since the linear constraint $F\Xbf=0$ determines $X_1$ given $X_2\cdots X_m$. To verify the first condition, note that from the conditions in \eqref{eq:ex1pairs}, we may construct an arbitrary matrix on the right hand side of \eqref{eq:hsigma} for suitable $\{\Sigma_l\}_{l=1}^L$. Therefore we may simply set $C=1$.
\end{proof}

Corollary~\ref{cor:magic1} considers the case with $m$ variables and $m-1$ degrees of freedom; i.e. a single linear constraint. The following corollary considers a case with $m$ variables and $m-2$ degrees of freedom.

\begin{figure}
\centerline{\begin{psfrags}
\psfrag{x1}[c]{$\tX_1$}
\psfrag{x2}[c]{$\tX_2$}
\psfrag{y1}[c]{$\tX_3$}
\psfrag{y2}[c]{$\tX_4$}
\psfrag{z}[c]{$\tZbf$}
\includegraphics[width=.25\textwidth]{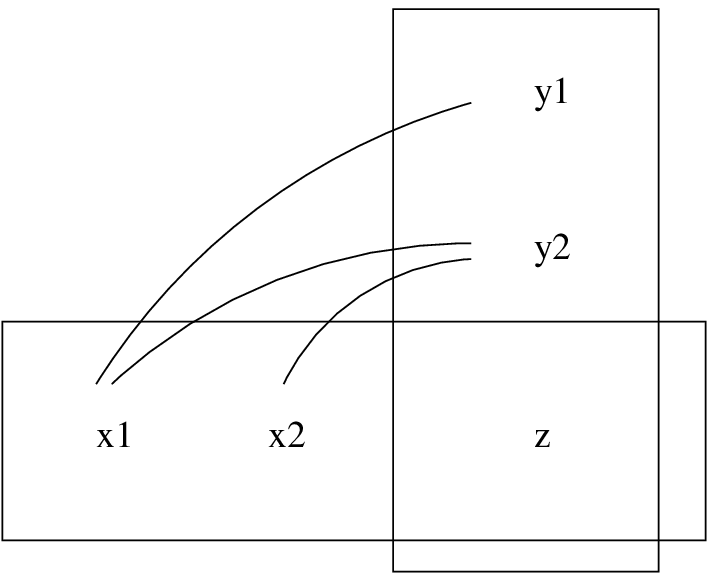}
\end{psfrags}
\hspace{.5in}
\begin{psfrags}
\psfrag{x1}[c]{$\tX_1$}
\psfrag{x2}[c]{$\tX_2$}
\psfrag{y1}[c]{$\tX_3$}
\psfrag{y2}[c]{$\tX_4$}
\psfrag{z}[c]{$\tZbf$}
\includegraphics[width=.25\textwidth]{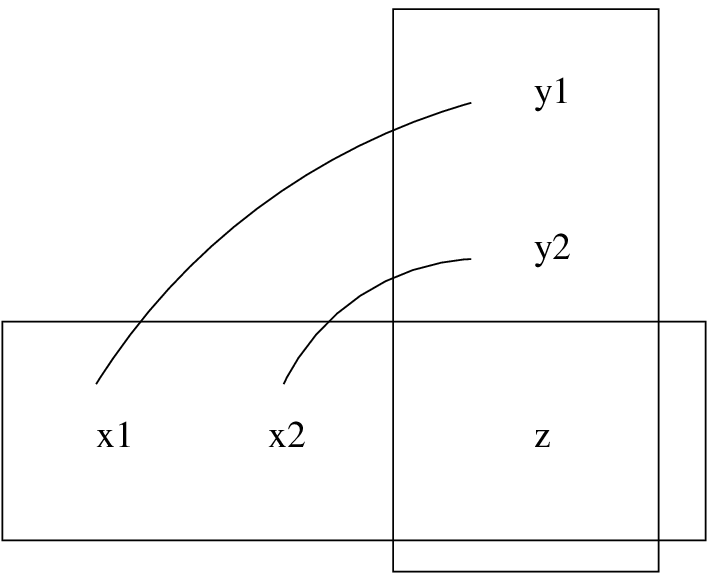}
\end{psfrags}
}
\caption[Diagram of constraints in Corollaries~\ref{cor:magic2} and~\ref{cor:magic3}]{The constraints on the random vector $\tXbf$ in Corollaries~\ref{cor:magic2} (left) and~\ref{cor:magic3} (right). Rectangles represent a constraint on the marginal distribution of all enclosed variables; lines represent pairwise constraints on the two connected variables.}
\label{fig:cormagic}
\end{figure}

\begin{corollary}\label{cor:magic2}
Let $F\in\Zmbb^{2\times m}$ be such that any $2\times 2$ submatrix of $F$ is non-singular. Let $\Xbf$ satisfy $F\Xbf=0$. The non-singular condition on $F$ implies that any $m-2$ variables specify the other two. Assume that $m\ge 4$, and for convenience let $\Zbf=(X_5,\ldots,X_m)$ and $\tZbf=(\tX_5,\ldots,\tX_m)$. If $\tilde{\Xbf}$ satisfies
\begin{align}
(\tX_1,\tX_2,\tZbf)&\sim(X_1,X_2,\Zbf),\label{eq:distcon1}\\
(\tX_3,\tX_4,\tZbf)&\sim(X_3,X_4,\Zbf),\label{eq:distcon2}\\
(\tX_1,\tX_3)&\sim(X_1,X_3),\label{eq:distcon3}\\
(\tX_2,\tX_4)&\sim(X_2,X_4),\label{eq:distcon4}\\
(\tX_1,\tX_4)&\sim(X_1,X_4)\label{eq:distcon5}
\end{align}
then $\tXbf\sim\Xbf$. Fig.~\ref{fig:cormagic} diagrams the constraints on $\tXbf$.
\end{corollary}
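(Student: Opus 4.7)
The plan is to invoke Theorem~\ref{thm:magic} with the five distributional constraints \eqref{eq:distcon1}--\eqref{eq:distcon5} playing the role of the $L=5$ conditions $A_l\tXbf\sim A_l\Xbf$, where each $A_l$ is the selection matrix that picks out the coordinates involved in the $l$th constraint. Verifying the second hypothesis of Theorem~\ref{thm:magic} is immediate: take $l^*$ to be the index of \eqref{eq:distcon1}, so $A_{l^*}$ selects the $m-2$ coordinates $(X_1,X_2,\Zbf)$. Since any $m-2$ of the variables determine the remaining two under $F\Xbf=0$, the stacked matrix $\genfrac{[}{]}{0pt}{0}{F}{A_{l^*}}$ is $m\times m$ with full column rank.

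The substance of the proof is to exhibit a positive definite $C\in\Rmbb^{2\times 2}$ and matrices $\Sigma_l$ satisfying $F^TCF=\sum_{l=1}^{5} A_l^T\Sigma_l A_l$. First I would observe that each summand $A_l^T\Sigma_l A_l$ is supported only on the rows and columns indexed by the variables appearing in the $l$th constraint. Consequently the $(i,j)$ entry of the right hand side can be chosen freely when $\{i,j\}$ lies in the support of at least one constraint, and must vanish otherwise. Enumerating the five supports $\{1,2\}\cup\{5,\ldots,m\}$, $\{3,4\}\cup\{5,\ldots,m\}$, $\{1,3\}$, $\{2,4\}$, $\{1,4\}$, a case check shows that every diagonal entry and every off-diagonal pair is covered \emph{except} the single pair $\{2,3\}$.

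Hence the only restriction imposed on $C$ is the scalar equation $F_{\cdot,2}^T\,C\,F_{\cdot,3}=0$, where $F_{\cdot,i}$ denotes the $i$th column of $F$. By hypothesis every $2\times 2$ submatrix of $F$ is nonsingular, so $F_{\cdot,2}$ and $F_{\cdot,3}$ are linearly independent vectors in $\Rmbb^2$ and therefore form a basis. Letting $T$ be the invertible $2\times 2$ matrix with columns $F_{\cdot,2}$ and $F_{\cdot,3}$ and setting $C=(T^{-1})^T T^{-1}$ yields a positive definite $C$ with $F_{\cdot,2}^T C F_{\cdot,3}=e_1^T e_2=0$. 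With $C$ fixed, every nonzero entry of $F^TCF$ is covered by at least one constraint, so decomposing the matrix into a sum of blocks supported on the five index sets and reading off the corresponding $\Sigma_l$ is straightforward book-keeping. Theorem~\ref{thm:magic} then gives $\tXbf\sim\Xbf$.

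The main obstacle is the covering argument: one must notice that exactly one off-diagonal pair, namely $\{2,3\}$, fails to be covered by any constraint, and then recognize that the nonsingularity hypothesis on $F$ is precisely what allows the corresponding entry of $F^TCF$ to be killed while keeping $C$ positive definite. Once this observation is in hand, the remaining steps are routine applications of Theorem~\ref{thm:magic}.
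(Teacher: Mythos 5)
Your proof is correct, but it takes a genuinely different route from the paper's. The paper proves Corollary~\ref{cor:magic2} by two applications of Corollary~\ref{cor:magic1}: first to the $m-1$ variables $(X_1,X_2,X_4,\Zbf)$ to obtain $(\tX_1,\tX_2,\tX_4,\tZbf)\sim(X_1,X_2,X_4,\Zbf)$, then to $(X_1,X_3,X_4,\Zbf)$ to obtain $(\tX_1,\tX_3,\tX_4,\tZbf)\sim(X_1,X_3,X_4,\Zbf)$, and finally stitching the two $(m-1)$-variate identities along the shared $m-2$ variables $(X_1,X_4,\Zbf)$, which determine the remaining two. You instead invoke Theorem~\ref{thm:magic} once with all five constraints, identify $\{2,3\}$ as the unique uncovered index pair, and construct a positive definite $C$ killing $(F^TCF)_{2,3}$ by setting $C=(T^{-1})^TT^{-1}$ with $T=[F_{\cdot,2}\ F_{\cdot,3}]$, where the $2\times 2$ nonsingularity hypothesis guarantees $T$ is invertible. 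Both arguments are sound and about equally long; the paper's buys reuse of Corollary~\ref{cor:magic1} as a black box, while yours makes the covering structure explicit and is in fact closer in spirit to the paper's own proof of Corollary~\ref{cor:magic3}, where the same ``count the uncovered pairs'' viewpoint is used. Your approach thus has the pedagogical advantage of foreshadowing why dropping \eqref{eq:distcon5} in Corollary~\ref{cor:magic3} leaves two uncovered pairs $\{2,3\}$ and $\{1,4\}$, so that simultaneously zeroing two entries of $F^TCF$ while keeping $C$ positive definite becomes a nontrivial sign condition, namely \eqref{eq:det}.
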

\begin{proof}
We prove Corollary~\ref{cor:magic2} with two applications of Corollary~\ref{cor:magic1}. First, consider the group of variables $(X_1X_2X_4\Zbf)$. These $m-1$ variables are subject to a single linear constraint, as in Corollary~\ref{cor:magic1}. From \eqref{eq:distcon1}, \eqref{eq:distcon4}, and \eqref{eq:distcon5} we have all pairwise marginal constraints, satisfying \eqref{eq:ex1pairs}. Furthermore, \eqref{eq:distcon1} satisfies \eqref{eq:ex1big}. We may therefore apply Corollary~\ref{cor:magic1} to conclude
\beq(\tX_1,\tX_2,\tX_4,\tZbf)\sim(X_1,X_2,X_4,\Zbf).\label{eq:magic2a}\eeq
A similar application of Corollary~\ref{cor:magic1} using \eqref{eq:distcon2}, \eqref{eq:distcon3}, and \eqref{eq:distcon5} allows us to conclude
\beq(\tX_1,\tX_3,\tX_4,\tZbf)\sim(X_1,X_3,X_4,\Zbf).\label{eq:magic2b}\eeq
Observe that \eqref{eq:magic2a} and \eqref{eq:magic2b} share the $m$ variables $(\tX_1,\tX_4,\tZbf)$, which together determine $\tX_2$ and $\tX_3$ in exactly the same way that $(X_1,X_4,\Zbf)$ determine $X_2$ and $X_3$. Therefore we may combine \eqref{eq:magic2a} and \eqref{eq:magic2b} to conclude $\tXbf\sim\Xbf$.
\end{proof}

All five constraints \eqref{eq:distcon1}--\eqref{eq:distcon5} are not always necessary, and we may sometimes apply Theorem~\ref{thm:magic} without \eqref{eq:distcon5}. However, this depends on an interesting additional property of the linear constraint matrix $F$, as stated in the third and final corollary to Theorem~\ref{thm:magic}.

\begin{corollary}\label{cor:magic3}
Let $F\in\Zmbb^{2\times m}$ be such that any $2\times 2$ submatrix of $F$ is non-singular, and let $\Xbf$ satisfy $F\Xbf=0$. In addition, assume
\beq|K_{X_1X_2\Zbf}|\ |K_{X_3X_4\Zbf}|\ |K_{X_1X_3\Zbf}|\ |K_{X_2X_4\Zbf}|<0\label{eq:det}\eeq
where again $K$ is a basis for the null space of $F$, and $K_{\Xbf_S}$ for $S\subset\{1,\ldots,m\}$ is the matrix made up of the rows of $K$ corresponding to the variables $(X_i)_{i\in S}$. If $\tXbf$ satisfies \eqref{eq:distcon1}--\eqref{eq:distcon4} (Fig.~\ref{fig:cormagic} diagrams these constraints), then $\tXbf\sim\Xbf$.
\end{corollary}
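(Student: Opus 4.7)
The plan is to apply Theorem~\ref{thm:magic} with selector matrices $A_1,A_2,A_3,A_4$ picking out the index sets $S_1=\{1,2\}\cup\{5,\ldots,m\}$, $S_2=\{3,4\}\cup\{5,\ldots,m\}$, $S_3=\{1,3\}$ and $S_4=\{2,4\}$, so that $A_l\Xbf$ encodes the marginal constraint in \eqref{eq:distcon1}--\eqref{eq:distcon4}. The rank condition of Theorem~\ref{thm:magic} holds with $l^*=1$: combining $F\Xbf=0$ with the entries $X_1,X_2,\Zbf$ specified by $A_1$ determines $X_3,X_4$ by the non-singularity of the $2\times 2$ submatrix of $F$ on columns $\{3,4\}$, so $\genfrac{[}{]}{0pt}{0}{F}{A_1}$ is invertible.

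The heart of the argument is to produce a positive definite $C\in\Rmbb^{2\times 2}$ and matrices $\Sigma_l$ satisfying $F^TCF=\sum_l A_l^T\Sigma_lA_l$. Each summand $A_l^T\Sigma_lA_l$ embeds the free symmetric matrix $\Sigma_l$ into positions $S_l\times S_l$ of the $m\times m$ grid, so the sum realizes an arbitrary symmetric matrix whose entries outside $\bigcup_l S_l\times S_l$ vanish. A quick combinatorial check shows that the only off-diagonal positions not in any $S_l\times S_l$ are $(1,4)$ and $(2,3)$. Writing the $i$th column of $F$ as $u_i=(a_i,b_i)^T$, it thus suffices to exhibit a positive definite $C$ satisfying $u_1^TCu_4=0$ and $u_2^TCu_3=0$---two linear constraints on the three entries of $C$, so the solutions form a one-parameter family, and I need to decide when this line meets the positive definite cone.

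Since $\mathrm{sgn}(\det C)$ is scale-invariant, one only needs to compute it for a single nonzero solution. I would do so in a normalized basis: since $F$ restricted to columns $\{1,2\}$ is non-singular, some $B\in\Rmbb^{2\times 2}$ sends $u_1,u_2$ to $e_1,e_2$, and replacing $F$ by $BF$ and $C$ by $B^{-T}CB^{-1}$ preserves $\mathrm{sgn}(\det C)$ while multiplying every Pl\"ucker coordinate $\Delta_{ij}=a_ib_j-a_jb_i$ of $F$ by the common factor $\det B$, so the sign of $\Delta_{12}\Delta_{13}\Delta_{24}\Delta_{34}$ is unchanged. In this basis the orthogonality relations express $c_{12}$ and $c_{22}$ in terms of $c_{11}$ and the coordinates of $u_3,u_4$, and a short calculation yields $\mathrm{sgn}(\det C)=\mathrm{sgn}(-\Delta_{12}\Delta_{13}\Delta_{24}\Delta_{34})$.

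The last step is to translate the hypothesis \eqref{eq:det} into a condition on these Pl\"ucker coordinates. For any null-space basis $K$ of $F$ and any pair $i<j$, a standard identity gives $|K_S|=(-1)^{i+j+1}c(K)\Delta_{ij}$ for $S=\{1,\ldots,m\}\setminus\{i,j\}$, where $c(K)\neq 0$ depends only on $K$. Summing the four exponents $i+j+1$ corresponding to the pairs $\{3,4\},\{1,2\},\{2,4\},\{1,3\}$ gives an even integer, so the product in \eqref{eq:det} equals $c(K)^4\Delta_{12}\Delta_{13}\Delta_{24}\Delta_{34}$. Therefore \eqref{eq:det} is equivalent to $\Delta_{12}\Delta_{13}\Delta_{24}\Delta_{34}<0$, which makes $\det C>0$; after rescaling so that $c_{11}>0$, $C$ is positive definite, and Theorem~\ref{thm:magic} gives $\tXbf\sim\Xbf$. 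The main subtlety will be sign bookkeeping---verifying that the cross-product solution of the $2\times 3$ linear system and the null-space-to-Pl\"ucker correspondence combine with no leftover sign factor, so that the hypothesized inequality translates precisely into $\det C>0$ rather than, say, $\det C<0$.
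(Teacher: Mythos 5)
Your proof is correct, and it takes a genuinely different route to the key positive-definiteness step than the paper does. Both proofs begin identically: the same selector matrices $A_1,\ldots,A_4$, the same full-column-rank check via the invertibility of $F$ on columns $\{3,4\}$, and the same observation that the right-hand side of \eqref{eq:hsigma} can realize any symmetric matrix $G$ with $G_{1,4}=G_{2,3}=0$. They diverge from there. The paper stays on the side of $G=F^TCF$: since $G_{1,4}=0$, row one of $G$ must be proportional to the coefficient vector of the unique linear constraint on $(X_1,X_2,X_3,\Zbf)$, which the paper writes as a bordered determinant in the rows of $K$; similarly for row two. Symmetry $G_{1,2}=G_{2,1}$ then fixes the ratio $\beta/\alpha$, and positive definiteness of $C$ is transferred to positive definiteness of the upper-left $2\times 2$ block of $G$ (legitimate because $F_{\{1,2\}}$ is invertible), yielding \eqref{eq:det} directly as the leading-minor condition. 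You instead work with $C$ itself: the vanishing conditions $u_1^TCu_4=u_2^TCu_3=0$ are two linear equations in the three free entries of $C$, you normalize $F$ so that $u_1,u_2$ are the standard basis, solve explicitly, and compute $\mathrm{sgn}(\det C)$ as a rational expression in the coordinates of $u_3,u_4$---which collapses to $\mathrm{sgn}(-\Delta_{12}\Delta_{13}\Delta_{24}\Delta_{34})$. You then translate Pl\"ucker coordinates of $F$ into maximal minors of $K$ by the standard complementary-minor duality $|K_{\{1,\ldots,m\}\setminus\{i,j\}}|=(-1)^{i+j+1}c(K)\Delta_{ij}$, with the four $(-1)$ factors summing to an even power so that \eqref{eq:det} becomes exactly $\Delta_{12}\Delta_{13}\Delta_{24}\Delta_{34}<0$, i.e.\ $\det C>0$. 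The ``subtlety'' you flag about sign bookkeeping does check out: in the normalized basis $\det C=c_{11}^2\,a_4(a_3b_4-a_4b_3)/(b_3b_4^2)$ while $\Delta_{12}\Delta_{13}\Delta_{24}\Delta_{34}=-b_3a_4(a_3b_4-a_4b_3)$, so the signs match with no stray factor (using the $2\times2$-nonsingularity of $F$ to rule out $b_3=0$, $b_4=0$, $a_4=0$). What each approach buys: the paper's bordered-determinant argument reads off the minor condition almost immediately once the structure of $G$ is set up, but it leans on a slightly opaque passage from ``rows of $G$ encode linear constraints'' to the explicit entries $G_{1,1},G_{1,2}$. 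Your version is more elementary and computational---solving a $2\times 3$ linear system in normalized coordinates---and makes the $F$-to-$K$ duality an explicit, isolated step rather than something implicit in the choice of determinant coefficients.
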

\begin{proof}
Either \eqref{eq:distcon1} or \eqref{eq:distcon2} satisfies the second condition in Theorem~\ref{thm:magic}. To verify the first condition, first let $G=\sum_l A_l^T\Sigma_lA_l$. In the four constraints \eqref{eq:distcon1}--\eqref{eq:distcon4}, each pair of variables appears together except for $(X_1,X_4)$ and $(X_2,X_3)$. Therefore, for suitable choices of $\Sigma_l$, we can construct any $G$ satisfying $G_{1,4}=G_{2,3}=G_{3,2}=G_{4,1}=0$. We must show that such a $G$ exists satisfying
\beq F^TCF=G\label{eq:fgcondition}\eeq
 for some positive definite $C$.

We build $G$ row-by-row. By \eqref{eq:fgcondition}, each row of $G$ is a linear combination of rows of $F$; i.e. it forms the coefficients of a linear equality constraint imposed on the random vector $\Xbf$. Since $G_{1,4}$, the first row of $G$ represents a linear constraint on the variables $X_1,X_2,X_3,\Zbf$. Since any $m-2$ variables specify the other two, there is exactly one linear equality constraint on these $m-1$ variables, up to a constant. This constraint can be written as
\beq\left|\begin{array}{cc}X_1&K_{X_1}\\X_2&K_{X_2}\\X_3&K_{X_3}\\\Zbf&K_\Zbf\end{array}\right|=0.\label{eq:detform}\eeq
since the vector $X_1,X_2,X_3,\Zbf$ forms a linear combination of the columns of $K_{X_1,X_2,X_3,\Zbf}$. Hence, the first row of $G$ is a constant multiple of the coefficients in \eqref{eq:detform}. In particular,
\begin{align}
G_{1,1}&=\alpha |K_{X_2X_3\Zbf}|,\\
G_{1,2}&=-\alpha |K_{X_1X_3\Zbf}|\label{eq:H12}
\end{align}
for some constant $\alpha$. Since $G_{2,3}=0$, the second row of $G$ represents the linear constraint on $X_1,X_2,X_4,\Zbf$. Using similar reasoning as above gives
\begin{align}
G_{2,1}&=\beta |K_{X_2X_4\Zbf}|,\label{eq:H21}\\
G_{2,2}&=-\beta |K_{X_1X_4\Zbf}|
\end{align}
for some constant $\beta$. Moreover, by \eqref{eq:fgcondition} $G$ is symmetric, so $G_{1,2}=G_{2,1}$, and by \eqref{eq:H12} and \eqref{eq:H21}
\beq \beta = -\frac{|K_{X_1X_3\Zbf}|}{|K_{X_2X_4\Zbf}|}\alpha.\eeq
Positive definiteness of $C$ is equivalent to positive definiteness of the upper left $2\times 2$ block of $G$, so the conditions we need are
\begin{align}
0&<G_{1,1} = \alpha |K_{X_2X_3\Zbf}|,\label{eq:det11}\\
0&<G_{1,1}G_{2,2}-G_{1,2}G_{2,1} \\
&= \alpha^2\left[\frac{|K_{X_2X_3\Zbf}|\ |K_{X_1X_4\Zbf}|\ |K_{X_1X_3\Zbf}|}{|K_{X_2X_4\Zbf}|}
 - |K_{X_1X_3\Zbf}|^2\right].\label{eq:det22}
\end{align}
We may choose $\alpha$ to trivially satisfy \eqref{eq:det11}, and \eqref{eq:det22} is equivalent to
\beq |K_{X_1X_3\Zbf}|\ |K_{X_2X_4\Zbf}|\Big(|K_{X_2X_3\Zbf}|\ |K_{X_1X_4\Zbf}|  -|K_{X_2X_4\Zbf}|\ |K_{X_1X_3\Zbf}|\Big)>0\eeq
which may also be written as \eqref{eq:det}.
\end{proof}

The necessity of satisfying \eqref{eq:det} in order to apply Theorem~\ref{thm:magic} substantially complicates code design. When building a linear code, one need only worry about the rank of certain matrices; i.e. certain determinants need be nonzero. Here, we see that the signs of these determinants may be constrained as well.

\section{Proof of Theorem~\ref{thm:planar}}\label{sec:planarpf}

To prove Theorem~\ref{thm:planar}, we need to specify a Polytope Code for each network satisfying conditions 1--3 in the statement of the theorem. This involves specifying the linear relationships between various symbols in the network, the comparisons that are done among them at internal nodes, and then how the destination uses the comparison information it receives to decode. We then proceed to prove that the destination always decodes correctly. The key observation in the proof is that the important comparisons that go on inside the network are those that involve a variable that does not reach the destination. This is because those symbols that do reach the destination can be examined there, so further comparisons inside the network do not add anything. Therefore we will carefully route these non-destination symbols to maximize the utility of their comparisons. In particular, we design these paths so that for every node having one direct edge to the destination and one other output edge, the output edge not going to the destination holds a non-destination variable. The advantage of this is that any variable, before exiting the network, is guaranteed to cross a non-destination variable at a node where the two variables may be compared. The existence of non-destination paths with this property depends on the planarity of the network. This is described in much more detail in the sequel.

\emph{Notation:} For an edge $e\in E$, with $e=(i,j)$, where $i,j\in V$, let $\head(e)=i$ and $\tail(e)=j$. For a node $i\in V$, let $\Emsc_\inn(i)$ be the set of edges $e$ with $\tail(e)=i$, and let $\Emsc_\out(i)$ be the set of edges $e$ with $\head(e)=i$. Let $\Nmsc_\inn(i)$ be the set of input neighbors of $i$; that is, the set of $\head(e)$ for each $e\in\Emsc_\inn(i)$. Similarly, let $\Nmsc_\out(i)$ be the set of output neighbors of $i$. For integers $a,b$, let $\Vmsc_{a,b}$ be the set of nodes with $a$ inputs and $b$ outputs. We will sometimes refer to such nodes as $a$-to-$b$. For $l\in\{1,2\}$, let $\bar{l}=2-l$. A \emph{path} is defined as an ordered list of edges $e_1,\ldots,e_k$ satisfying $\tail(e_l)=\head(e_{l+1})$ for $l=1,\ldots,k-1$. The head and tail of a path are defined as $\head(e_1)$ and $\tail(e_k)$ respectively. A node $i$ is said to \emph{reach} a node $j$ if there exists a path with head $i$ and tail $j$. By convention, a node can reach itself.

Consider an arbitrary network satisfying the conditions of Theorem~\ref{thm:planar}. By condition (3), no node has more output edges than input edges. Therefore the min-cut is that between the destination and the rest of the network. Let $M$ be the value of this cut; i.e., the number of edges connected to the destination. We now state a lemma giving instances of the cut-set upper bound on capacity in terms of quantities that make the bound easier to handle than Theorem~\ref{thm:cutset} itself. We will subsequently show that the minimum upper bound given by Lemma~\ref{lemma:cutset2net} is achievable using a Polytope Code; therefore, the cut-set bound gives the capacity.

\begin{lemma}\label{lemma:cutset2net}
For $i,j\in V$, let $d_{i,j}$ be the sum of $|\Emsc_\inn(k)|-|\Emsc_\out(k)|$ for all nodes $k$ reachable from either $i$ or $j$, not including $i$ or $j$. That is, if $k$ is $a$-to-$b$, it contributes $a-b$ to the sum. Recall that this difference is always positive. Let $c_i$ be the total number of output edges from node $i$, and let $e_i$ be the number of output edges from node $i$ that go directly to the destination.
For any distinct pair of nodes $i_1,i_2$,
\beq C\le M-e_{i_1}-e_{i_2}.\label{eq:cutset2netA}\eeq
Moreover, if there is no path between $i_1$ and $i_2$,
\beq C\le M+d_{i_1,i_2}-c_{i_1}-c_{i_2}.\label{eq:cutset2netB}\eeq
\end{lemma}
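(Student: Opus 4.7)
The plan is to derive both inequalities as direct applications of Theorem~\ref{thm:cutset} with traitor set $U = \{i_1, i_2\}$ (of size $2 = 2s$ since $s = 1$ in Theorem~\ref{thm:planar}). Each bound corresponds to a specific choice of cut $A$, and in both cases the verification follows the same two-step template: (i) check that $A$ has no backward edges, then (ii) count the forward edges whose head (source) is not in $U$.

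For \eqref{eq:cutset2netA} I use the trivial cut $A = V \setminus \{D\}$. It has no backward edges because $D$ has no outputs. The $M$ edges into $D$ are the only forward edges, and $e_{i_1} + e_{i_2}$ of them leave $U$, so the bound follows immediately.

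For \eqref{eq:cutset2netB}, I take $A = V \setminus (R \cup \{D\})$, where $R$ is the set of nodes other than $i_1, i_2, D$ that are reachable from $i_1$ or $i_2$. The source $S$ is reachable from no other node in the DAG, so $S \in A$; the no-path hypothesis places $i_1, i_2 \in A$. The cut has no backward edges: any edge from $R$ into $A$ would extend a path from $i_1$ or $i_2$ into a node of $A$, contradicting the definition of $R$, and $D$ has no outputs. With no backward edges, every edge incident to $A^c$ is either internal to $A^c$ or forward, and the internal edges are exactly the outputs of $A^c$, so the forward-edge count collapses to the telescoping sum $\sum_{k \in A^c}(|\Emsc_\inn(k)| - |\Emsc_\out(k)|) = d_{i_1, i_2} + M$, the $M$ being $D$'s contribution. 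The no-path hypothesis also forces every output neighbor of $i_1$ (and of $i_2$) to lie in $R \cup \{D\} = A^c$, so all $c_{i_1} + c_{i_2}$ output edges of $U$ cross the cut. Subtracting yields the claimed bound.

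The only delicate point—and really the only place to slip up—is the interpretation of $d_{i_1, i_2}$: for the arithmetic to line up, the sum must be read as ranging over reachable nodes other than $i_1$, $i_2$, \emph{and} $D$, with $D$'s deficit of $M$ recorded separately as the explicit $M$ term in \eqref{eq:cutset2netB}. Once this reading is fixed, both inequalities follow cleanly from Theorem~\ref{thm:cutset} with the two cuts $V \setminus \{D\}$ and $V \setminus (R \cup \{D\})$ chosen above.
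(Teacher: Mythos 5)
Your proof is correct and follows essentially the same path as the paper: both inequalities are obtained from Theorem~\ref{thm:cutset} with $U=\{i_1,i_2\}$ and the two cuts $A=V\setminus\{D\}$ and $A=\{k : k \text{ not reachable from } i_1 \text{ or } i_2\}\cup\{i_1,i_2\}$ (which, when $D$ is reachable, coincides with your $V\setminus(R\cup\{D\})$), with the forward-edge count for the second cut reduced to the telescoping sum $\sum_{k\in A^c}(|\Emsc_\inn(k)|-|\Emsc_\out(k)|)$ using the no-backward-edge property. Your observation that $D$ must be excluded from the sum defining $d_{i_1,i_2}$ (with its deficit $M$ appearing as the separate $M$ term) is exactly the right reading and is left implicit in the paper's wording.
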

\begin{proof}
Applying Theorem~\ref{thm:cutset} with $A=V\setminus\{D\}$, $T=\{i_1,i_2\}$ immediately gives \eqref{eq:cutset2netA}. To prove \eqref{eq:cutset2netB}, we apply Theorem~\ref{thm:cutset} with $T=\{i_1,i_2\}$, and
\beq A=\{k\in V:k\text{ is not reachable from $i_1$ or $i_2$}\}\cup\{i_1,i_2\}.\eeq
Observe that there are no backwards edges for the cut $A$, because any node in $A^c$ is reachable from either $i_1$ or $i_2$, so for any edge $(j,k)$ with $j\in A^c$, $k$ is also reachable by from $i_1$ or $i_2$, so $k$ is also not in $A$. Therefore we may apply Theorem~\ref{thm:cutset}. Since all output neighbors of $i_1$ and $i_2$ are not in $A$, each output edge of $i_1$ and $i_2$ crosses the cut. Hence \eqref{eq:cutset} becomes
\beq C\le|\{e\in E:\head(e)\in A,\ \tail(e)\notin A\}|-c_1-c_2.\label{eq:cbound1}\eeq
Since no node in the network has more output edges than input edges, the difference between the first term in \eqref{eq:cbound1}---the number of edges crossing the cut---and $M$ is exactly the sum of $|\Emsc_\inn(k)|-|\Emsc_\out(k)|$ for all $k\in A^c$. Hence
\beq|\{e\in E:\head(e)\in A,\ \tail(e)\notin A\}|-M=d_{i_1,i_2}.\label{eq:eamdiff}\eeq
Combining \eqref{eq:cbound1} with \eqref{eq:eamdiff} gives \eqref{eq:cutset2netB}.
\end{proof}

Next, we show that we may transform any network satisfying the conditions of Theorem~\ref{thm:planar} into an equivalent one that is planar, and made up of just 2-to-2 nodes and 2-to-1 nodes. We will go on to show that the upper bound provided by Lemma~\ref{lemma:cutset2net} is achievable for any such network, so it will be enough to prove that a transformation exists that preserves planarity, does not reduce capacity, and does not change the bound given by Lemma~\ref{lemma:cutset2net}.

We first replace any $a$-to-$b$ node $i$ with a cascade of $a-b$ 2-to-1 nodes followed by a $b$-to-$b$ node. This transformation is illustrated in Fig.~\ref{fig:transform}. Denote the $b$-to-$b$ node in the transformation $i^*$. Since no node in the original network has more than two output edges, the resulting network contains only 1-to-1 nodes, 2-to-2 nodes, and 2-to-1 nodes. We will shortly argue that the 1-to-1 nodes may be removed as well. Certainly these transformations maintain the planarity of the network. Moreover, any rate achievable on the transformed network is also achievable on the original network. This is because if node $i$ is transformed via this operation into several nodes, any coding operation performed by these nodes can certainly be performed by node $i$. Additionally, the traitor taking control of node $i$ in the original network does exactly as much damage as the traitor taking control of $i^*$ in the transformed network, since it controls all edges sent to other nodes. Now consider the minimum upper bound given by Lemma~\ref{lemma:cutset2net} after this transformation. The only nodes with positive $e_j$ values will be $i^*$ nodes, and $e_{i^*}=e_i$. Hence \eqref{eq:cutset2netA} cannot change. In \eqref{eq:cutset2netB}, if we take $i_1^*$ and $i_2^*$, then the bound is the same in the transformed network. Taking one of the 2-to-1 nodes instead of a $i^*$ node cannot result in a lower bound, because they have no more output edges, so no higher $c$ values, and no fewer reachable nodes with fewer outputs than inputs, so no smaller $d$ values. Therefore, the minimal bound given by \eqref{eq:cutset2netB} for the transformed network is the same as that of the original network. Moreover, in the transformed network $d_{i_1,i_2}$ is equal simply to the number of 2-to-1 nodes reachable from $i_1$ or $i_2$ not including $i_1,i_2$.

\begin{figure}
\begin{psfrags}
\psfrag{i}[][][.65]{$i$}
\psfrag{is}[][][.65]{$i^*$}
\centerline{\includegraphics[width=3.3in]{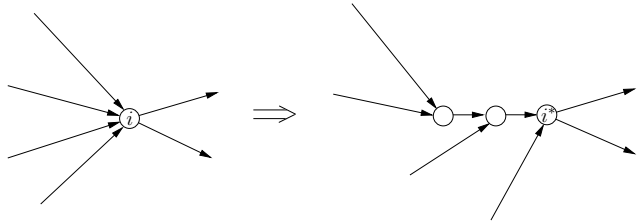}}
\end{psfrags}
\caption[Transformation of a 4-to-2 node]{An illustration of the transformation from a 4-to-2 node to an equivalent set of 2-to-1 and 2-to-2 nodes.}
\label{fig:transform}
\end{figure}

We may additionally transform the network to remove 1-to-1 nodes, simply be replacing the node and the two edges connected to it by a single edge. The traitor can always take over the preceding or subsequent node and have at least as much power. The only exception is when the 1-to-1 node is connected only to the source and destination. In this case, instead of removing the node, we may add a additional edge to it from the source, turning it into a 2-to-1 node. Such a transformation does not change the capacity, nor the planarity or the Lemma~\ref{lemma:cutset2net} bounds.

We also assume without loss of generality that all nodes in the network are reachable from the source. Certainly edges out of these nodes cannot carry any information about the message, so we may simply discard this portion of the network, if it exists, without changing the capacity.

We will show that the smallest bound given by Lemma~\ref{lemma:cutset2net} is achievable using a Polytope Code. If we take $i_1$ and $i_2$ to be two nodes with at least one direct link to the destination, \eqref{eq:cutset2netA} gives that the capacity is no more than $M-2$. Moreover, since $e_i\le c_i\le 2$ for any node $i$, neither \eqref{eq:cutset2netA} nor \eqref{eq:cutset2netB} can produce a bound less than $M-4$. Therefore the minimum bound given by Lemma~\ref{lemma:cutset2net} can take on only three possible values: $M-4,M-3,M-2$. It is not hard to see that $M-4$ is trivial achievable; indeed, even with a linear code. Therefore the only interesting cases are when the cut-set bound is $M-3$ or $M-2$. We begin with the latter, because the proof is more involved, and contains all the necessary parts to prove the $M-3$ case. The $M-3$ proof is subsequently given in Section~\ref{subsection:m3proof}.

Assume that the right hand sides of \eqref{eq:cutset2netA} and \eqref{eq:cutset2netB} are never smaller than $M-2$. We describe the construction of the Polytope Code to achieve rate $M-2$ in several steps. The correctness of the code will be proved in Lemmas~\ref{lemma:labels}--\ref{lemma:pairwise}, which are stated during the description of the construction process. These Lemmas are then proved in Sections~\ref{subsection:labels}--\ref{subsection:pairwise}.

\emph{1) Edge Labeling:} We first label all the edges in the network except those in $\Emsc_{\inn}(D)$. These labels are denoted by the following functions
\begin{align}
\phi:E\setminus\Emsc_{\inn}(D)&\to\Vmsc_{2,1}\label{eq:phidef}\\
\psi:E\setminus\Emsc_{\inn}(D)&\to\{0,1\}.\label{eq:psidef}
\end{align}
For a 2-to-1 node $v$, let $\Lambda(v)$ be the set of edges $e$ with $\phi(e)=v$. The set $\Lambda(v)$ represents the edges carrying symbols that interact with the non-destination symbol that terminates at node $v$. The set of edges with $\phi(e)=v$ and $\psi(e)=1$ represent the path taken by the non-destination symbol that terminates at node $v$. The following Lemma states the existence of labels $\phi,\psi$ with the necessary properties.
\begin{lemma}\label{lemma:labels}
There exist functions $\phi$ and $\psi$ with the following properties:
\begin{description}
\item[\textbf{A}] The set of edges $e$ with $\phi(e)=v$ and $\psi(e)=1$ form a path.
\item[\textbf{B}] If $\phi(e)=v$, then either $\tail(e)=v$ or there is an edge $e'$ with $\head(e')=\tail(e)$ and $\phi(e')=v$.
\item[\textbf{C}] For every 2-to-2 node $i$ with output edges $e_1,e_2$, either $\psi(e_1)=1$, $\psi(e_2)=1$, or $\phi(e_1)\neq\phi(e_2)$.
\end{description}
\end{lemma}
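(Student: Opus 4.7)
The plan is to build $\phi$ and $\psi$ explicitly from the planar embedding and then verify each of the three properties. I first fix a planar embedding of $G$ with $S$ and $D$ on the outer face, so that every internal node inherits a cyclic order of its incident edges. At each 2-to-2 node I define a non-crossing pairing between its two inputs and its two outputs, and at each 2-to-1 node I pair one input with the unique output non-crossing, marking the remaining input as ``terminal.'' These pairings decompose $E\setminus\Emsc_{\inn}(D)$ (together with $\Emsc_{\inn}(D)$) into edge-disjoint directed traces, each starting at a source-output edge and ending either at $D$ or at a 2-to-1 node. Because every 2-to-1 node has exactly one terminal input, a flow-conservation count shows that the number of traces terminating at 2-to-1 nodes is exactly $|\Vmsc_{2,1}|$, with one trace ending at each such node, and the remaining $M$ traces entering $D$.

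Next, for each trace ending at a 2-to-1 node $v$, I set $\psi=1$ and $\phi=v$ on every edge of the trace. This immediately yields Property A, since the $\psi=1$ edges labeled $v$ form a single directed path ending at $v$. For each destination-bound trace $T$, I propagate a label $\phi(T)\in\Vmsc_{2,1}$ backward from its end: if $T$ ends by entering $D$ through a 2-to-1 node $u$ whose only output is into $D$, then the second-to-last edge of $T$ is forced to carry $\phi=u$ (since $\tail=u$, no forward extension is required); if instead $T$ ends at $D$ through a 2-to-2 node $u$ with non-$D$ output $e^*$, then the second-to-last edge of $T$ inherits the label of the trace containing $e^*$, and this label propagates backward along $T$ as a constant. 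In either case, Property B follows directly from the pairing: at every interior node of $T$, the next edge of $T$ carries the same label and supplies the required forward extension.

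The remaining and most delicate requirement is Property C. The only nontrivial case is a 2-to-2 node $u$ whose two outputs are both interior and whose outgoing edges $e_1,e_2$ both lie on destination traces $T_1,T_2$. I would argue by contradiction: if the propagation forced $\phi(T_1)=\phi(T_2)=v$, then the $\Lambda(v)$-subgraph contains $u$ together with both outgoing branches of $u$ and the $\psi=1$ path ending at $v$. Because traces are non-crossing in the planar embedding and $T_1,T_2$ end at distinct edges into $D$, I should be able to locate two 2-to-2 nodes $i_1,i_2$ far enough downstream of $u$ (one on $T_1$, one on $T_2$) with no directed path between them. Applying the cut-set bound \eqref{eq:cutset2netB} from Lemma~\ref{lemma:cutset2net} to the pair $(i_1,i_2)$ then produces a bound strictly below $M-2$, contradicting the standing hypothesis that the minimum Lemma~\ref{lemma:cutset2net} bound is $M-2$. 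Hence $T_1$ and $T_2$ cannot share a label, and Property C holds.

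I expect the main obstacle to be the case analysis for Property C: pinning down the pair $(i_1,i_2)$ and accounting for the many ways destination traces can interact with 2-to-1 nodes and with each other through the pairing. The planar ordering of the 2-to-1 nodes and of the edges into $D$, together with the forward-reachability structure forced by the non-crossing pairing, should make the required bookkeeping tractable, but this is where the delicate work of the proof is concentrated.
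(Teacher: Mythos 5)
Your approach --- decomposing $E$ into traces via non-crossing pairings at internal nodes and then labeling trace-by-trace --- is a genuinely different route from the paper's, but it has several gaps that I don't think can be patched without importing the paper's key ideas.

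First, the non-crossing pairing is not well-defined at 2-to-1 nodes: with only three incident edges, no pairing can ``cross,'' so which input you mark terminal is an arbitrary choice. But this choice is exactly what determines which trace becomes the non-destination path into $v$, so it cannot be left unspecified. Second, and more seriously, Property C imposes hard structural requirements on the non-destination paths that a generic pairing will not satisfy. For every 2-to-2 node $i$ that can reach only one 2-to-1 node $v$ (the set $\Gamma_v$ in the paper's notation), both of $i$'s output edges are forced to carry $\phi=v$, so one of them \emph{must} lie on the non-destination path to $v$; similarly, for every node $i\in\Nmsc_\inn(D)$, the non-$D$ output edge must have $\psi=1$ since the edge to $D$ carries no $\phi$ label. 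In other words, the non-destination paths must be threaded through all of $\Gamma_v$ and through every node of $\Nmsc_\inn(D)$. The paper's entire construction is organized around producing paths with exactly this property, and the use of planarity is concentrated there: one embeds the graph with $S$ on the outer face, reads off the circular order $u_1,\dots,u_M$ of $\Nmsc_\inn(D)$ around $D$, and proves via several Jordan-curve arguments that the set of $u_l$'s reachable from any given $u_l$ is a contiguous arc. That contiguity is what lets the paper string the non-destination paths through $\Nmsc_\inn(D)$ and $\Gamma_v$. Your proposal never uses this; the non-crossing pairing is a local rule and gives no control over which nodes the non-destination traces visit.

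Third, the proposed contradiction argument for Property C does not go through. Suppose you do find $i_1,i_2$ on $T_1,T_2$ with no directed path between them. Applying \eqref{eq:cutset2netB} gives $C\le M+d_{i_1,i_2}-4$, which is below $M-2$ only if $d_{i_1,i_2}\le 1$, i.e.\ only if $v$ is the sole 2-to-1 node reachable from $i_1$ or $i_2$. But then $i_1,i_2\in\Gamma_v$, and $\Gamma_v$ is a path (this is itself established by a cut-set argument), so a directed path \emph{does} exist between them --- contradicting the hypothesis you needed. If instead $i_1$ or $i_2$ can reach more than one 2-to-1 node, $d_{i_1,i_2}\ge 2$ and the bound is $\ge M-2$, so there is no contradiction. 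Note also that the paper never argues Property C by contradiction: after the non-destination paths are placed, the remaining edges are labeled by an explicit greedy algorithm whose invariant is precisely that a node never ends up with two same-labeled, non-$\psi$ output edges. You'd do well to study that two-stage structure (place the $\psi=1$ paths using planar contiguity, then label the rest constructively) rather than trying to derive Property C from a pairing that has no reason to respect it.
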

Note that if property (B) holds, $\Lambda(v)$ is a union of paths ending at $v$. From property (A), the edges on one of these paths satisfy $\psi(e)=1$.

\emph{2) Internal Node Operation:} Assume that $\phi$ and $\psi$ are defined to satisfy properties (A)--(C) in Lemma~\ref{lemma:labels}. Given these labels, we will specify how internal nodes in the network operate. Every edge in the network will hold a symbol representing a linear combination of the message, as well as possibly some comparison bits. We also define a function
\beq\rho:E\to\{1,\ldots,|\Emsc_\out(S)|\}\eeq
that will serve as an accounting tool to track symbols as they pass through the network. We begin by assigning distinct and arbitrary values to $\rho(e)$ for all $e\in\Emsc_\out(S)$ ($\rho$ therefore constitutes an ordering on $\Emsc_\out(S)$). Further assignments of $\rho$ will be made recursively. This will be made explicit below, but if a symbol is merely forwarded, it travels along edges with a constant $\rho$. When linear combinations occur at internal nodes, $\rho$ values are manipulated, and $\rho$ determine exactly how this is done.

For every node $i$ with 2 input edges, let $f_1,f_2$ be these edges. If $i$ is 2-to-2, let $e_1,e_2$ be its two output edges; if it is 2-to-1, let $e$ be its output edge. If $\phi(f_1)=\phi(f_2)$, then node $i$ compares the symbols on $f_1$ and $f_2$. If node $i$ is 2-to-2, then $\phi(e_l)=\phi(f_1)$ for either $l=1$ or $2$. Node $i$ transmits its comparison bit on $e_l$. If node $i$ is 2-to-1, then it transmits its comparison bit on $e$. All 2-to-2 nodes forward all received comparison bits on the output edge with the same $\phi$ value as the input edge on which the bit was received. All 2-to-1 nodes forward all received comparison bits on its output edge.

We divide nodes in $\Vmsc_{2,2}$ into the following sets. The linear transformation performed at node $i$ will depend on which of these sets it is in.
\begin{align}
\Wmsc_1&=\{i\in\Vmsc_{2,2}:\psi(f_1)=\psi(f_2)=0,\phi(f_1)\neq\phi(f_2)\}\\
\Wmsc_2&=\{i\in\Vmsc_{2,2}:\psi(f_1)=\psi(f_2)=0,\phi(f_2)=\phi(f_2)\}\\
\Wmsc_3&=\{i\in\Vmsc_{2,2}:\psi(f_1)=1\text{ or }\psi(f_2)=1\}
\end{align}
We will sometimes refer to nodes in $\Wmsc_2$ as \emph{branch nodes}, since they represent branches in $\Lambda(\phi(f_1))$. Moreover, branch nodes are significant because a failed comparison at a branch node will cause the forwarding pattern within $\Lambda(\phi(f_1))$ to change. For an edge $e$, $X_e$ denotes the symbol transmitted on $e$. The following gives the relationships between these symbols, which are determined by internal nodes, depending partially on the comparison bits they receive. For each node $i$, the action of node $i$ depends on which set it falls in as follows:
\begin{itemize}
\item $\Wmsc_1$: Let $l$ be such that $\phi(e_l)=\phi(f_1)$. The symbol on $f_1$ is forwarded to $e_l$, and the symbol on $f_2$ is forwarded onto $e_{\bar{l}}$. Set $\rho(e_l)=\rho(f_1)$, and $\rho(e_{\bar{l}})=\rho(f_2)$.
\item$\Wmsc_2$: Let $l$ be such that $\phi(e_l)=\phi(f_1)=\phi(f_2)$. Let $l'$ be such that $\rho(f_{l'})<\rho(f_{\bar{l'}})$. We will show in Lemma~\ref{lemma:tree} that our construction is such that $\rho(f_1)\ne\rho(f_2)$ at all nodes, so $l'$ is well defined. If neither $f_1$ nor $f_2$ hold a failed comparison bit, the output symbols are
    \begin{align}
    X_{e_l}&=\gamma_{i,1}X_{f_1}+\gamma_{i,2}X_{f_2}\label{eq:xel}\\
    X_{e_{\bar{l}}}&=X_{f_{l'}}\label{eq:xebarl}\end{align}
    where coefficients $\gamma_{i,1},\gamma_{i,2}$ are nonzero integers to be chosen later. Set output $\rho$ values to
    \begin{align}
    \rho(e_l)&=\rho(f_{\bar{l'}})\\
    \rho(e_{\bar{l}})&=\rho(f_{l'}).\end{align}
    Note that the symbol on the input edge with smaller $\rho$ value is forwarded without linear combination. If the input edge $f_{l'}$ reports a failed comparison anywhere previously in $\Lambda(\phi(f_1))$, then \eqref{eq:xebarl} changes to
    \beq X_{e_{\bar{l}}}=X_{f_{\bar{l'}}}.\eeq
\item$\Wmsc_3$: Let $l$ be such that $\psi(f_l)=1$, and $l'$ be such that $\psi(e_{l'})=1$ and $\phi(e_{l'})=\phi(f_l)$. The symbol on $f_l$ is forwarded to $e_{l'}$, and the symbol on $f_{\bar{l}}$ is forwarded to $e_{\bar{l}'}$, with the following exception. If $\phi(f_1)=\phi(f_2)$ and there is a failed comparison bit sent from $f_{\bar{l}}$, then the forwarding swaps: the symbol on $f_l$ is forwarded to $e_{\bar{l}'}$, and the symbol on $f_{\bar{l}}$ is forwarded to $e_{l'}$. Set $\rho(e_{l'})=\rho(f_l)$ and $\rho(e_{\bar{l}'})=\rho(f_{\bar{l}})$. Again, $\rho$ is consistent along forwarded symbols, but only when all comparisons succeed.
\item$\Vmsc_{2,1}$: Let $l$ be such that $\psi(f_l)=1$. The symbol from $f_{\bar{l}}$ is forwarded on $e$, unless there is a failed comparison bit sent from $f_{\bar{l}}$, in which case the symbol from $f_l$ is forwarded on $e$. Set $\rho(e)=\rho(f_{\bar{l}})$.
\end{itemize}
See Fig.~\ref{fig:lambdaex} for an illustration of the linear transformations performed at internal nodes and how they change when a comparison fails. The following Lemma gives some properties of the internal network behavior as prescribed above.
\begin{lemma}\label{lemma:tree}
The following hold:
\begin{enumerate}
\item For any integer $a\in\{1,\ldots,|\Emsc_{\out}(S)|\}$, the set of edges with $e$ with $\rho(e)=a$ form a path (we refer to this in the sequel as the \emph{$\rho=a$ path}). Consequently, there is no node $i$ with input edges $f_1,f_2$ such that $\rho(f_1)=\rho(f_2)$.
\item If there are no failed comparisons that occur in the network, then the linear transformations are such that the decoder can decode any symbol in the network except those on non-destination paths.
\item Suppose a comparison fails at a branch node $k$ with input edges $f_1,f_2$ with $v=\phi(f_1)=\phi(f_2)$. Assume without lack of generality that $\rho(f_1)<\rho(f_2)$. The forwarding pattern within $\Lambda(v)$ changes such that symbols sent along the $\rho=\rho(f_2)$ path are not decodable at the destination, but what was the non-destination symbol associated with $v$ is decodable.
\end{enumerate}
\end{lemma}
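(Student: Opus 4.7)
I would prove this by forward induction along a topological order of the edges. First I would verify, by inspecting the four node-type cases, that at every non-source node each output edge inherits its $\rho$-value from exactly one input edge; consequently, starting from any source edge $e_0$, the set of edges that inherit $\rho(e_0)$ forms an unbranching chain. The remaining point is to rule out the merging of two $\rho$-chains originating from distinct source edges. I would maintain the invariant that at every processed node the two input edges have distinct $\rho$-values. The invariant is preserved at 2-to-2 nodes because both input $\rho$-values are passed to the two outputs (with roles possibly swapped at $\Wmsc_2$ nodes), and at 2-to-1 nodes because only one input $\rho$-value survives and the other is discarded without ever being copied. Since the source output edges are initialized with distinct $\rho$-values, the invariant never fails, and each $\rho = a$ set is exactly a path.

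\textbf{Part 2.} The plan is to choose the nonzero integer coefficients $\gamma_{i,1},\gamma_{i,2}$ at $\Wmsc_2$ nodes so that, under the default rules (those applicable when no comparison fails), the linear map from the $m=|\Emsc_\out(S)|$ source symbols to the $M$ symbols received at $\Emsc_\inn(D)$, combined with the polytope structure, uniquely determines the source vector. Such a choice exists by the same reasoning that underlies standard linear network coding on a graph of min-cut $M$: the relevant transfer determinants are polynomials in the $\gamma$'s that are not identically zero, so generic nonzero integer values suffice, and by the remark following Proposition~\ref{prop:entropy} any integer solution carries over to the Polytope Code. Once the decoder has recovered the source vector, $X_e$ for any $e$ on a destination $\rho$-path is then reconstructed by forward simulation from the source through the internal-node rules. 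The main obstacle is the invertibility argument; it reduces to checking that the transfer matrix to $D$, viewed as a polynomial in the $\gamma$'s, has a nonzero determinant on some suitable $(M-2)$-dimensional subspace of source symbols, which is possible because the assumed cut-set value is at least $M-2$.

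\textbf{Part 3.} I would trace the failed-comparison bit as it propagates from the branch node $k$ along the $\psi=1$ path inside $\Lambda(v)$ and analyze the cumulative effect of the swap operations at $\Wmsc_3$ nodes and at $v$. At $k$ the default rule gives $X_{e_{\bar l}} = X_{f_{l'}} = X_{f_1}$ (the symbol that the network had scheduled to become the non-destination symbol at $v$) and $X_{e_l}=\gamma_{k,1}X_{f_1}+\gamma_{k,2}X_{f_2}$. Each subsequent $\Wmsc_3$ node on the $\psi=1$ chain inside $\Lambda(v)$, upon seeing the failed bit on its $f_{\bar l}$ input, swaps its forwarding, and the terminal 2-to-1 node $v$ likewise substitutes $X_{f_l}$ for $X_{f_{\bar l}}$ on its output. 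Unrolling this chain shows that $X_{f_1}$ is routed onwards toward $D$ along the edge out of $v$, while what was the $\rho=\rho(f_2)$ symbol is redirected into the dead end at $v$ and never reaches $D$. Consequently the non-destination symbol associated with $v$ becomes decodable, and the $\rho=\rho(f_2)$ symbol does not. The main obstacle is to verify that this rewiring swaps exactly one column of the destination transfer matrix (the column associated with the $\rho=\rho(f_2)$ source symbol is replaced by that of the non-destination source symbol) so that the invertibility argument of Part~2 still yields a full-rank decoding map, now on the new set of live destination symbols; this step crucially uses the structural separation of $\rho$-paths guaranteed by Part~1.
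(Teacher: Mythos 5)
Your Part 1 is essentially the argument the paper gives, just more spelled out: the key observation in both is that the two input $\rho$-values at any 2-to-2 node reappear as the two output $\rho$-values, and at a 2-to-1 node one input $\rho$-value survives while the other is discarded; following forward from each source edge gives the path.

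Your Part 2 takes a genuinely different route. The paper's argument is local: every 2-to-2 node's map from $(X_{f_1},X_{f_2})$ to $(X_{e_1},X_{e_2})$ is invertible (forwarding at $\Wmsc_1,\Wmsc_3$, and $(X_{e_l},X_{e_{\bar l}})=(\gamma_{i,1}X_{f_1}+\gamma_{i,2}X_{f_2},\,X_{f_{l'}})$ with nonzero $\gamma$'s at $\Wmsc_2$), so the destination recursively inverts these node maps backwards; the only symbols lost are those discarded at 2-to-1 nodes, which are by construction the non-destination symbols. You instead argue via a global transfer-matrix/genericity claim. That brings in more machinery than needed here — nonzero $\gamma$'s suffice for Part 2, and the paper reserves the genericity argument (which it does use) for the later step of showing $\rank(K_{D\perp j})\ge M-2$ in the proof of Lemma~\ref{lemma:pairwise}. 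Your route is serviceable, but make sure the global claim is stated precisely: the transfer map from source to $\Emsc_\inn(D)$ is not square ($|\Emsc_\out(S)|=M+|\Vmsc_{2,1}|>M$); what you need is that its restriction to the $(M-2)$-dimensional code subspace is injective, and that injectivity here actually follows for any nonzero $\gamma$'s from the local invertibility, not only generically.

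Part 3 is in the right spirit (trace the symbol carrying the failed bit through $\Lambda(v)$ until it is discarded at $v$), but your trace has two inaccuracies. First, you call $X_{f_1}$ \textquotedblleft the symbol that the network had scheduled to become the non-destination symbol at $v$\textquotedblright; it is not. At a branch node $k\in\Wmsc_2$, both inputs $f_1,f_2$ have $\psi=0$, so neither is on the non-destination path; $X_{f_1}$ (the smaller-$\rho$ input) is simply forwarded outside $\Lambda(v)$ under the default rule. Second, the failed comparison bit does not immediately propagate along the $\psi=1$ chain. It is placed on the output $e_l$ of $k$ with $\phi(e_l)=v$ (which may well have $\psi(e_l)=0$) and then travels along $\phi=v$ edges, possibly through additional $\Wmsc_2$ branch nodes (where the rule change $X_{e_{\bar l}}=X_{f_{\bar l'}}$ keeps the contaminated symbol inside $\Lambda(v)$ even when it has the smaller $\rho$ value), before reaching a $\Wmsc_3$ node holding the actual non-destination symbol. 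Only there does the swap push the non-destination symbol out of $\Lambda(v)$ and the contaminated symbol onto the $\psi=1$ path toward $v$, where it is dropped. Your last paragraph about swapping a column of the destination transfer matrix is a reasonable interpretation of the effect, but the paper argues decodability/non-decodability directly by following the symbol and does not need to invoke the column-swap picture.
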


\newcommand{\pscc}{.6}

\begin{figure}
\begin{psfrags}
\psfrag{a}[][][\pscc]{$a$}
\psfrag{b}[][][\pscc]{$b$}
\psfrag{c}[][][\pscc]{$c$}
\psfrag{d}[][][\pscc]{$d$}
\psfrag{e}[][][\pscc]{$e$}
\psfrag{f}[][][\pscc]{$f$}
\psfrag{g}[][][\pscc]{$g$}
\psfrag{h}[][][\pscc]{$h$}
\psfrag{i}[][][\pscc]{$i$}
\psfrag{ip}[][][\pscc]{$i\ [d,e,f]$}
\psfrag{ab}[][][\pscc]{$a,b$}
\psfrag{abc}[][][\pscc]{$a,b,c$}
\psfrag{de}[][][\pscc]{$d,e$}
\psfrag{dep}[][][\pscc]{$d,e\ [f]$}
\psfrag{def}[][][\pscc]{$d,e,f$}
\psfrag{defp}[][][\pscc]{$d,e,f\ [i]$}
\psfrag{defg}[][][\pscc]{$d,e,f,g$}
\psfrag{v}[][][.55]{$v$}
\centerline{\includegraphics[width=3.3in]{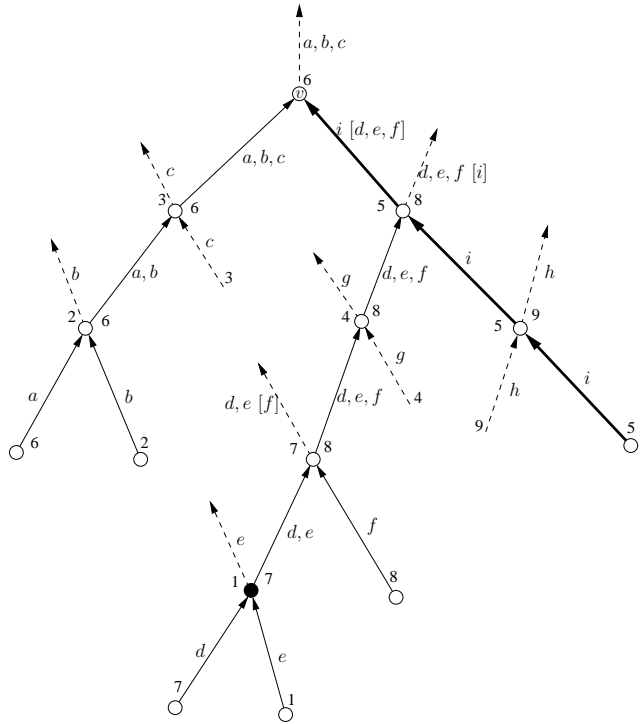}}
\end{psfrags}
\caption[An example of the linear transformations performed in the subnetwork made up of all edges with the same label]{An example of the linear transformations performed in $\Lambda(v)$ for some $v$ (labeled as such). Solid edges denote $\phi(e)=v$, dashed edges denote $\phi(e)\ne v$. Thick edges denote $\psi(e)=1$. Near the head of each edge is the corresponding $\rho$ value. Also shown is the symbol transmitted along that edge, given initial symbols $a$--$i$ at the furthest upstream edges in the network. When several symbols are written on an edge, this indicates that the edge carries a linear combination of those symbols. The symbols indicated in brackets are those carried by the edges when the comparison at the indicated black node fails. Symbols on edges labeled without brackets do not change when the comparison fails.}
\label{fig:lambdaex}
\end{figure}

\emph{3) MDS Code Construction:} The rules above explain how the symbols are combined and transformed inside the network. In addition, when the initial set of symbols are sent into the network from the source, they are subject to linear constraints. We now describe exactly how this is done. Assume that no comparisons fail in the network, so the linear relationships between symbols are unmodified. For a 2-to-1 node $v$, let $e^*_v$ be the edge with $\phi(e_v^*)=v$, $\psi(e_v^*)=1$, and $\tail(e_v^*)=v$; i.e. it is the last edge to hold the non-destination symbol terminating at $v$. Observe that it will be enough to specify the linear relationships among the symbols on $\{e_v^*:v\in\Vmsc_{2,1}\}$ as well as the $M$ edges in $\Emsc_\inn(D)$. These collectively form the Polytope Code equivalent of a $(M+|\Vmsc_{2,1}|,M-2)$ MDS code. We must construct this code so as to satisfy certain instances of \eqref{eq:det}, so that we may apply Theorem~\ref{thm:magic} as necessary. The following Lemma states the existence of a set of linear relationships among the $M+|\Vmsc_{2,1}|$ variables with the required properties.
\begin{lemma}\label{lemma:vander}
For each 2-to-1 node $v$, let $\Xi(v)$ be the set of edges $e$ with $\tail(e)=D$ such that there is an edge $e'$ with $\tail(e')=\head(e)$, $\phi(e')=v$, and $\psi(e')=1$. That is, the symbol on $e$, just before being sent to the destination, was compared against the non-destination symbol associated with $v$. Note that any edge $e\in\Emsc_\inn(D)$ is contained in $\Xi(v)$ for some 2-to-1 node $v$. There exists a generator matrix $K\in\Zmbb^{M+|\Vmsc_{2,1}|\times M-2}$ where each row is associated with an edge in $\{e_v^*:v\in\Vmsc_{2,1}\}\cup\Emsc_\inn(D)$ such that for all $v_1,v_2\in\Vmsc_{2,1}$ and all $f_1\in\Xi(v_1),f_2\in\Xi(v_2)$, the constraints
\begin{align}
(\tX_{f_1},\tX_{f_2},\tZbf)&\sim(X_{f_1},X_{f_2},\Zbf)\\
(\tX_{e_{v_1}^*},\tX_{e_{v_2}^*},\tZbf)&\sim(X_{e_{v_1}^*},X_{e_{v_2}^*},\Zbf)\\
(\tX_{f_1},\tX_{e_{v_1}^*})&\sim(X_{f_1},X_{e_{v_1}^*})\\
(\tX_{f_2},\tX_{e_{v_2}^*})&\sim(X_{f_2},X_{e_{v_2}^*})\end{align}
imply
\beq (\tX_{f_1},\tX_{f_2},\tX_{e_{v_1}^*},\tX_{e_{v_2}^*}\tZbf)\sim(X_{f_1},X_{f_2},X_{e_{v_1}^*},X_{e_{v_2}^*}\tZbf)\eeq
where
\beq \Zbf=(X_e:e\in\Emsc_\inn(D)\setminus\{f_1,f_2\}).\eeq
\end{lemma}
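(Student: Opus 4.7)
My plan is to take $K$ to be a classical integer Vandermonde matrix, $K_{i,j}=\alpha_i^{\,j-1}$, with distinct integers $\alpha_i$ indexed by the edges of $\{e_v^*:v\in\Vmsc_{2,1}\}\cup\Emsc_\inn(D)$. Every $(M-2)\times(M-2)$ submatrix is then a nonsingular Vandermonde determinant, so any $M-2$ rows of $K$ are linearly independent; in other words, $K$ generates a genuine $(M+|\Vmsc_{2,1}|,M-2)$ MDS code over the reals.

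The structural consequence I would exploit is that $K_{\Zbf}$ is invertible for every admissible $\Zbf$ of size $M-2$. Hence in the Polytope Code distribution, the marginal $\Zbf$ already pins down every other code symbol via fixed affine maps $X_{f_j}=g_j(\Zbf)$ and $X_{e_{v_i}^*}=h_i(\Zbf)$, so the original joint on each of the marginals $(X_{f_1},X_{f_2},\Zbf)$ and $(X_{e_{v_1}^*},X_{e_{v_2}^*},\Zbf)$ is supported entirely on the graph of the corresponding map. The first hypothesis of the lemma then forces the $\tXbf$-side of that first marginal to live on the same graph, so $\tX_{f_j}=g_j(\tZbf)$ almost surely and $\tZbf\sim\Zbf$; the second hypothesis analogously pins down $\tX_{e_{v_i}^*}=h_i(\tZbf)$ almost surely. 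Stitching these together, the target joint of $(\tX_{f_1},\tX_{f_2},\tX_{e_{v_1}^*},\tX_{e_{v_2}^*},\tZbf)$ is realized by $\tZbf$ paired with $(g_1,g_2,h_1,h_2)(\tZbf)$, matching the analogous description of the original joint. Hypotheses \eqref{eq:distcon3}--\eqref{eq:distcon4} then come out as consequences rather than independent ingredients of the argument.

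The only point I expect to require brief care is lattice-compatibility: the affine maps $g_j,h_i$ may have non-integer coefficients, but on any $\Zbf$ that arises as a projection of a lattice point of $\Pmsc_k$ they return integers by construction, so the deterministic relations descend to the actual integer support. A more delicate alternative would apply Corollary~\ref{cor:magic3} one admissible tuple $(v_1,v_2,f_1,f_2)$ at a time; that route would instead require verifying the sign condition \eqref{eq:det} for every tuple simultaneously under a single assignment of the $\alpha_i$'s, and satisfying all of those sign products is where I would expect the real obstacle to lie. My MDS-plus-determinism plan sidesteps the sign condition entirely.
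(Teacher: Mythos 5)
As literally written, your argument goes through: $\Zbf=(X_e:e\in\Emsc_\inn(D)\setminus\{f_1,f_2\})$ has $M-2$ components, so $K_\Zbf$ is a square $(M-2)\times(M-2)$ MDS submatrix and invertible, and your determinism argument correctly derives the conclusion from hypotheses one and two alone, making three and four superfluous. That redundancy is the first warning sign. The second, and more decisive, is that the paper's own proof evaluates quantities like $|K_{e_{v_1}^*,e_{v_2}^*,\Zbf}|$ and $|K_{f_1,f_2,\Zbf}|$ as Vandermonde determinants of square submatrices of $K$, which forces $2+|\Zbf|=M-2$, i.e.\ $|\Zbf|=M-4$. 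The invocation of Corollary~\ref{cor:magic3} demands the same count, since it requires the restricted collection of variables to be the kernel of a $2\times m$ constraint matrix. And in the downstream application (Lemma~\ref{lemma:pairwise}), $\Zbf$ is taken to be the part of $K_D$ orthogonal to \emph{both} the traitor and the suspect node, not merely the part missing $f_1,f_2$. So the statement you proved is almost certainly a misprint.

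Once $|\Zbf|=M-4$, $K_\Zbf$ is an $(M-4)\times(M-2)$ matrix; it is not square, there is a two-dimensional residual freedom, $\Zbf$ no longer pins down $X_{f_1},X_{f_2},X_{e_{v_1}^*},X_{e_{v_2}^*}$, the support of each triple in hypotheses one and two is a genuine two-parameter family rather than the graph of a map over $\Zbf$, and your determinism step collapses. That residual freedom is precisely the regime Corollary~\ref{cor:magic3} was designed for, and the sign condition \eqref{eq:det} you hoped to sidestep is then genuinely load-bearing, not a removable technicality. You correctly anticipated that verifying \eqref{eq:det} simultaneously for every admissible tuple under a single assignment of the $\alpha_i$'s is where the real difficulty sits; the paper handles it by choosing the ordering carefully (the non-destination rows $e_v^*$ first, then the blocks $\Xi(v)$ in reversed order), which makes exactly one factor in the expanded sign product negative. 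The MDS-determinism plan appears to avoid all of this only because the stated $\Zbf$ is two components too large.
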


\emph{4) Decoding Procedure:} To decode, the destination first compiles a list $\Lmsc\subset V$ of which nodes may be the traitor. It does this by taking all its available data: received comparison bits from interior nodes as well as the symbols it has direct access to, and determines whether it is possible for each node, if it were the traitor, to have acted in a way to cause these data to occur. If so, it adds this node to $\Lmsc$. For each node $i$, let $K_{i}$ be the linear transformation from the message vector $\Wbf$ to the symbols on the output edges of node $i$. With a slight abuse of notation, regard $K_D$ represent the symbols on the input edges to $D$ instead. For a set of nodes $S\subset V$, let $K_{D\perp S}$ be a basis for the subspace spanned by $K_D$ orthogonal to
\beq\bigcap_{j\in S}\spn(K_{j\to D}).\eeq
The destination decodes from $K_{D\perp\Lmsc}\Wbf$. If $i$ is the traitor, it must be that $i\in\Lmsc$, so
\begin{align}\rank(K_{D\perp\Lmsc})&\ge M-\dim\left(\bigcap_{j\in S}\spn(K_{j})\right)\\
&\ge M-\rank(K_{i})\\&\ge M-2\end{align}
where we used the fact that node $i$ has at most two output edges. Since $K_{D\perp\Lmsc}$ has rank at least $M-2$, this is a large enough space for the destination to decode the entire message. The follow Lemma allows us to conclude that all variables in the subspace spanned by $K_{D\perp\Lmsc}$ are trustworthy.
\begin{lemma}\label{lemma:pairwise}
Consider any pair of nodes $i,j$. Suppose $i$ is the traitor, and acts in a way such that $j\in\Lmsc$. Node $i$ cannot have corrupted any value in $K_{D\perp\{i,j\}}\Wbf$.
\end{lemma}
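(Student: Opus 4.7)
The plan is to translate the hypothesis $j\in\Lmsc$ into a collection of joint--type identities satisfied by the sequences actually present on the network's edges, and then feed these identities into Lemma~\ref{lemma:vander} to force the traitor $i$'s corruption at the destination to lie inside $\spn(K_i)\cap\spn(K_j)$. Since $K_{D\perp\{i,j\}}$ is by definition orthogonal to that intersection, any such corruption projects to zero and the claim follows. The case $i=j$ is immediate because every corruption by $i$ already lies in $\spn(K_i)$, so I restrict to $i\neq j$.

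The first substantive step is to unpack what $j\in\Lmsc$ buys. Let $R(j)$ denote the set of edges and nodes reachable from $j$. For $j$ to serve as a consistent alternative traitor explaining every destination observation -- input symbols together with forwarded comparison bits -- every bit whose value $j$ cannot influence must equal its honest value. In particular, every comparison performed at a 2-to-1 node $v\notin R(j)$ must have been reported as ``$=$'', and every input symbol of $D$ whose upstream path avoids $j$ must coincide with its honest value. Although phrased through a hypothetical $j$, these are in fact constraints on the \emph{actual} behavior of $i$.

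Next I translate each surviving successful comparison into a type identity among the sequences that really were transmitted. Using properties \textbf{(A)}--\textbf{(B)} of Lemma~\ref{lemma:labels} and the internal-node rules defining $\Wmsc_1,\Wmsc_2,\Wmsc_3$, the edges in $\Lambda(v)$ form a tree of paths whose symbols are eventually checked against the non-destination symbol $X_{e_v^*}$ at $v$. A successful comparison at $v$ therefore forces $(\tX_{e_v^*},\tX_f)\sim(X_{e_v^*},X_f)$ for every $f\in\Xi(v)$ whose own path also escapes $R(j)$. Combined with the direct equalities $\tX_e=X_e$ on the untouched edges of $\Emsc_\inn(D)$, these identities are exactly the four pairwise and two three-variable constraints required by Lemma~\ref{lemma:vander} for every admissible pair $(v_1,v_2)$ with $f_l\in\Xi(v_l)\setminus R(j)$. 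Applying the lemma yields
\[(\tX_{f_1},\tX_{f_2},\tX_{e_{v_1}^*},\tX_{e_{v_2}^*},\tZbf)\sim(X_{f_1},X_{f_2},X_{e_{v_1}^*},X_{e_{v_2}^*},\Zbf),\]
and since any two symbols of the underlying $(M+|\Vmsc_{2,1}|,M-2)$ Polytope-MDS codebook already determine the whole message vector, the only way for the joint type to match is for $\tX_{f_1}$ and $\tX_{f_2}$ to literally agree with their honest values whenever at least one of them lies outside $\spn(K_{j\to D})$. Assembling these edge-level equalities over all eligible pairs shows that $K_D(\tWbf-\Wbf)\in\spn(K_{i\to D})\cap\spn(K_{j\to D})$, which is exactly what the lemma claims.

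The main obstacle will be the case analysis around where $i$ sits: $i$ may be a 2-to-1 node, or a 2-to-2 node in any of $\Wmsc_1,\Wmsc_2,\Wmsc_3$, and may or may not lie on the particular non-destination path $\Lambda(v)$ associated with a given $v$. For each placement I have to verify that the labels supplied by Lemma~\ref{lemma:labels} produce enough uncorrupted comparisons to invoke Lemma~\ref{lemma:vander}, and that the forwarding ``swap'' triggered at branch nodes by failed comparisons (part 3 of Lemma~\ref{lemma:tree}) does not smuggle an unchecked symbol into the destination's view. Planarity, via the construction of $\phi$ and $\psi$ in Lemma~\ref{lemma:labels}, is precisely what guarantees that the sets $\Xi(v)$ are rich enough for these checks to cover every pair $(i,j)$ under the assumption that the minimum Lemma~\ref{lemma:cutset2net} bound equals $M-2$.
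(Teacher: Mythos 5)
Your high-level outline is close to the paper's — trace constraints from comparisons and the directly observed destination symbols, then invoke the ``fundamental property'' corollaries and Lemma~\ref{lemma:vander} to show the traitor is pinned down — but there are several genuine gaps that the paper's proof explicitly has to work around and that your sketch does not resolve.

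First, the claim that ``every comparison performed at a 2-to-1 node $v\notin R(j)$ must have been reported as $=$'' does not follow from $j\in\Lmsc$. Reachability \emph{from} $j$ controls whether $j$ can corrupt the symbols entering $v$, but it does not control whether $j$ sits on the path along which the comparison bit is forwarded to $D$: that requires $v$ not to reach $j$ either. When $j$ is downstream of $v$ on the forwarding route, $j$ could (hypothetically) have falsified the bit, so a $\ne$ at $D$ is perfectly consistent with $j\in\Lmsc$ and you get no joint-type identity. The paper never asserts all these comparisons succeed; instead it systematically considers both outcomes. This is where the $\Wmsc_2$ branch-node ``swap'' (part 3 of Lemma~\ref{lemma:tree}) carries the argument: a failed comparison re-routes the suspect symbol onto the non-destination path, so no constraint is needed for that symbol because it never reaches $D$. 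Your sketch acknowledges the swap as a thing to check later, but the load-bearing claim of your argument (``enough comparisons succeed, therefore enough type identities hold'') is exactly what the swap mechanism is there to rescue you from needing.

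Second, the dichotomy ``feed the identities into Lemma~\ref{lemma:vander}'' is too coarse. The paper's case analysis ends up in qualitatively different situations: sometimes a single type constraint plus Corollary~\ref{cor:magic1} suffices; sometimes five constraints and Corollary~\ref{cor:magic2}; only in the ``both output symbols leave the network through distinct non-destination paths'' subcase does Lemma~\ref{lemma:vander} (whose hypothesis is the sign condition \eqref{eq:det} on the Vandermonde minors) come into play. Always claiming ``exactly the four pairwise and two three-variable constraints required'' is not accurate and would not compile into a proof: in the $i,j\in\Wmsc_3\cup\Vmsc_{2,1}$ case, for instance, there is nothing to prove and no identities are collected at all.

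Third, you omit the symmetry reduction the paper opens with. It first shows that \eqref{eq:itoj} and \eqref{eq:jtoi} imply one another, so each unordered pair $\{i,j\}$ only needs one direction. This is not merely a convenience: in the subcase where the traitor's symbol gets compared against one influenced by $j$ at a branch node with $\rho(f_2)<\rho(f_1)$, the paper deliberately switches roles and proves the statement for $j$ as the traitor, because the constraints it can harvest naturally pin down $j$'s outputs rather than $i$'s. Without the symmetry observation your case analysis would get stuck exactly there.

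In short: the ingredients you name are the right ones, but the crux of the paper's proof is precisely the bookkeeping you defer — which comparisons \emph{must} succeed given $j\in\Lmsc$, what happens to the $\rho$-paths when comparisons fail, and when role-swapping is needed — and that bookkeeping is not a cleanup step but the substance of the proof. As stated, the proposal does not yet establish the lemma.
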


\subsection{Proof of Lemma~\ref{lemma:labels}}\label{subsection:labels}

We begin with $\phi(e)=\psi(e)=\emptyset$ for all edges $e$, and set $\phi$ and $\psi$ progressively. First we describe some properties of the graph $(V,E)$ imposed by the fact that the right hand sides of \eqref{eq:cutset2netA} and \eqref{eq:cutset2netB} are never less than $M-2$.

Given a 2-to-1 node $v$, let $\Gamma_v$ be the set of nodes for which $v$ is the only reachable 2-to-1 node. Note that other than $v$, the only nodes in $\Gamma_v$ are 2-to-2. Moreover, if $v$ can reach another 2-to-1 node, $\Gamma_v$ is empty. We claim that $\Gamma_v$ forms a path. If it did not, then there would be two 2-to-2 nodes $i_1,i_2\in\Gamma_v$ for which there is no path between them. That is, $d_{i_1,i_2}=1$ and $c_{i_1}=c_{i_2}=2$, so \eqref{eq:cutset2netB} becomes $C\le M-3$, which contradicts our assumption that the cut-set bound is $M-2$.

Furthermore, every 2-to-2 node must be able to reach at least one 2-to-1 node. If not, then we could follow a path from such a 2-to-2 node until reaching a node $i_1$ all of whose output edges lead directly to the destination. Node $i_1$ cannot be 2-to-1, so it must be 2-to-2, meaning $e_{i_1}=2$. Taking any other node $i_2$ with a direct link to the destination gives no more than $M-3$ for the right hand side of \eqref{eq:cutset2netA}, again contradicting our assumption.

The first step in the edge labeling procedure is to specify the edges holding non-destination symbols; that is, for each 2-to-1 node $v$, to specify the edges $e$ for which $\phi(e)=v$ and $\psi(e)=1$. To satisfy property (A), these must form a path. For any node $i\in\Nmsc_\inn(D)$, the output edge of $i$ that goes to the destination has no $\phi$ value, so to satisfy property (C), the other output edge $e$ must satisfy $\psi(e)=1$. Moreover, by property (B), if $\phi(e)=v$, then there is a path from $\head(e)$ to $v$. Hence, if $i\in\Vmsc_{2,2}\cap\Gamma_v$ for some 2-to-1 node $v$, then it is impossible for the two output edges of $i$ to have different $\phi$ values; hence, by property (C), one of its output edges $e$ must satisfy $\psi(e)=1$. Therefore, we need to design the non-destination paths so that they pass through $\Gamma_v$ for each $v$, as well as each node in $\Nmsc_\inn(D)$.

For each 2-to-1 node $v$, we first set the end of the non-destination path associated with $v$ to be the edges in $\Gamma_v$. That is, for an edge $e$, if $\head(e),\tail(e)\in\Gamma_v$, set $\psi(e)=1$ and $\phi(e)=v$. Now our only task is to extend the paths backwards such that one is guaranteed to pass through each node in $\Nmsc_\inn(D)$.

Construct an embedding of the graph $(V,E)$ in the plane such that $S$ is on the exterior face. Such an embedding always exists \cite{Harary:72Book}. If we select a set of edges making up an \emph{undirected cycle}---that is, edges constituting a cycle on the underlying undirected graph---then all nodes in the network not on the cycle are divided into those on the interior and those on the exterior, according to the planar embedding. Take $i,j\in\Nmsc_\inn(D)$ such that $i$ can reach $j$, and let $\Cmsc_{i,j}$ be the undirected cycle composed of a path from $i$ to $j$, in addition to the edges $(i,D)$ and $(j,D)$. We claim that if a node $k\in\Nmsc_\inn(D)$ is on the interior of $\Cmsc_{i,j}$, then it is reachable from $i$. Since $S$ is on the exterior face of the graph, it must be exterior to the cycle $\Cmsc_{i,j}$. There exists some path from $S$ to $k$, so it must cross the $\Cmsc_{i,j}$ at a node $j'$. Observe that $j'$ must be on the path from $i$ to $j$, so it is reachable from $i$. Therefore $i$ can reach $j'$ and $j'$ can reach $k$, so $i$ can reach $k$. This construction is diagrammed in Fig.~\ref{fig:cycle1}.

\begin{figure}
\begin{psfrags}
\psfrag{i}[][][.8]{$i$}
\psfrag{j}[][][.8]{$j$}
\psfrag{s}[][][.8]{$S$}
\psfrag{d}[][][.8]{$D$}
\psfrag{k1}[][][.8]{$k$}
\psfrag{jp}[][][.8]{$j'$}
\centerline{\includegraphics[scale=.7]{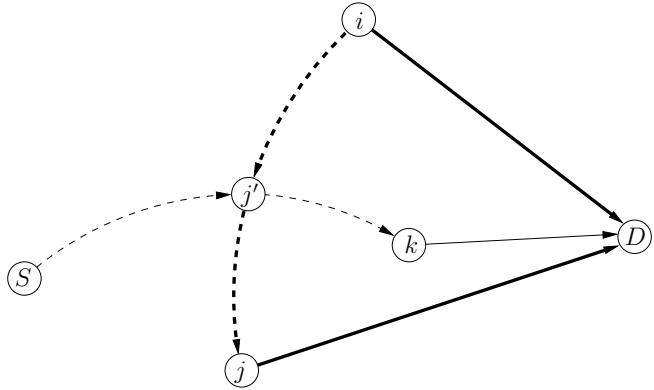}}
\end{psfrags}
\caption[Diagram of planarity being used to prove that a node $k\in\Nmsc_\inn(D)$ on the interior of $\Cmsc_{i,j}$ is reachable from $i$]{A diagram of the planar embedding being used to prove that a node $k\in\Nmsc_\inn(D)$ on the interior of $\Cmsc_{i,j}$ is reachable from $i$. Solid lines are single edges; dashed lines represent paths made up of possibly many edges. Thick lines correspond to edges in $\Cmsc_{i,j}$.}
\label{fig:cycle1}
\end{figure}

We may travel around node $D$ in the planar embedding, noting the order in which the nodes $\Nmsc_\inn(D)$ connect to $D$. Call this order $u_1,\ldots,u_M$. Take any $i\in\Nmsc_\inn(D)$, and suppose $i=u_l$. We claim that the set of nodes in $\Nmsc_\inn(D)$ reachable from $u_l$ forms a contiguous block around $u_l$ in the $\{u\}$ ordering, where we regard $u_1$ and $u_M$ as being adjacent, so two contiguous blocks containing $u_1$ and $u_M$ is considered one contiguous block.

Suppose this were not true. That is, for some $i\in\Nmsc_\inn(D)$ there exists a $j\in\Nmsc_\inn(D)$ reachable from $i$ that is flanked on either side in the $\{u\}$ ordering by nodes $k_1,k_2\in\Nmsc_\inn(D)$ not reachable from $i$. The order in which these four nodes appear in $\{u\}$ in some cyclic permutation or reflection of
\beq(i,k_1,j,k_2).\label{eq:nodeorder}\eeq
Neither $k_1$ nor $k_2$ can be on the interior of $\Cmsc_{i,j}$, because, as shown above, any such node is reachable from $i$. However, if they are both on the exterior, then the order in \eqref{eq:nodeorder} cannot occur, because $D$ is on the boundary of $\Cmsc_{i,j}$.

By contiguity, if a node $i\in\Nmsc_\inn(D)$ can reach any other node in $\Nmsc_\inn(D)$, it can reach a node immediately adjacent to it in the $\{u\}$ ordering. Suppose $i$ can reach both the node $j_1\in\Nmsc_\inn(D)$ immediately to its left and the node $j_2\in\Nmsc_\inn(D)$ immediately to its right. We show that in fact $i$ can reach every node in $\Nmsc_\inn(D)$. In particular, there can be only one such node, or else there would be a cycle. Node $i$ has only two output edges, one of which goes directly to $D$. Let $i'$ be the tail of the other. Both $j_1$ and $j_2$ must be reachable from $i'$.

We claim it is impossible for both $j_1$ to be exterior to $\Cmsc_{i,j_2}$ and $j_2$ to be exterior to $\Cmsc_{i,j_1}$. Suppose both were true. We show the graph must contain a cycle. Let $\bar{\Cmsc}$ be the undirected cycle composed of the path from $i'$ to $j_1$, the path from $i'$ to $j_2$, and the edges $(j_1,D),(j_2,D)$. Every node on $\bar{\Cmsc}$ is reachable from $i$. Since both $j_1$ is exterior to $\Cmsc_{i,j_2}$ and $j_2$ is exterior to $\Cmsc_{i,j_1}$, it is easy to see that $i$ must be on the interior of $\bar{\Cmsc}$. Therefore any path from $S$ to $i$ must cross the cycle at a node $k'$, reachable from $i$. Since $k'$ is on a path from $S$ to $k'$, $i$ is also reachable from $k'$, so there is a cycle. See Fig.~\ref{fig:cycle2} for a diagram of this.

\begin{figure}
\begin{psfrags}
\psfrag{i}[][][.8]{$i$}
\psfrag{ip}[][][.8]{$i'$}
\psfrag{j1}[][][.8]{$j_1$}
\psfrag{s}[][][.8]{$S$}
\psfrag{d}[][][.8]{$D$}
\psfrag{j2}[][][.8]{$j_2$}
\psfrag{kp}[][][.8]{$k'$}
\psfrag{cij1}[][][.8]{$\Cmsc_{i,j_1}$}
\psfrag{cij2}[][][.8]{$\Cmsc_{i,j_2}$}
\centerline{\includegraphics[scale=.7]{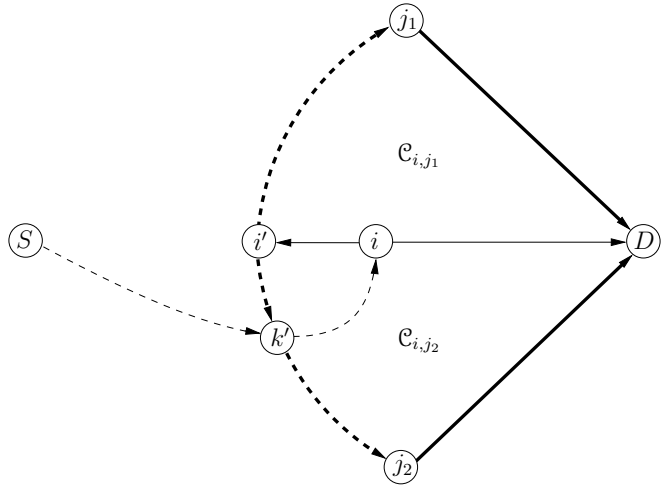}}
\end{psfrags}
\caption[Diagram of planarity being used to prove that a node reaching its two neighbors in $\Nmsc_\inn(D)$ can reach every node in $\Nmsc_\inn(D)$]{A diagram of the planar embedding being used to prove that a node reaching its two neighbors in $\Nmsc_\inn(D)$ can reach every node in $\Nmsc_\inn(D)$. Solid lines are single edges; dashed lines represent paths made up of possibly many edges. Thick lines correspond to the undirected cycle $\bar{\Cmsc}$. Undirected cycles $\Cmsc_{i,j_1}$ and $\Cmsc_{i,j_2}$ are indicated.}
\label{fig:cycle2}
\end{figure}

Therefore, we may assume without loss of generality that $j_2$ is in the interior of $\Cmsc_{i,j_1}$. Suppose there were a node $j_3\in\Nmsc_\inn(D)$ not reachable from $i$. Node $j_3$ must be on the exterior of $\Cmsc_{i,j_1}$, because we have shown that nodes in $\Nmsc_\inn(D)$ on the interior are reachable from $i$. Therefore, in the $\{u\}$ order, these four nodes must appear in some cyclic permutation or reflection of $(i,j_3,j_1,j_2)$. However, this is impossible, because both $j_1$ and $j_2$ were assumed to be adjacent to $i$. Therefore, $i$ can reach every node in $\Nmsc_\inn(D)$.

Take a node $i$ that can reach 2-to-1 nodes $v_1,v_2\in\Nmsc_\inn(D)$. Suppose that $i$ cannot reach every node in $\Nmsc_\inn(D)$. Therefore, the nodes it can reach in in $\Nmsc_\inn(D)$ are either entirely to its right or entirely to its left in the $\{u\}$ ordering, or else, by contiguity, node $i$ would be able to reach the adjacent nodes on both sides. Suppose without loss of generality that they are all to its right, and that $v_2$ is further to the right than $v_1$. We claim that $v_1$ is on the interior of $\Cmsc_{i,v_2}$. Suppose it were on the exterior. By contiguity, every node in $\Nmsc_\inn(D)$ on the exterior of $\Cmsc_{i,v_2}$ must be reachable from $i$. Since we have already argued that every node in $\Nmsc_\inn(D)$ on the interior of $\Cmsc_{i,v_2}$ is reachable from $i$, this means $i$ can reach every node in $\Nmsc_\inn(D)$, which we have assumed is not the case.

Therefore, $v_1$ is on the interior of $\Cmsc_{i,v_2}$. We may construct a path from $S$ to $v_1$, passing through all nodes in $\Gamma_{v_1}$. This path must cross $\Cmsc_{i,v_2}$ at a node $k$, reachable from $i$. Node $j$ can reach both $v_1$ and $v_2$, so it cannot be in $\Gamma_{v_1}$. However, $j$ is on a path passing through $\Gamma_{v_1}$, so it can reach all nodes in $\Gamma_{v_1}$. Therefore there exists a path from $i$ to $v_1$, passing through $\Gamma_{v_1}$.

If $i$ can reach every node in $\Nmsc_\inn(D)$, then as shown above, either $v_1$ is in the interior of $\Cmsc_{i,v_1}$, or $v_2$ is in the interior of $\Cmsc_{i,v_2}$. Therefore, by the same argument to that just used for the case that $i$ cannot reach every node in $\Nmsc_\inn(D)$, there is either a path from $i$ to $v_1$ through $\Gamma_{v_1}$ or a path from $i$ to $v_2$ through $\Gamma_{v_2}$.

Fix a 2-to-1 node $v_1\in\Nmsc_\inn(D)$. Consider the set of nodes that are:
\begin{itemize}
\item contained in $\Vmsc_{2,2}\cap\Nmsc_\inn(D)$,
\item not in $\Gamma_v$ for any 2-to-1 node $v$,
\item can reach $v_1$,
\item cannot reach any other node also satisfying the above three conditions.
\end{itemize}
We claim there are at most two such nodes. Suppose there were two such nodes $i_1,i_2$ both to the left of $v_1$ in the $\{u\}$ ordering. If $i_1$ were further to the left, then $i_1$ could reach $i_2$, since $i_1$ can reach $v_1$ and the nodes reachable from $i_1$ must form a contiguous block. Hence $i_1$ would not qualify. Therefore there can be at most one such node to the left of $v_1$ and at most one to the right. Denote these two nodes $i$ and $j$ respectively, if they exist. By contiguity, every node satisfying the first three conditions must be able to reach either $i$ or $j$. Moreover, all such nodes to the left of $v_1$ form a single path ending in $i$, and those on the right form a single path ending in $j$. We will proceed to extend two non-destination paths backwards to $i$ and $j$. Then, we may further extend these two paths backwards through all nodes in $\Vmsc_{2,2}\cap\Nmsc_\inn(D)$ that can reach $v_1$, and then backwards to the source on arbitrary paths. Hence, we need only find paths from $i$ to the head of $\Gamma_v$ for some $v$, and a distinct one of the same for $j$.

Both $i$ and $j$ can reach at least one 2-to-1 node other than $v_1$. Suppose $i$ can reach another 2-to-1 node $v_2\in\Nmsc_\inn(D)$. By the argument above, there is a path from $i$ to the leftmost of $v_1,v_2$ through $\Gamma_{v_1}$ or $\Gamma_{v_2}$ respectively. Similarly, if $j$ can reach a 2-to-1 node $v_3\in\Nmsc_\inn(D)$ with $v_3\ne v_1$, there is a path from $j$ to the rightmost of $v_1,v_3$, through the associated $\Gamma$. This is true even if $v_2=v_3$.

Suppose there is no 2-to-1 node in $\Nmsc_\inn(D)$ reachable from node $i$ other than $v_1$. There still must be a 2-to-1 node $v_2$ reachable from $i$, though $v_2\notin\Nmsc_\inn(D)$. Since $v_2$ is not adjacent to the destination, it must be able to reach a 2-to-1 node that is. Therefore $\Gamma_{v_2}=\emptyset$, so any path from $i$ to $v_2$ trivially includes $\Gamma_{v_2}$. If $j$ can also reach no 2-to-1 nodes in $\Nmsc_\inn(D)$ other than $v_1$, there must be some 2-to-1 node $v_3\notin\Nmsc_\inn(D)$ reachable from $j$. We may therefore select non-destination paths from $i$ to $v_2$ and $j$ to $v_3$, unless $v_2=v_3$. This only occurs if this single node is the only 2-to-1 node other than $v_1$ reachable by either $i$ or $j$. We claim that in this case, either $i$ or $j$ can reach the tail of $\Gamma_{v_1}$. Therefore we may extend the non-destination path for $v_1$ back to one of $i$ or $j$, and the non-destination path for $v_2=v_3$ to the other. Every node can reach some 2-to-1 node in $\Nmsc_\inn(D)$, so $v_2$ can reach $v_1$, or else $i$ and $j$ would be able to reach a different 2-to-1 node in $\Nmsc_\inn(D)$. By a similar argument to that used above, $v_1$ must be on the interior of the undirected cycle composed of the path from $i$ to $v_2$, the path from $j$ to $v_2$, and the edges $(i,D),(j,D)$. If not, $v_1$ would not be between $i$ and $j$ in the $\{u\}$ ordering. Note this is true even if $i$ can reach $j$ or vice versa. Since $S$ must be exterior to this cycle, any path from $S$ to $v_1$ including $\Gamma_{v_1}$ must cross either the path from $i$ to $v_2$ or $j$ to $v_2$ at a node $k$. Node $k$ must be able to reach the head of $\Gamma_{v_1}$, so either $i$ or $j$ can reach $\Gamma_{v_1}$.

Once the non-destination paths are defined, we perform the following algorithm to label other edges so as to satisfy property (C). We refer to an edge $e$ as \emph{labeled} if $\phi(e)\ne\emptyset$. We refer to a node as \emph{labeled} if any of its output edges are labeled. Any node unlabeled after the specifications of the non-destination paths must not be in $\Nmsc_\inn(D)$, and must be able to reach at least two different 2-to-1 nodes.

\begin{enumerate}
\item For any edge $e$ such that there exists an $e'\in\Emsc_\out(\tail(e))$ with $\psi(e')=1$, set $\phi(e)=\phi(e')$. Observe now that any path eventually reaches a labeled edge. Furthermore, the tail of any unlabeled edge cannot be a node contained in $\Gamma_v$ for any $v$, so it can lead to at least two 2-to-1 nodes.
\item Repeat the following until every edge other than those connected directly to the destination is labeled. Consider two cases:
    \begin{itemize}
    \item\emph{There is no 2-to-2 node with exactly one labeled output edge}: Pick an unlabeled node $i$. Select any path of unlabeled edges out of $i$ until reaching a labeled node. Let $v$ be the label of a labeled output edge from this node. For all edges $e$ on the selected path, set $\phi(e)=v$. Observe that every node on this path was previously an unlabeled 2-to-2 node. Hence every node on this path, except the last one, has exactly one labeled output edge.
    \item\emph{There is a 2-to-2 node $i$ with exactly one labeled output edge}: Let $v_1$ be the label on the labeled output edge. Select any path of unlabeled edges beginning with the unlabeled output edge from $i$ until reaching a node with an output edge labeled $v_2$ with $v_2\ne v_1$. This is always possible because any unlabeled edge must be able to lead to at least two 2-to-1 nodes, including one other than $v_1$. For all edges $e$ on the selected path, set $\phi(e)=v_2$. Observe that before we labeled the path, no node in the path other than the last one had an output edge labeled $v_2$, because if it did, we would have stopped there. Hence, after we label the path, if a node now has 2 labeled output edges, they have different labels.
    \end{itemize}
\end{enumerate}
Note that in the above algorithm, whenever an edge $e$ becomes labeled, if there was another edge $e'$ with $\head(e)=\head(e')$, either $e'$ was unlabeled, or $\phi(e)\ne\phi(e')$. Therefore, the final $\phi$ values satisfy property (B).

\subsection{Proof of Lemma~\ref{lemma:tree}}

Observe that for any 2-to-2 node, the two $\rho$ values on the input edges are identical to the two $\rho$ values on the output edges. For a 2-to-1 node, the $\rho$ value on the output edge is equal to the $\rho$ value on one of the input edges. Therefore beginning with any edge in $\Emsc_\out(S)$, we may follow a path along only edges with the same $\rho$ value, and clearly we will hit all such edges. Property (1) immediately follows.

Property (2) follows from the fact that 2-to-2 nodes always operate such that from the symbols on the two output edges, it is possible to decode the symbols on the input edges. Therefore the destination can always reverse these transformations to recover any earlier symbols sent in the network. The only exception is 2-to-1 nodes, which drop one of their two input symbols. The dropped symbol is a non-destination symbol, so it is clear that the destination can always decode the rest.

We now prove property (3). We claim that when the comparison fails at node $k$, it is impossible for the destination to decode $X_{f_2}$. We may assume that the destination has direct access to all symbols on edges immediately subsequent to edges in $\Lambda(v)$. This can only make $X_{f_2}$ easier to decode. Recall that $\rho(f_1)<\rho(f_2)$, so $X_{f_1}$ is forwarded directly on the output edge of $k$ not in $\Lambda(v)$. Therefore the destination can only decode $X_{f_2}$ if it can decode the symbol on the output edge of $k$ in $\Lambda(v)$. Continuing to follow the path through $\Lambda(v)$, suppose we reach an edge $e_1$ with $\tail(e_1)=k'$, where $k'$ is a branch node. Let $e_2$ be the other input edge of $k'$. Even if $\rho(e_1)<\rho(e_2)$, meaning $k'$ would normally forward $X_{e_1}$ outside of $\Lambda(v)$, because $e_1$ carries a failed comparison bit, $k'$ will instead forward $X_{e_2}$ outside of $\Lambda(v)$. Again, the destination can only decode $X_{f_2}$ (or equivalently $X_{e_1}$) if it can decode the symbol on the output edge of $k'$ in $\Lambda(v)$. If we reach a node interacting with the non-destination symbol associated with $v$, then because of the failed comparison bit, the formerly non-destination symbol is forwarded outside of $\Lambda(v)$ and the symbol to decode continues traveling through $\Lambda(v)$. It will finally reach $v$, at which point it is dropped. Therefore it is never forwarded out of $\Lambda(v)$, so the destination cannot recover it.

\subsection{Proof of Lemma~\ref{lemma:vander}}

From Corollary~\ref{cor:magic3}, it is enough to prove the existence of a $K$ matrix satisfying
\beq|K_{e_{v_1}^*,e_{v_2}^*,\Zbf}|\ |K_{f_1,f_2,\Zbf}|\ |K_{e_{v_1}^*,f_1,\Zbf}|\ |K_{e_{v_2}^*,f_2,\Zbf}|<0.\label{eq:detvand}\eeq
We construct a Vandermonde matrix $K$ to satisfy \eqref{eq:detvand} for all $v_1,v_2$ and all $f_1,f_2$ in the following way. We will construct a bijective function (an ordering) $\alpha$ given by
\beq\alpha:\{e_v^*:v\in\Vmsc_{2,1}\}\cup\Nmsc_{\text{in}}(D)\to\{1,\ldots,M+|\Vmsc_{2,1}|\}.\eeq
For each $v\in\Vmsc_{2,1}$, set $\alpha(e_v^*)$ to an arbitrary but unique number in $1,\ldots,|\Vmsc_{2,1}|$. We may now refer to a 2-to-1 node as $\alpha^{-1}(a)$ for an integer $a\in\{1,\ldots,|\Vmsc_{2,1}|\}$. Now set $\alpha(e)$ for $e\in\Emsc_\inn(D)$ such that, in $\alpha$ order, the edge set $\{e_v^*:v\in\Vmsc_{2,1}\}\cup\Nmsc_{\text{in}}(D)$ is written
\begin{multline} e_{\alpha^{-1}(1)}^*,e_{\alpha^{-1}(2)}^*,\ldots,e_{\alpha^{-1}(|\Vmsc_{2,1}|)}^*,\\
\Xi(\alpha^{-1}(|\Vmsc_{2,1}|)),\Xi(\alpha^{-1}(|\Vmsc_{2,1}|-1)),\ldots,\Xi(\alpha^{-1}(1)).\end{multline}
That is, each $\Xi(v)$ set is consecutive in the ordering, but in the opposite order as the associated non-destination edges $e_v^*$. Now let $K$ be the Vandermone matrix with constants given by $\alpha$. That is, the row associated with edge $e$ is given by
\beq\left[\begin{array}{ccccc}1&\alpha(e)&\alpha(e)^2&\cdots&\alpha(e)^{M-3}\end{array}\right].\label{eq:vandrow}\eeq

We claim the matrix $K$ given by \eqref{eq:vandrow} satisfies \eqref{eq:detvand}. Fix $v_1,v_2$, and $f_1\in\Xi(v_1),f_2\in\Xi(v_2)$. Due to the Vandermonde structure of $K$, we can write the determinant of a square submatrix in terms of the constants $\alpha(e)$. For instance,
\begin{multline}
|K_{e_{v_1}^*,e_{v_2}^*,\Zbf}|=[\alpha(e_{v_2}^*)-\alpha(e_{v_1}^*)]
\prod_{e\in \Zbf}[\alpha(e)-\alpha(e_{v_1}^*)][\alpha(e)-\alpha(e_{v_2}^*)]\\
\cdot\prod_{e,e'\in \Zbf,\alpha(e)<\alpha(e')}[\alpha(e')-\alpha(e)]\end{multline}
where we have assumed without loss of generality that the rows of $K_\Zbf$ are ordered according to $\alpha$. Expanding the determinants in \eqref{eq:detvand} as such gives
\begin{align}
&|K_{e_{v_1}^*,e_{v_2}^*,\Zbf}|\ |K_{f_1,f_2,\Zbf}|\ |K_{e_{v_1}^*,f_1,\Zbf}|\ |K_{e_{v_2}^*,f_2,\Zbf}|\\
&=[\alpha(e_{v_2}^*)-\alpha(e_{v_1}^*)][\alpha(f_2)-\alpha(f_1)]
[\alpha(f_1)-\alpha(e_{v_1}^*)][\alpha(f_2)-\alpha(e_{v_2}^*)]\nonumber\\
&\ \ \cdot\prod_{e\in \Zbf}[\alpha(e)-\alpha(e_{v_1}^*)]^2[\alpha(e)-\alpha(e_{v_2}^*)]^2
[\alpha(e)-\alpha(f_1)]^2[\alpha(e)-\alpha(f_1)]^2\nonumber\\
&\ \ \cdot\prod_{e,e'\in \Zbf,\alpha(e)<\alpha(e')}[\alpha(e')-\alpha(e)]^4.\label{eq:detexpand}\end{align}
Recall $f_1\in\Xi(v_1),f_2\in\Xi(v_2)$. Since we chose $\alpha$ such that the $\Xi$ sets are in opposite order to the edges $e_v^*$, we have
\beq [\alpha(e_{v_2}^*)-\alpha(e_{v_1}^*)][\alpha(f_2)-\alpha(f_1)]<0.\eeq
Moreover, since all the $\Xi$ sets have larger $\alpha$ values than the edges $e_v^*$,
\begin{align}
\alpha(f_1)-\alpha(e_{v_1}^*)>0,\\
\alpha(f_2)-\alpha(e_{v_2}^*)>0.
\end{align}
Hence, there is exactly one negative term in \eqref{eq:detexpand}, from which we may conclude \eqref{eq:detvand}.

\subsection{Proof of Lemma~\ref{lemma:pairwise}}\label{subsection:pairwise}

The random vector $\Wbf$ is distributed according to the type of the message vector as it is produced as the source. We formally introduce the random vector $\tWbf$ representing the message as it is transformed in the network. As in our examples, this vector is distributed according to the joint type of the sequences as they appear in the network, after being corrupted by the adversary. For each edge $e$, we define $X_e$ and $\tX_e$ similarly as random variables jointly distributed with $\Wbf$ and $\tWbf$ respectively with distributions given by the expected and corrupted joint types.

For every pair of nodes $(i,j)$, we need to prove both of the following:
\begin{align}
&\text{\parbox{4.8in}{If $i$ is the traitor, and $j\in\Lmsc$, $i$ cannot corrupt values in $K_{D\perp\{i,j\}}\Wbf$.}}\label{eq:itoj}\\
&\text{\parbox{4.8in}{If $j$ is the traitor, and $i\in\Lmsc$, $j$ cannot corrupt values in $K_{D\perp\{i,j\}}\Wbf$.}}\label{eq:jtoi}
\end{align}
In fact, each of these implies the other, so it will be enough to prove just one. Suppose \eqref{eq:itoj} holds. Therefore, if the distribution observed by the destination of $K_{D\perp\{i,j\}}\tWbf$ does not match that of $K_{D\perp\{i,j\}}\tWbf$, then at least one of $i,j$ will not be in $\Lmsc$. If they both were in $\Lmsc$, it would have had to be possible for node $i$ to be the traitor, make it appear as if node $j$ were the traitor, but also corrupt part of $K_{D\perp\{i,j\}}W$. By \eqref{eq:itoj}, this is impossible. Hence, if $j$ is the traitor and $i\in\Lmsc$, then the distribution of the $K_{D\perp\{i,j\}}Y_D$ must remain uncorrupted. This vector includes $K_{D\perp j}W$, a vector that can certainly not be corrupted by node $j$. Since $\rank(K_{D\perp j})\ge M-2$, and there are only $M-2$ degrees of freedom, the only choice node $j$ has to ensure that the distribution of $K_{D\perp\{i,j\}}W$ matches $p$ is to leave this entire vector uncorrupted. That is, \eqref{eq:jtoi} holds.

Fix a pair $(i,j)$. We proceed to prove either \eqref{eq:itoj} or \eqref{eq:jtoi}. Doing so will require placing constraints on the actions of the traitor imposed by comparisons that occur inside the network, then applying one of the corollaries of Theorem~\ref{thm:magic} in Sec.~\ref{sec:polytope}.  Let $K_{\perp i}$ be a basis for the space orthogonal to $K_i$. If node $i$ is the traitor, we have that $K_{\perp i}\tWbf\sim K_{\perp i}\Wbf$. Moreover, since $j\in\Lmsc$, $K_{D\perp j}\tWbf)\sim K_{D\perp j}\Wbf$. These two constraints are analogous to \eqref{eq:distcon2} and \eqref{eq:distcon1} respectively, where the symbols on the output of node $i$ are analogous to $X_1,X_2$. The subspace of $K_D$ orthogonal to both $K_i$ and $K_j$ corresponds to $\Zbf$ in the example. We now seek pairwise constraints of the form \eqref{eq:distcon3}--\eqref{eq:distcon5} from successful comparisons to apply Theorem~\ref{thm:magic}.

Being able to apply Theorem~\ref{thm:magic} requires that $K_{D\perp j}$ has rank $M-2$ for all $j$. Ensuring this has to do with the choices for the coefficients $\gamma_{i,1},\gamma_{i,2}$ used in \eqref{eq:xel}. A rank deficiency in $K_{D\perp j}$ is a singular event, so it is not hard to see that random choices for the $\gamma$ will cause this to occur with small probability. Therefore such $\gamma$ exist.

We now discuss how pairwise constraints on the output symbols of $i$ or $j$ are found. Consider the following cases and subcases:
\begin{itemize}
\item\emph{$i,j\in\Wmsc_1\cup\Wmsc_2$}:
    Suppose node $i$ is the traitor. Let $e_1,e_2$ be the output edges of node $i$. For each $l=1,2$, we look for constraints on $X_{e_l}$ by following the $\rho=\rho(e_l)$ path until one of the following occurs:
    \begin{itemize}
    \item\emph{We reach an edge on the $\rho=\rho(e_l)$ path carrying a symbol influenced by node $j$}: This can only occur immediately after a branch node $k$ with input edges $f_1,f_2$ where $\rho(f_1)=\rho(e_l)$, $\rho(f_2)<\rho(f_1)$, and $X_{f_2}$ is influenced by node $j$. At node $k$, a comparison occurs between $\tX_{f_1}$, which is influenced by node $i$ but not $j$, and $\tX_{f_2}$. If the comparison succeeds, then this places a constraint on the distribution of $(\tX_{f_1},\tX_{f_2})$. If the comparison fails, the forwarding pattern changes such that the $\rho=\rho(e_l)$ path becomes a non-destination path; i.e. the value placed on $e_l$ does not affect any variables available at the destination. Hence, the subspace available at the destination that is corruptible by node $i$ is of dimension at most one.
    \item\emph{We reach node $j$ itself}: In this situation, we make use of the fact that we only need to prove that node $i$ cannot corrupt values available at the destination that cannot also be influenced by node $j$. Consider whether the $\rho=\rho(e_l)$ path, between $i$ and $j$, contains a branch node $k$ with input edges $f_1,f_2$ such that $\rho(f_1)=\rho(e_l)$ and $\rho(f_2)>\rho(f_1)$. If there is no such node, then $X_{e_l}$ cannot influence any symbols seen by the destination that are not also being influenced by $j$. That is, $X_{e_l}$ is in $\spn(K_{i\to D}\cap K_{j\to D})$, so we do not have anything to prove. If there is such a branch node $k$, then the output edge $e$ of $k$ with $\rho(e)=\rho(f_2)$ contains a symbol influenced by $i$ and not $j$. We may now follow the $\rho=\rho(e)$ path from here to find a constraint on $X_{e_l}$. If a comparison fails further along causing the forwarding pattern to change such that the $\rho=\rho(e)$ path does not reach the destination, then the potential influence of $X_{e_l}$ on a symbol seen by the destination not influenced by node $j$ is removed, so again we do not have anything to prove.
    \item\emph{The $\rho=\rho(e_l)$ path leaves the network without either of the above occurring}: Immediately before leaving the network, the symbol will be compared with a non-destination symbol. This comparison must succeed, because $j$ cannot influence the non-destination symbol. This gives a constraint $\tX_{e_l}$.
    \end{itemize}
    We may classify the fates of the two symbols out of $i$ as discussed above as follows:
    \begin{enumerate}
    \item Either the forwarding pattern changes such that the symbol does not reach the destination, or the symbol is in $\spn(K_{i\to D}\cap K_{j\to D})$, and so we do not need to prove that it cannot be corrupted. Either way, we may ignore this symbol.
    \item The symbol leaves the network, immediately after a successfully comparison with a non-destination symbol.
    \item The symbol is successfully compared with a symbol influenced by node $j$. In particular, this symbol from node $j$ has a strictly smaller $\rho$ value than $\rho(e_l)$.
    \end{enumerate}
    We divide the situation based on which of the above cases occur for $l=1,2$ as follows:
    \begin{itemize}
    \item\emph{Case 1 occurs for both $l=1,2$}: We have nothing to prove.
    \item\emph{Case 1 occurs for (without loss of generality) $l=1$}: Either case 2 or 3 gives a successful comparison involving a symbol influenced by $\tX_{e_{\bar{l}}}$. Applying Corollary~\ref{cor:magic1} allows us to conclude that $\tX_{e_{\bar{l}}}$ cannot be corrupted.
    \item\emph{Case 2 occurs for both $l=1,2$}: If the two paths reach different non-destination symbols, then we may apply Lemma~\ref{lemma:vander} to conclude that node $i$ cannot corrupt either $\tX_{e_1}$ nor $\tX_{e_2}$. Suppose, on the other hand, that each path reaches the same non-destination path, in particular the one associated with 2-to-1 node $v$. Since $\phi(e_1)\ne\phi(e_2)$, assume without loss of generality that $\phi(e_1)\ne v$. We may follow the path starting from $e_1$ through $\Gamma(v)$ to find an additional constraint, after which we may apply Corollary~\ref{cor:magic2}. All symbols on this path are influenced by $\tX_{e_1}$. This path eventually crosses the non-destination path associated with $v$. If the symbol compared against the non-destination symbol at this point is not influenced by $j$, then the comparison succeeds, giving an additional constraint. Otherwise, there are two possibilities:
        \begin{itemize}
        \item\emph{The path through $\Gamma(v)$ reaches $j$}: There must be a branch node on the path to $\Gamma(v)$ before reaching $j$ such that the path from $e_1$ has the smaller $\rho$ value. If there were not, then case 1 would have occurred. Consider the most recent such branch node $k$ in $\Gamma(v)$ before reaching $j$. Let $f_1,f_2$ be the input edges to $k$, where $f_1$ is on the path from $e_1$. We know $\rho(f_1)<\rho(f_2)$. The comparison at $k$ must succeed. Moreover, this successful comparison comprises a substantial constraint, because the only way the destination can decode $X_{f_2}$ is through symbols influenced by node $j$.
        \item\emph{The path through $\Gamma(v)$ does not reach $j$}: Let $k$ be the first common node on the paths from $i$ and $j$ through $\Gamma(v)$. Let $f_1,f_2$ be the input edges of $k$, where $f_1$ is on the path from $i$ and $f_2$ is on the path from $j$. If the comparison at $k$ succeeds, this provides a constraint. If it fails, then the forwarding pattern changes such that the $\rho=\rho(f_1)$ path becomes a non-destination path. Since we are not in case 1, $\rho(e_1)\ne\rho(f_1)$, but a symbol influenced by $X_{e_1}$ is compared against a symbol on the $\rho=\rho(f_1)$ path at a branch node in $\Gamma(v)$. This comparison must succeed, providing an additional constraint.
        \end{itemize}
    \item\emph{Case 3 occurs for (without loss of generality) $l=1$, and either case 2 or 3 occurs for $l=2$}: We now suppose instead that node $j$ is the traitor. That is, we will prove \eqref{eq:jtoi} instead of \eqref{eq:itoj}. Recall that a successful comparison occurs at a branch node $k$ with input edges $f_1,f_2$ where $\tX_{f_1}$ is influenced by $\tX_{e_1}$, $\tX_{f_2}$ is influenced by node $j$, and $\rho(f_2)<\rho(f_1)$. Let $e'_1,e'_2$ be the output edges of node $j$, and suppose that $\rho(e'_1)=\rho(f_2)$; i.e. the symbol $X_{f_2}$ is influenced by $X_{e'_1}$. The success of the comparison gives a constraint on $\tX_{e'_1}$. Since $\rho(f_2)<\rho(f_1)$, we may continue to follow the $\rho=\rho(f_2)$ path from node $k$, and it continues to be not influenced by node $i$. As above, we may find an additional constraint on $X_{e'_1}$ by following this $\rho$ path until reaching a non-destination symbol or reaching another significant branch node. Furthermore, we may find a constraint on $\tX_{e'_2}$ in a similar fashion. This gives three constraints on $\tX_{e'_1},\tX_{e'_2}$, enough to apply Corollary~\ref{cor:magic2}, and conclude that node $j$ cannot corrupt its output symbols.
    \end{itemize}
\item $i\in\Wmsc_3\cup\Vmsc_{2,1}\setminus\Nmsc_\inn(D),j\in\Wmsc_1\cup\Wmsc_2$: Assume node $i$ is the traitor. If $i\in\Vmsc_{2,1}$ with single output edge $e$ such that $\psi(e)=1$, then node $i$ controls no symbols received at the destination and we have nothing to prove. Otherwise, it controls just one symbol received at the destination, so any single constraint on node $i$ is enough. Let $e'$ be the output symbol of $i$ with $\psi(i)=0$. Since we assume $i\notin\Nmsc_\inn(D)$, the $\rho=\rho(e')$ path is guaranteed to cross a non-destination path after node $i$. As above, follow the $\rho=\rho(e')$ path until reaching a branch node $k$ at which the symbol is combined with one influenced by node $j$. If the comparison at node $k$ succeeds, it gives a constraint on $\tX_{e'}$. If the comparison fails, then the forwarding pattern will change such that the $\rho=\rho(e')$ path will fail to reach the destination, so we're done.
\item $i\in\Wmsc_1\cup\Wmsc_2,j\in\Nmsc_\inn(D)$: Assume node $i$ is the traitor. By construction, since one output edge of $j$ goes directly into the destination, the other must be on a non-destination path. Hence, $j$ only controls one symbol at the destination, so we again need to place only one constraint on node $i$. Let $e\in\Emsc_\out(i)$ be such that $\phi(e)\ne\phi(e')$ for all $e'\in\Emsc_\out(j)$. This is always possible, since the two output edges of $i$ have different $\phi$ values, and since one output edge of $j$ goes directly to the destination, only one of the output edges of $j$ has a $\phi$ value. Let $v=\phi(e)$. Follow the path from $e$ through $\Lambda(v)$ until reaching the non-destination symbol at node $k$ with input edges $f_1,f_2$. Assume $\tX_{f_1}$ is influenced by $\tX_e$ and $\tX_{f_2}$ is a non-destination symbol. The comparison between these two symbols must succeed, because node $j$ cannot influence either $\tX_{f_1}$ or $\tX_{f_2}$. This places the necessary constraint on $\tX_e$.
\item\emph{$i,j\in\Wmsc_3\cup\Vmsc_{2,1}$}: Nodes $i,j$ each control at most one symbol available at the destination, so either one, in order to make it appear as if the other could be the traitor, cannot corrupt anything.
\end{itemize}

\subsection{Proof of Theorem~\ref{thm:planar} when the Cut-set Bound is $M-3$}\label{subsection:m3proof}

We now briefly sketch the proof of Theorem~\ref{thm:planar} for the case that the cut-set bound is $M-3$. The proof is far less complicated than the above proof for the $M-2$ case, but it makes use of many of the same ingredients. First note that the set of 2-to-2 nodes $i$ that cannot reach any 2-to-1 nodes must form a path. We next perform a similar edge labeling as above, defining $\phi$ and $\psi$ as in \eqref{eq:phidef}--\eqref{eq:psidef}. Properties (A) and (B) must still hold, except that edges may have null labels, and property (C) is replaced by
\begin{description}
\item[\textbf{C'}] For every 2-to-2 node that can reach at least one 2-to-1 node, at least one of its output edges must have a non-null label.
\end{description}
Internal nodes operate in the same way based on the edge labels as above, where symbols are always forwarded along edges with null labels. The decoding process is the same. Proving an analogous version of Lemma~\ref{lemma:pairwise} requires only finding a single constraint on one of $i$ or $j$. This is always possible since one is guaranteed to have a label on an output edge, unless they are both in the single path with no reachable 2-to-1 nodes, in which case they influence the same symbol reaching the destination.

Interestingly, this proof does not make use of the planarity of the graph. We may therefore conclude that for networks satisfying properties (2) and (3) in the statement of Theorem~\ref{thm:planar}, the cut-set bound is always achievable if the cut-set is strictly less than $M-2$.

\section{Looseness of the Cut-set Bound}\label{sec:loose}

So far, the only available upper bound on achievable rates has been the cut-set bound. We have conjectured that for planar graphs this bound is tight, but that still leaves open the question of whether there is a tighter upper bound for non-planar graphs. It was conjectured in \cite{KimEtal:09Allerton} that there is such a tighter bound, and here we prove this conjecture to be true. We do this in two parts. First, consider the problem that the traitor nodes in a Byzantine attack are constrained to be only from a certain subset of nodes. That is, a special subset of nodes are designated as potential traitors, and the code must guard against adversarial control of any $z$ of those nodes. We refer to this as the limited-node problem. Certainly the limited-node problem subsumes the all-node problem, since we may simply take the set of potential traitors to be all nodes. Furthermore, it subsumes the unequal-edges problem studied in \cite{KimEtal:09Allerton}, because given an instance of the unequal-edge problem, an equivalent all-node problem can be constructed as follows: create a new network with every edge replaced by a pair of edges of equal capacity with a node between them. Then limit the traitors to be only these interior nodes.

We will show in Section~\ref{subsec:limited} that the all-node problem actually subsumes the limited-node problem, and therefore also the unequal-edge problem. Then in Section \ref{subsec:boundexample} we give an example of a limited-node network for which there is an active upper bound on capacity other than the cut-set. This proves that, even for the all-node problem, the cut-set bound is not tight in general. Transforming the example in Section~\ref{subsec:boundexample} into an unequal-edge problem is not hard; this therefore confirms the conjecture in \cite{KimEtal:09Allerton}.

\subsection{Equivalence of Limited-Node and All-Node}\label{subsec:limited}

Let $(V,E)$ be a network under a limited-node Byzantine attack, where there may be at most $z$ traitors constrained to be in $U\subseteq V$, and let $C$ be its capacity. We construct a sequence of all-node problems, such that finding the capacity of these problems is enough to find that of the original limited-node problem. Let $(V^{(M)},E^{(M)})$ be a network as follows. First make $M$ copies of $(V,E)$. That is, for each $i\in V$, put $i^{(1)},\ldots,i^{(M)}$ into $V^{(M)}$, and for each edge $(i,j)\in E$, put $(i^{(1)},j^{(1)},\ldots,(i^{(M)},j^{(M)})$ into $E^{(M)}$. Then, for each $i\in U$, merge $i^{(1)},\ldots,i^{(M)}$ into a single node $i^*$, transferring all edges that were previously connected to any of $i^{(1)},\ldots,i^{(M)}$ to $i^*$. Let $C^{(M)}$ be the all-node capacity of $(V^{(M)},E^{(M)})$ with $z$ traitors. This construction is illustrated in Fig.~\ref{fig:42net}, where we show a limited-node network $(V,E)$ and the all-node network $(V^{(M)},E^{(M)})$ with $M=3$. For large $M$, the all-node problem will be such that for any $i\notin U$, the adversary has no reason to control one of the respective nodes because it commands such a small fraction of the information flow through the network. That is, we may assume that the traitors will only ever be nodes in $U$. This is stated explicitly in the following theorem.
\begin{theorem}\label{thm:limited}
For any $M$, $C^{(M)}$ is related to $C$ by
\beq \frac{1}{M}C^{(M)}\le C\le \frac{1}{M-2z}C^{(M)}.\label{eq:limitedbounds}\eeq
Moreover,
\beq C=\lim_{M\to\infty}\frac{1}{M}C^{(M)}\label{eq:limitedlimit}\eeq
and if $C^{(M)}$ can be computed to arbitrary precision for any $M$ in finite time, then so can $C$.
\end{theorem}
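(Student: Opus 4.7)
The plan is to prove the two halves of \eqref{eq:limitedbounds} by a pair of simulation arguments, and then to derive \eqref{eq:limitedlimit} and the computability statement as immediate corollaries. The guiding observation is that in $(V^{(M)}, E^{(M)})$ a traitor at a merged node $i^*$ with $i\in U$ affects all $M$ copies of $(V,E)$, but its effect on each single copy is indistinguishable from the action of a limited-node traitor at $i\in U$; whereas a traitor at an unmerged node $i^{(k)}$ with $i\notin U$ can only affect the single copy $k$.

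For $C \ge C^{(M)}/M$, the plan is to simulate any all-node code for $(V^{(M)}, E^{(M)})$ inside $(V,E)$. Given an all-node code of block length $n$ and rate $R^{(M)}$, build a limited-node code of block length $Mn$ by time-sharing each edge of $E$ into $M$ consecutive sub-blocks of $n$ symbols, the $k$-th sub-block playing the role of the corresponding edge in the $k$-th copy. Each node $i\notin U$ executes the all-node coding rules of $i^{(1)},\ldots,i^{(M)}$ in parallel, and each node $i\in U$ executes the all-node coding rule of $i^*$. Any set $T$ of $z$ limited-node traitors (which by assumption lies in $U$) produces exactly the behaviour of $z$ all-node traitors located at the corresponding $i^*$ nodes, which the all-node code is built to tolerate. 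The destination's decoder therefore succeeds, and the achieved rate is $R^{(M)}/M$.

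For $C^{(M)} \ge (M-2z)C$, the plan is to run $M$ parallel instances of a limited-node code of rate $R$ on the $M$ embedded copies of $(V,E)$ inside $(V^{(M)}, E^{(M)})$, with each merged node $i^*$ applying $i$'s coding rule independently on each copy's share of its edges. Across copies, wrap the message in an outer $(M, M-2z)$ MDS code over the limited-node code's message alphabet; such a code exists because the alphabet size $2^{nR}$ can be made as large as needed by taking $n$ large, so a Reed-Solomon construction suffices. Any all-node traitor set $T^{(M)}$ of size $z$ splits into $U$-type traitors at some merged $i^*$ nodes and stray traitors at some $i^{(k)}$ with $i\notin U$. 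Each copy sees only legitimate limited-node traitors from the $U$-type part (a set of size at most $z$ inside $U$), so every copy free of stray traitors decodes its MDS symbol correctly. At most $z$ copies contain a stray traitor, hence at most $z$ MDS symbols are corrupted; the $(M, M-2z)$ MDS outer code, of minimum distance $2z+1$, corrects them and recovers the message. The resulting rate on $(V^{(M)}, E^{(M)})$ is $(M-2z)R$.

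Combining the two reductions yields \eqref{eq:limitedbounds}. Since $(M-2z)C \le C^{(M)} \le MC$, the quantity $C^{(M)}/M$ lies in $[(1-2z/M)C,\ C]$, which proves \eqref{eq:limitedlimit}. For computability, \eqref{eq:limitedbounds} furnishes certified upper and lower bounds on $C$ for each $M$, with gap $\frac{C^{(M)}}{M-2z}-\frac{C^{(M)}}{M} \le \frac{2zC}{M-2z}$, which tends to zero; to approximate $C$ within $\epsilon$ it suffices to increase $M$ and compute $C^{(M)}$ to high enough precision until the certified interval has length below $\epsilon$. The main obstacle is the correctness of the second reduction: one must argue that a malicious $i^*$, which can jointly craft its outputs across all $M$ copies, does no more harm on any single copy $k$ than a limited-node traitor at $i$ would in a standalone execution of the inner code. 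This holds because $i^*$'s edges restricted to copy $k$ coincide exactly with the edges that $i$ would control in the $k$-th copy of $(V,E)$, so whatever attack $i^*$ mounts on copy $k$ is realisable by a traitorous $i\in U$ in the limited-node setting, to which the inner code is immune.
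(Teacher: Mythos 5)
Your proposal is correct and follows essentially the same two-step simulation argument as the paper: the first half by running the all-node code for $(V^{(M)},E^{(M)})$ verbatim on $(V,E)$ with block length multiplied by $M$ (merged nodes $i^*$ becoming single $U$-nodes), and the second half by running $M$ parallel copies of the limited-node code inside $(V^{(M)},E^{(M)})$ wrapped in an outer $(M,M-2z)$ MDS code, noting that only copies containing a stray traitor $i^{(k)}$ with $i\notin U$ can be corrupted and there are at most $z$ of these. The only cosmetic difference is that you spell out the Reed--Solomon instantiation of the MDS outer code and the quantitative gap bound used for computability, both of which the paper leaves implicit.
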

\begin{proof}
We first show that $\frac{1}{M}C^{(M)}\le C$. Take any code on $(V^{(M)},E^{(M)})$ achieving rate $R$ when any $z$ nodes may be traitors. We use this to construct a code on $(V,E)$, achieving rate $R/M$ when any $z$ nodes in $U$ may be traitors. We do this by first increasing the block-length by a factor of $M$, but maintaining the same number of messages, thereby reducing the achieved rate by a factor of $M$. Now, since each edge in $(V,E)$ corresponds to $M$ edges in $(V^{(M)},E^{(M)})$, we may place every value transmitted on an edge in the $(V^{(M)},E^{(M)})$ code to be transmitted on the equivalent edge in the $(V,E)$ code. That is, all functions executed by $i^{(1)},\ldots,i^{(M)}$ are now executed by $i$. The original code could certainly handle any $z$ traitor nodes in $U$. Hence the new code can handle any $z$ nodes in $U$, since the actions performed by these nodes have not changed from $(V^{(M)},E^{(M)})$ to $(V,E)$. Therefore, the new code on $(V,E)$ achieving rate $R/M$ for the limited-node problem.

Now we show that $C\le\frac{1}{M-2z}C^{(M)}$. Take any code on $(V,E)$ achieving rate $R$. We will construct a code on $(V^{(M)},E^{(M)})$ achieving rate $(M-2z)R$. This direction is slightly more difficult because the new code needs to handle a greater variety of traitors. The code on $(V^{(M)},E^{(M)})$ is composed of an outer code and $M$ copies of the $(V,E)$ code running in parallel. The outer code is a $(M,M-2z)$ MDS code with coded output values $w_1,\ldots,w_M$. These values form the messages for the inner codes. Since we use an MDS code, if $w_1,\ldots,w_M$ are reconstructed at the destination such that no more than $z$ are corrupted, the errors can be entirely corrected. The $j$th copy of the $(V,E)$ code is performed by $i^*$ for $i\in U$, and by $i^{(j)}$ for $i\notin U$. That is, nodes in $U$ are each involved in all $M$ copies of the code, while nodes not in $U$ are involved in only one. Because the $(V,E)$ code is assumed to defeat any attack on only nodes in $U$, if for some $j$, no nodes $i^{(j)}$ for $i\notin U$ are traitors, then the message $w_j$ will be recovered correctly at the destination. Therefore, one of the $w_j$ could be corrupted only if $i^{(j)}$ is a traitor for some $i\notin U$. Since there are at most $z$ traitors, at most of the $w_1,\ldots,w_M$ will be corrupted, so the outer code corrects the errors.

From \eqref{eq:limitedbounds}, \eqref{eq:limitedlimit} is immediate. We can easily identify $M$ large enough to compute $C$ to any desired precision.
\end{proof}

The significance of Theorem~\ref{thm:limited} is that if we could calculate the capacity of any all-node problem, we could use \eqref{eq:limitedbounds} to calculate the capacity of any limited-node problem. Furthermore, it is easy to see that for large $M$ the cut-set bound of $(V^{(M)},E^{(M)})$ is simply $M$ times the cut-set bound of $(V,E)$. Hence a limited-node example with capacity less than the cut-set bound---such as the one in Section~\ref{subsec:boundexample}---leads directly to an all-node example with capacity less than the cut-set bound.

\subsection{Example with Capacity Less than Cut-Set}\label{subsec:boundexample}

\begin{figure}
\centerline{\begin{psfrags}\small
\psfrag{s}[c]{$S$}
\psfrag{1}[c]{$1$}
\psfrag{2}[c]{$2$}
\psfrag{3}[c]{$3$}
\psfrag{4}[c]{$4$}
\psfrag{5}[c]{$5$}
\psfrag{6}[c]{$6$}
\psfrag{7}[c]{$7$}
\psfrag{8}[c]{$8$}
\psfrag{9}[c]{$9$}
\psfrag{10}[c]{$10$}
\psfrag{d}[c]{$D$}
\psfrag{15}[c]{$$}
\psfrag{26}[c]{$$}
\psfrag{37}[c]{$$}
\psfrag{48}[c]{$$}
\psfrag{9d}[c]{$$}
\psfrag{10d}[c]{$$}
\includegraphics[scale=.6]{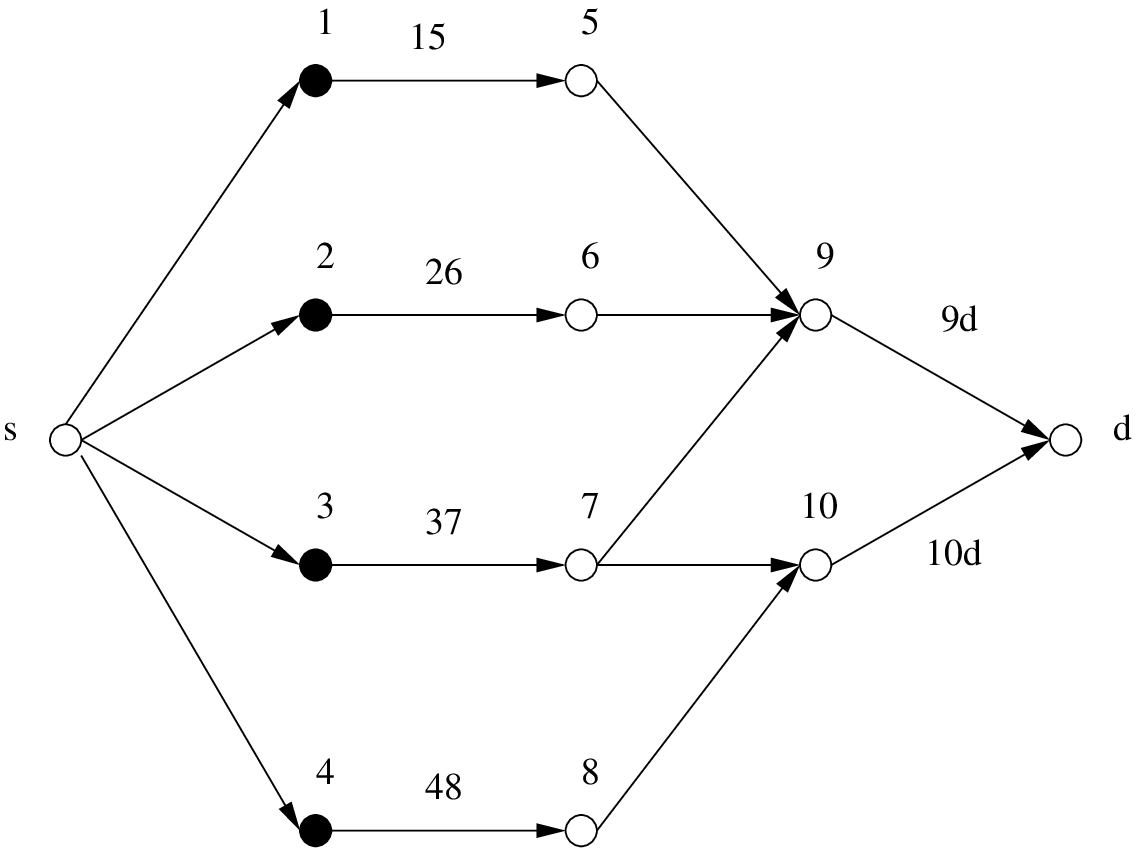}
\end{psfrags}
\hspace{.5in}
\includegraphics[scale=.6]{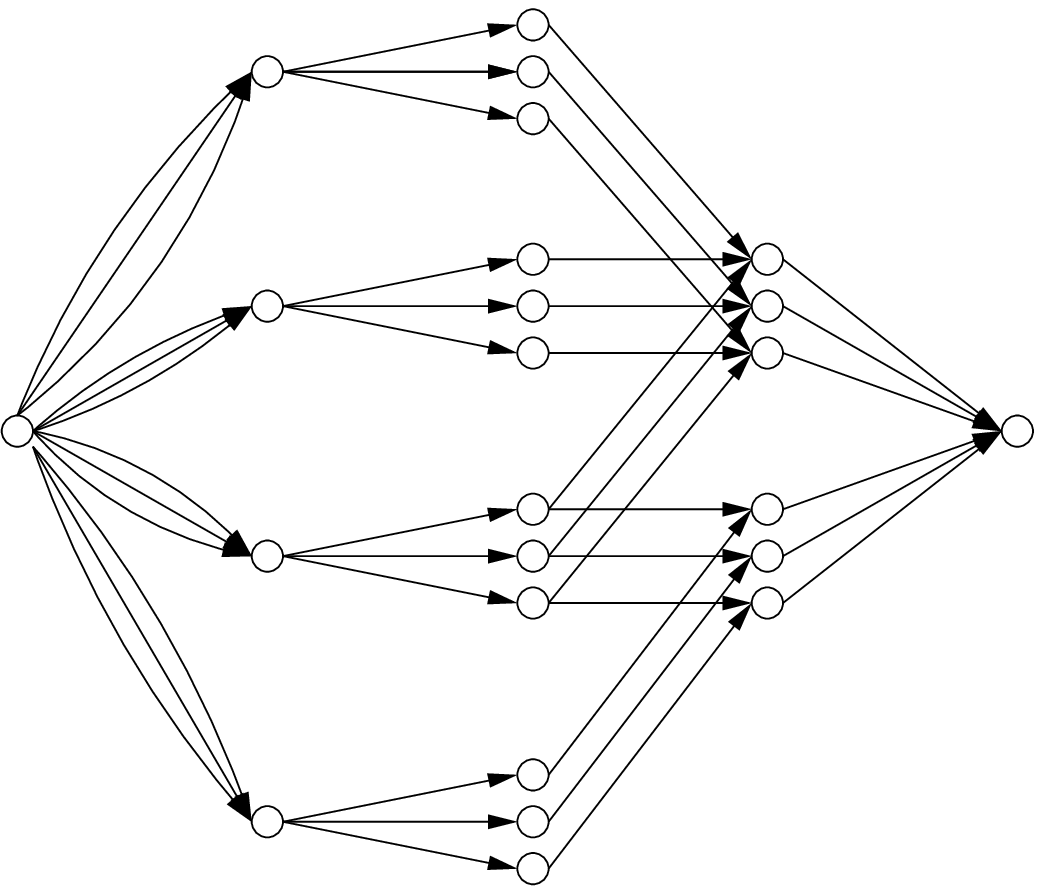}
}
\caption{A network with capacity strictly less than the cut-set bound. The limited-node network is shown on the left, and the equivalent all-node problem with 3 copies is shown on the right.}
\label{fig:42net}
\end{figure}

Consider the network shown in Figure~\ref{fig:42net}. There is at most one traitor, but it may only be one of nodes 1--4. The cut-set bound is easily seen to be 2, but in fact the capacity is no more than 1.5.

Suppose we are given a code achieving rate $R$. We show that $R\le 1.5$. For $i=1,2,3,4$, let $X_i$ be the random variable representing the value on the output edge of node $i$. Let $Y$ be the value on edge $(9,D)$ and let $Z$ be the value on $(10,D)$. Let $p$ be the honest distribution on these variables, and define the following alternative distributions:
\begin{align}
q_3&= p(x_1x_2x_4)p(x_3)p(y|x_1x_2x_3)p(z|x_3x_4),\\
q_4&= p(x_1x_2x_3)p(x_4)p(y|x_1x_2x_3)p(z|x_3x_4).
\end{align}
We may write
\beq R\le I_{q_3}(X_1X_2X_4;YZ)\label{eq:q3}\eeq
because, if node 3 is the traitor, it may generate a completely independent version of $X_3$ and send it along edge $(37)$, resulting in the distribution $q_3$. In that case, assuming the destination can decode properly, information about the message must get through from the honest edges at the start of the network, $X_1,X_2,X_4$, to what is received at the destination, $Y,Z$. From \eqref{eq:q3}, we may write
\begin{align}
R&\le I_{q_3}(X_1X_2X_4;Z)+I_{q_3}(X_1X_2X_4;Y|Z)\label{eq:q3.1}\\
&\le I_{q_3}(X_4;Z)+I(X_1X_2;Z|X_4)+1\label{eq:q3.2}\\
&=I_{q_3}(X_4;Z)+1\label{eq:q3.3}
\end{align}
where in \eqref{eq:q3.2} we have used that the capacity of $(9,D)$ is 1, and in \eqref{eq:q3.3} that $X_1X_2-X_4-Z$ is a Markov chain according to $q_3$. Using a similar argument in which node 4 is the traitor and it acts in a way to produce $q_4$, we may write
\beq R\le I_{q_4}(X_3;Z)+1.\label{eq:q4}\eeq
Note that
\beq q_3(x_3x_4z)=q_4(x_3x_4z).\eeq
In particular, the mutual informations in \eqref{eq:q3.3} and \eqref{eq:q4} can both be written with respect to the same distribution. Therefore,
\begin{align}
2R&\le I_{q_3}(X_4;Z)+I_{q_3}(X_3;Z)+2\label{eq:qr.1}\\
&=I_{q_3}(X_3X_4;Z)+I_{q_3}(X_3;X_4)-I_{q_3}(X_3;X_4|Z)+2\label{eq:qr.2}\\
&\le I_{q_3}(X_3X_4;Z)+2\label{eq:qr.3}\\
&\le 3\label{eq:qr.4}
\end{align}
where \eqref{eq:qr.3} follows from the positivity of conditional mutual information and that $X_3,X_4$ are independent according to $q_3$, and \eqref{eq:qr.4} follows because the capacity of $(10,D)$ is 1. Therefore, $R\le 1.5$.

Observe that all inequalities used in this upper bound were so-called Shannon-type inequalities. For the non-Byzantine problem, there is a straightforward procedure to write down all the Shannon-type inequalities relevant to a particular network coding problem, which in principle can be used to find an upper bound. This upper bound is more general than any cut-set upper bound, and in some multi-source problems it has been shown to be tighter than any cut-set bound. This example illustrates that a similar phenomenon occurs in the Byzantine problem even for a single source and single destination. As the Byzantine problem seems to have much in common with the multi-source non-Byzantine problem, it would be worthwhile to formulate the tightest possible upper bound using only Shannon-type inequalities. However, it is yet unclear what the ``complete'' list of Shannon type inequalities would be for the Byzantine problem. This example certainly demonstrates one method of finding them, but whether there are fundamentally different methods to find inequalities that could still be called Shannon-type, or even how to compile all inequalities using this method, is unclear. Moreover, it has been shown in the non-Byzantine problem that there can be active non-Shannon-type inequalities. It is therefore conceivable that non-Shannon-type inequalities could be active even for a single source under Byzantine attack.

\section{Conclusion}\label{sec:conclusion}

The main contribution of this paper has been to introduce the theory of Polytope Codes. As far as we know, they are the best known coding strategy to defeat generalized Byzantine attacks on network coding. However, it remains difficult to calculate the best possible rate they can achieve for a given network. We have proved that they achieve the cut-set bound, and hence the capacity, for a class of planar graphs, and we conjecture that this holds for all planar graphs. One would obviously hope to find the capacity of all networks, including non-planar ones. We have shown that achieving the cut-set bound is not always possible, meaning there remains significant work to do on upper bounds as well as achievable schemes. Whether Polytope Codes can achieve capacity on all networks remains an important open question.

\appendices

\section{Tighter Cut-set Upper bound}\label{appendix:cutset}

\begin{theorem}
Consider a cut $A\subseteq V$ with $S\in A$ and $D\notin A$. Let $E_A$ be the set of edges that cross the cut. For two not necessarily disjoint sets of possible traitors $T_1,T_2$, let $E_{1}$ and $E_{2}$ be the subset of edges in $E_A$ that originate at nodes in $T_1$ and $T_2$ respectively. Let $\tilde{E}$ be the set of edges in $E_1\cap E_2$ in addition to all edges $e\in E_1\cup E_2$ for which there is no path that flows through $e$ followed by any edge in $E_A\setminus E_1\setminus E_2$. The following upper bound holds on the capacity of the network:
\beq C\le \sum_{e\in E_A\setminus \tilde{E}} c_e.\label{eq:cutsettight}\eeq
\end{theorem}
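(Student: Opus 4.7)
The plan is to follow the structure of the proof of Theorem~\ref{thm:cutset}, refining the adversary's strategy so that the confusion argument survives the presence of backward cut edges. First I would assume for contradiction a code of rate $R>\sum_{e\in E_A\setminus\tilde E}c_e$ and apply the pigeonhole principle to the honest encoding map $w\mapsto X_{E_A\setminus\tilde E}(w)$, obtaining distinct messages $w_1\ne w_2$ whose honest values agree on all edges of $E_A\setminus\tilde E$. Here $X_F(w)$ denotes the vector of honest values on $F$ under message $w$, defined exactly as in the proof of Theorem~\ref{thm:cutset}.

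Next I would set up two indistinguishable scenarios. In scenario $a$ the true message is $w_1$ and the traitors are $T_1$; on each output edge $e$ of a $T_1$ node the adversary places the confusing value $X_e(w_2)$ if $e\in E_1\cap\tilde E$ and the honest value $X_e(w_1)$ otherwise. Scenario $b$ is symmetric, with the roles of $(w_1,T_1,E_1)$ and $(w_2,T_2,E_2)$ swapped. All adversarial deviations from honest behavior are confined to $\tilde E$, and by the very definition of $\tilde E$ no directed path that traverses a $\tilde E$-edge later traverses an edge in $\bar E=E_A\setminus E_1\setminus E_2$. A topological-order induction on the edges then yields $Y_{\bar E}^{(a)}=X_{\bar E}(w_1)$ and $Y_{\bar E}^{(b)}=X_{\bar E}(w_2)$, and these agree by the pigeonhole step. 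The same argument extends to the remaining edges of $E_A\setminus\tilde E$, namely those in $(E_1\cup E_2)\setminus\tilde E$: a directed path from a $\tilde E\cap E_j$ edge to such an $e$, concatenated with the path from $e$ to an edge in $\bar E$ guaranteed by $e\notin\tilde E$, would produce a forbidden $\tilde E\to\bar E$ path, so corruptions do not touch $e$ either, and the pigeonhole again forces agreement between the two scenarios on all of $E_A\setminus\tilde E$.

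The delicate step will be closing the argument on the values of the $\tilde E$ edges themselves, because the destination's actual view $Y_{\Emsc_\inn(D)}$ also depends on them through the honest subnetwork inside $A^c$. Edges in $E_1\cap E_2$ are under adversarial control in both scenarios and can be made to match by a single coordinated choice of values. For edges in $(E_1\triangle E_2)\cap\tilde E$, only one scenario places the value directly while in the other the value is honestly computed from possibly corrupted inputs; I expect to use the adversary's remaining freedom on its non-cut-crossing output edges inside $A^c$ to cancel any resulting mismatch before it reaches $D$, so that $Y_{\Emsc_\inn(D)}^{(a)}=Y_{\Emsc_\inn(D)}^{(b)}$. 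This propagation bookkeeping inside $A^c$---tracking how honest computation composes with adversarial choices on both sides of the cut---is the main technical obstacle and the step that goes beyond the cleaner no-backward-edges proof of Theorem~\ref{thm:cutset}.
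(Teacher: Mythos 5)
Your overall template---contradiction, pigeonhole, two traitor scenarios, topological propagation---matches the paper's, and your concatenation argument showing that no $\tilde E$-corruption reaches $(E_1\cup E_2)\setminus\tilde E$ is exactly right. But you treat $E_1\cap E_2$ incorrectly. The theorem permits $T_1\cap T_2\ne\emptyset$, and $\tilde E$ contains $E_1\cap E_2$ \emph{unconditionally}, even when such an edge does have a path to $\bar E$. So your claim that ``no directed path that traverses a $\tilde E$-edge later traverses an edge in $\bar E$'' is false on $E_1\cap E_2$, and your induction giving $Y_{\bar E}^{(a)}=X_{\bar E}(w_1)$ does not go through. Worse, scenario $a$ assigns $X_e(w_2)$ and scenario $b$ assigns $X_e(w_1)$ on the very same $E_1\cap E_2$ edges; these generically differ, so the two views are distinguishable before you ever reach your ``delicate step.'' The paper's remedy is to pick one common value $z$ for $E_1\cap E_2$, have both scenarios place that $z$ there, and only then run the pigeonhole on the map $w\mapsto X_{E_A\setminus\tilde E}$ \emph{with $z$ fixed}; this map is well defined because corruptions on $\tilde E\setminus(E_1\cap E_2)$ never reach $E_A\setminus\tilde E$.

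Second, closing the argument on $\tilde E\setminus(E_1\cap E_2)$ cannot be done by ``using the adversary's remaining freedom on its non-cut-crossing output edges inside $A^c$''---the traitors are heads of the forward cut edges in $E_1,E_2$ and so sit in $A$; they have no output edges inside $A^c$, and no in-$A^c$ cancellation is available. The real obstruction is cross-contamination: a corrupted value that $T_1$ places on an $\tilde E$-edge can flow into $A^c$, return to $A$ via a backward edge, and change what an honest $T_2$ node would emit on its own $\tilde E$-edge (and symmetrically), so a one-shot assignment of $X_e(w_2)$, $X_e(w_1)$ need not make the cut values coincide. The paper resolves this by choosing a topological coding order on $\tilde E\setminus(E_1\cap E_2)$, grouping it into alternating blocks from $E_1\setminus E_2$ and $E_2\setminus E_1$, and iteratively overwriting one block at a time in the current scenario with the value it takes in the other, letting honest propagation update the downstream edges after each overwrite. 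Each overwrite only affects strictly later edges in the order, so the bootstrap terminates with the two scenarios agreeing on all of $E_A$. This alternating replacement, not $A^c$-side cancellation, is the missing mechanism.
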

\begin{proof}
Suppose \eqref{eq:cutset} were not true for some $A$, $T_1$, and $T_2$. Then there would exist a code achieving a rate $R$ such that
\beq R >\sum_{e\in E_A\setminus \tilde{E}} c_e.\label{eq:contra}\eeq
We will consider two possibilities, one when $T_1$ are the traitors and they alter the values on $E_1\cap\tilde{E}$, and one when $T_2$ are the traitors and they alter the values on $E_2\cap\tilde{E}$. We will show that by \eqref{eq:contra}, it is possible for the traitors to act in such a way that even though the messages at the source are different, all values sent across the cut are the same; therefore the destination will not be able to distinguish all messages. Note that traitors in $T_1$ or $T_2$ will only corrupt values on edges in $\tilde{E}$; that is, those edges controlled by either set of traitors, or those that could not influence $E_A\setminus E_1\setminus E_2$.

Fix a value $z$ representing one possible set of values that may be placed on the edges $E_1\cap E_2$. By definition, no edges in $\tilde{E}\setminus(E_1\cap E_2)$ are upstream of edges in $E_A\setminus\tilde{E}$. Since the traitors act honestly on all edges not in $\tilde{E}$, given $z$, the values on $E_A\setminus\tilde{E}$ are a function of the message, so by \eqref{eq:contra}, there exist two messages $w_a$ and $w_b$ that cannot be distinguished just from $E_A\setminus\tilde{E}$. Call $y$ the set of values on these edges under $w_a$ or $w_b$.

Choose a coding order on the edges in $\tilde{E}\setminus(E_1\cap E_2)$ written as
\beq (l_1,l_2,\ldots,l_K)\label{eq:codeorder}\eeq
where $K=|\tilde{E}\setminus(E_1\cap E_2)|$, such that if there is a path through $l_i$ followed by $l_j$, then $i<j$. Observe that $\tilde{E}\setminus(E_1\cap E_2)$ can be divided into $\tilde{E}\setminus E_2$ and $\tilde{E}\setminus E_1$, and therefore the order in \eqref{eq:codeorder} must alternate between edges in the two sets. Maintaining the order in \eqref{eq:codeorder}, we may group the edges by the two sets, rewriting \eqref{eq:codeorder} as
\beq (\Umsc_1,\Vmsc_1,\Umsc_2,\Vmsc_2,\ldots,\Umsc_{K'},\Vmsc_{K'})\eeq
where $\Umsc_i\subset \tilde{E}\setminus E_2$ and $\Vmsc_i\subset \tilde{E}\setminus E_2$, and $K'$ is the number of times the edges in \eqref{eq:codeorder} alternate between the two sets. Note that $\Umsc_1$ or $\Vmsc_{K'}$ may be empty.

We now construct the manner in which the two possible sets of traitors, $T_1$ or $T_2$, may cause $w_a$ and $w_b$ to become indistinguishable. Suppose $w_a$ is the message, $T_1$ are the traitors, they place $z$ on $E_1\cap E_2$, but behave honestly on all other edges. We denote the values on all edges crossing the cut as
\beq(z,y,u_1^{(1)},v_1^{(1)},u_2^{(1)},v_2^{(1)},\ldots,u_{K'}^{(1)},v_{K'}^{(1)})\label{eq:list1}\eeq
where $z$ represents the values on $E_1\cap E_2$, $y$ the values on $E_A\setminus\tilde{E}$, and $u_i^{(1)}$ and $v_i^{(1)}$ are the values placed on $\Umsc_i$ and $\Vmsc_i$ respectively. Note that only the $u_i$ values are directly adjustable by the traitors, but they may affect elements later in the sequence.

Alternatively, if $w_a$ is the message, $T_2$ are the traitors, and they place $z$ on $E_1\cap E_2$, but behave honestly elsewhere, the values across the cut are denoted
\beq(z,y,u_1^{(2)},v_1^{(2)},u_2^{(2)},v_2^{(2)},\ldots,v_{K'}^{(2)}).\label{eq:list2}\eeq
Here, only the $v_i$ values may be changed directly by the traitors.

In the two scenarios, the traitors may alter their output values so that \eqref{eq:list1} and \eqref{eq:list2} are transformed to become identical. This can be done as follows. In \eqref{eq:list1}, the traitors may replace $u_1^{(1)}$ with $u_1^{(2)}$. Downstream edges are either controlled by honest nodes, or they are controlled by traitors that continue, for now, to behave honestly. Hence, this change may affect the later edges in the sequence, but they do so in a way determined only by the code. This results in a set of values denoted by
\beq(z,y,u_1^{(2)},v_1^{(3)},u_2^{(3)},v_2^{(3)},\ldots,u_{K'}^{(3)},v_{K'}^{(3)}).\label{eq:list3}\eeq
In \eqref{eq:list2}, the traitors may now replace $v_1^{(2)}$ with $v_1^{(3)}$, resulting in
\beq(z,y,u_1^{(2)},v_1^{(3)},u_2^{(4)},v_2^{(4)},\ldots,v_{K'}^{(4)}).\eeq
We may now return to \eqref{eq:list3} and replace $u_2^{(3)}$ with $u_2^{(4)}$, further changing downstream values. Continuing this process will cause the two sequences to become identical, thereby making $w_a$ and $w_b$ indistinguishable at the destination.
\end{proof}

\section{Proof of Bound on Linear Capacity for the Cockroach Network}\label{appendix:linear}

We show that no linear code for the Cockroach Network, shown in Figure~\ref{fig:cock}, can achieve a rate higher than 4/3. Fix any linear code. For any link $(i,j)$, let $X_{i,j}$ be the value placed on this link. For every node $i$, let $X_i$ be the set of messages on all links out of node $i$, and $Y_i$ be the set of messages on all links into node $i$. Let $G_{X_i\to Y_j}$ be the linear transformation from $X_i$ to $Y_j$, assuming all nodes behave honestly. Observe that
\beq Y_D=G_{X_S\to Y_D}X_S(w)+\sum_i G_{X_i\to Y_D}e_i\eeq
where $e_i$ represents the difference between what a traitor places on its outgoing links and what it would have placed on those links if it were honest. Only one node is a traitor, so at most one of the $e_i$ is nonzero. Note also that the output values of the source $X_S$ is a function of the message $w$. We claim that for any achievable rate $R$,
\beq R\le \frac{1}{n}\left[\rank(G_{X_S\to Y_D})-\max_{i,j}\rank(G_{X_iX_j\to Y_D})\right]\label{eq:rankbound}\eeq
where $n$ is the block length used by this code. To show this, first note that for any pair of nodes $i,j$ there exist $K,H_1,H_2$ such that
\beq G_{X_S\to Y_D}=K+G_{X_i\to Y_D}H_1+G_{X_j\to Y_D}H_2\label{eq:gdecomp}\eeq
and where
\beq\rank(K)=\rank(G_{X_S\to Y_D})-\rank(G_{X_iX_j\to Y_D}).\eeq
That is, the first term on the right hand side of \eqref{eq:gdecomp} represents the part of the transformation from $X_S$ to $Y_D$ that cannot be influenced by $X_i$ or $X_j$. Consider the case that $\rank(K)<R$. Then there must be two messages $w_1,w_2$ such that $KX_S(w_1)=KX_S(w_2)$. If the message is $w_1$, node $i$ may be the traitor and set
\beq e_i=H_1(X_S(w_2)-X_S(w_1)).\eeq
Alternatively, if the message is $w_2$, node $j$ may be the traitor and set
\beq e_j=H_2(X_S(w_1)-X_S(w_2)).\eeq
In either case, the value received at the destination is
\begin{multline*}
Y_D=KX_S(w_1)+G_{X_i\to Y_D}H_1X_S(w_2)\\+G_{X_j\to Y_D}H_2X_S(w_1).\end{multline*}
Therefore, these two cases are indistinguishable to the destination, so it must make an error for at least one of them. This proves \eqref{eq:rankbound}.

Now we return to the specific case of the Cockroach Network. Observe that the $X_{4,D}$ is a linear combination of $X_{1,4}$ and $X_{2,4}$. Let $k_1$ be the number of dimensions of $X_{4,D}$ that depend only on $X_{1,4}$ and are independent of $X_{2,4}$. Let $k_2$ be the number of dimensions of $X_{4,D}$ that depend only on $X_{2,4}$, and let $k_3$ be the number of dimensions that depend on both $X_{1,4}$ and $X_{2,4}$. Certainly $k_1+k_2+k_3\le n$. Similarly, let $l_1,l_2,l_3$ be the number of dimensions of $X_{5,D}$ that depend only on $X_{2,5}$, that depend only on $X_{3,5}$, and that depend on both respectively. Finally, let $m_1$ and $m_2$ be the number of dimensions of $X_{1,D}$ and $X_{3,D}$ respectively.

We may write the following:
\begin{align*}\rank(G_{X_S\to Y_4})-\rank(G_{X_2,X_3\to Y_4})&\le m_1+k_1,\\
\rank(G_{X_S\to Y_4})-\rank(G_{X_1,X_3\to Y_4})&\le k_3+l_1,\\
\rank(G_{X_S\to Y_4})-\rank(G_{X_1,X_2\to Y_4})&\le l_3+m_2.
\end{align*}
Therefore, using \eqref{eq:rankbound}, any achievable rate $R$ is bounded by
\beq R\le \frac{1}{n}\min\{m_1+k_1,k_3+l_1,l_3+m_2\}\eeq
subject to
\begin{align}
k_1+k_2+k_3&\le n,\\
l_1+l_2+l_3&\le n,\\
m_1\le n,\\
m_2\le n.\end{align}
It is not hard to show that this implies $R\le 4/3$.

% Generated by IEEEtran.bst, version: 1.13 (2008/09/30)

\end{document}